\author{Shuang Gao, Peter E. Caines, and Minyi Huang
\thanks{*This work is supported in part by NSERC (Canada) Grant RGPIN-2019-05336, the U.S. ARL and ARO Grant W911NF1910110, and the U.S. AFOSR Grant FA9550-19-1-0138 (SG PEC) and NSERC (Canada MH).}
\thanks{Shuang Gao and Peter E. Caines are with the Department of Electrical and Computer Engineering, McGill University,
  Montreal, QC, Canada \hspace{1cm}
        {\tt\small    $\{$sgao,peterc$\}$@cim.mcgill.ca}.
Minyi Huang is with the School of Mathematics and Statistics, Carleton University, Ottawa, ON, Canada {\tt\small mhuang@math.carleton.ca}.
}%
\thanks{*A preliminary version \cite{ShuangPeterMinyiCDC21} of this work will be presented at the IEEE Conference on Decision and Control, Austin, Texas, USA, December, 2021.}
}
\newcommand{\FA}{\mathbf{A}}   
\newcommand{\FB}{\mathbf{B}}   
\newcommand{\Fu}{\mathbf{u}}	   
\newcommand{\Fx}{\mathbf{x}}	   
\newcommand{\Fv}{\mathbf{v}}	   
\newcommand{\Fy}{\mathbf{y}}	   
\newcommand{\Fw}{\mathbf{w}}	   
\newcommand{\FV}{\mathbf{V}}
\newcommand{\FU}{\mathbf{U}}	   
\newcommand{\Ff}{\mathbf{f}}	   
\newcommand{\FN}{\mathbf{N}}	   
\newcommand{\Fs}{\mathbf{s}}	   
\newcommand{\Fz}{\mathbf{z}}	   
\newcommand{\Fe}{\mathbf{e}}	   
\newcommand{\FM}{\mathbf{M}}
\newcommand{\SM}{\mathbf{M^{[N]}}}
\newcommand{\Sx}{\mathbf{x^{[N]}}}
\newcommand{\Ss}{\mathbf{s^{[N]}}}
\newcommand{\Sz}{\mathbf{z^{[N]}}}
\newcommand{\Su}{\mathbf{u^{[N]}}}
\newcommand{\BI}{\mathbb{I}}	   
\newcommand{\BT}{\mathbb{T}}
\newcommand{\BA}{\mathbb{A}}
\newcommand{\BB}{\mathbb{B}}
\newcommand{\BR}{\mathds{R}}  
\newcommand{\BP}{\mathbb{P}}
\newcommand{\HS}{\mathcal{H}}
\newcommand{\LS}{\mathcal{L}}
\newcommand{\Chi}{\mathds{1}}
\newcommand{\SBS}{\mathcal{S}}
\newcommand{\ESZ}{\mathcal{W}_0}
\newcommand{\ESO}{\mathcal{W}_1}
\newcommand{\ESC}{\mathcal{W}_c}
\newcommand{\N}{N}
\newcommand*\TRANS{{\mathpalette\doTRANS\empty}}
\newcommand*\doTRANS[2]{\raisebox{\depth}{$\m@th#1\intercal$}}
\newcommand\MATRIX[1]{\begin{bmatrix} #1 \end{bmatrix}}
\newcommand\PMATRIX[1]{\begin{pmatrix} #1 \end{pmatrix}}
\newtheorem{procedure}{Procedure}
\begin{document}
%
\title{LQG Graphon Mean Field Games:\\ Analysis via Graphon Invariant Subspaces}
%
%
%

\maketitle 
\begin{abstract}
This paper studies approximate solutions to large-scale linear quadratic  stochastic games with homogeneous nodal dynamics parameters and heterogeneous network couplings within the graphon mean field game framework in \cite{PeterMinyiCDC18GMFG,PeterMinyiCDC19GMFG, PeterMinyiSIAM21GMFG}.
	A graphon time-varying dynamical system model is first formulated to study the finite and then limit problems of linear quadratic Gaussian graphon mean field games (LQG-GMFG). 
	 The Nash equilibrium of the limit problem is then characterized by two coupled graphon time-varying dynamical systems. Sufficient conditions are established for the existence of a unique solution to the limit LQG-GMFG problem. 
	For the computation of LQG-GMFG solutions two methods are established and employed where one is based on fixed point iterations and the other on a decoupling operator Riccati equation;
furthermore, two corresponding sets of solutions are established based on spectral decompositions.
%
%
	 Finally, a set of numerical simulations on networks associated with different types of graphons are presented. 
\end{abstract}

\begin{IEEEkeywords}
Large-scale networks, mean field games, complex networks,  graphon control, infinite dimensional systems.
\end{IEEEkeywords}

\section{Introduction}
Many applications such as market networks,  large-scale social networks, advertising networks, communication networks and smart grids involve  strategic decisions over a large number of agents coupled via large-scale heterogeneous network structures.  The large cardinalities of the underlying networks and the complexity of the underlying network couplings in dynamics and decision strategies make such problems challenging or even intractable by standard methods. To characterize large graphs and study the convergence of dense graph sequences to their limits, graphon theory is established in the combinatorics and computer science communities \cite{borgs2008convergent,borgs2012convergent,lovasz2012large}. It has been applied to study dynamical systems  \cite{medvedev2014nonlinear,medvedev2014nonlinear2,chiba2019mean,bayraktar2020graphon}, network centrality \cite{avella2018centrality},  random walks \cite{petit2019random}, signal processing \cite{ruiz2019graphon},  graph neural networks \cite{ruiz2020graph}, epidemic models \cite{ShuangPeterCDC19W1,vizuete2020graphon},  Graphon 
 Control of very large-scale networks \cite{ShuangPeterCDC17,ShuangPeterTAC18,ShuangPeterCDC18,ShuangPeterCDC19W1,ShuangPeterCDC19W2,ShuangPeterTCNS20}, and 
static and dynamic game problems on graphons \cite{parise2018graphon,carmona2019stochastic,PeterMinyiCDC18GMFG,PeterMinyiCDC19GMFG, PeterMinyiSIAM21GMFG, ShuangRinelPeterCIS20,vasal2020sequential}.  
In order to study strategic decision problems for finite populations on finite networks,  game theoretic models with various interpretations of the underlying networks have been developed by various authors (see  for instance \cite{jackson2015games,han2012game,bacsar1998dynamic,lewis2013cooperative}).   
Passing to large population problems on large networks,
{Graphon Mean Field Game} (GMFG) theory was proposed and developed in \cite{PeterMinyiCDC18GMFG,PeterMinyiCDC19GMFG,PeterMinyiSIAM21GMFG}, which generalizes the classical mean field game theory in the sense that each node may be influenced by a different local mean field. 
 Under suitable technical conditions, Nash equilibria and $\varepsilon$-Nash properties have been established in  \cite{PeterMinyiCDC18GMFG,PeterMinyiCDC19GMFG,PeterMinyiSIAM21GMFG}. %
Mean field game problems with non-uniform cost couplings were studied in an earlier paper \cite{huang2010nce}, and mean field game problems on graphs with different interpretations of the underlying graphs have also been treated in \cite{gueant2015existence, camilli2016stationary,delarue2017mean}. 
In \cite{gueant2015existence,camilli2016stationary}, the graph represents physical constraints on the state space of the mean field game problems. 
 In \cite{delarue2017mean} linear quadratic mean field games over Erd\"os-R\'enyi graphs are studied where the associated asymptotic game is a classical mean field game.
Recent works on mean field game problems on networks include \cite{lacker2020case,vasal2020sequential}.

There are two classes of closely related mean field game problems on networks in the papers above depending on the definitions of nodes:
(i) networks of mean field (or measure) couplings where each node on the network represents a population 
\cite{PeterMinyiCDC18GMFG,PeterMinyiCDC19GMFG, PeterMinyiSIAM21GMFG};
(ii) networks of individual state couplings where each node represents an agent (see for instance \cite{ShuangRinelPeterCIS20,huang2010nce,delarue2017mean,lacker2020case,vasal2020sequential}). 
In the current paper, each node represents a population of homogenous agents.  

{The LQG-GMFG strategy in this current paper is as follows: first we identify the limit system when the size of the local nodal population and the size of the graph go to infinity;  the Nash equilibrium for the limit system is then characterized by two coupled (global) graphon dynamical system equations; finally,  each agent can then identify an approximated Nash strategy for the original LQG dynamic games on networks following the Nash equilibrium for the limit system. }

The main contributions of this paper include:
\begin{itemize}
    \item the characterization of the solution to the limit LQG-GMFG problem by two coupled (global) graphon time-varying dynamical systems;
    \item sufficient conditions on the existence of a unique solution to the limit LQG-GMFG problem;
    \item   two spectral-based solution methods for solving the limit LQG-GMFG problems, one based on fixed point iterations and the other based on a decoupling operator Riccati equation.
\end{itemize}
\vspace{3pt}
\emph{Notation:}
$\BR$ denotes the set of real numbers. Bold face letters (e.g. $\FA$, $\FB$, $\Fu$) are used to represent graphons, compact operators and functions. Blackboard bold letters (e.g. $\BA$, $\BB$) are used to denote  linear operators which are not necessarily compact. $\BA^\TRANS$  denotes the adjoint operator of $\BA$.
$\ESC$ denotes the set of all symmetric bounded measurable functions $\mathbf{W}:[0,1]^2\rightarrow [-c,c]$ with $c>0$;  $\ESZ$ denotes the set of all symmetric measurable functions $\mathbf{W}:[0,1]^2\rightarrow [0,1]$. 
For a Hilbert space $\mathcal{H}$,  let $\mathcal{L}(\mathcal{H})$ denote the Banach algebra of bounded linear operators from $\mathcal{H}$ to $\mathcal{H}$. $\mathcal{L}(\mathcal{H})$ endowed with the uniform operator topology is denoted by $\mathcal{L}_u(\mathcal{H})$. $C([0,T];\mathcal{X})$ denotes the set of continuous functions from $[0,T]$ to a Banach space $\mathcal{X}$.  Let $\oplus$ denote direct sum. Let $\otimes$ denote matrix Kronecker product. 
%
For any matrix $Q \in \BR^{n\times n}$, $Q\geq 0$ (resp. $Q> 0$) means $Q^\TRANS = Q$ and $x^\TRANS Q x\geq 0$ (resp. $x^\TRANS Q x> 0$) for all $x\in \BR^n$. 
For $x\in \BR^n$,  $Q\in \BR^{n\times n}$ and $Q \geq 0$, let $\|x\|_Q^2 \triangleq x^\TRANS Q x$. 
Let 
$
	(L^2[0,1])^n \triangleq \underbrace{L^2[0,1]\times \cdots\times L^2[0,1]}_n. $
The inner product in $(L^2[0,1])^n$ is defined as follows: for $\Fv, \Fu \in (L^2[0,1])^n$,
$\langle\Fu ,\Fv \rangle \triangleq  
	 \sum_{i=1}^n \int_{[0,1]} \Fv_i(\alpha) \Fu_i(\alpha) d\alpha =\int_{[0,1]} \langle \Fv(\alpha), \Fu(\alpha) \rangle_{\BR^n} d\alpha$
 where $\Fu_i(\cdot) \in L^2[0,1]$ with $i\in \{1,\ldots,n\}$ denotes the $i$th component of $\Fu$ and $\Fu(\alpha) \in \BR^n$ denotes the vector associated with index $\alpha \in [0,1]$.
{The space $(L^2[0,1])^n$ with the above inner product is a Hilbert space} with the corresponding norm 
$
\|\Fv\|_{(L^2[0,1])^n} \triangleq \left(\int_{[0,1]} \|\Fv(\alpha)\|_{\BR^n}^2 d\alpha\right)^\frac12 .
$
We use 
$L^2([0,T];(L^2[0,1])^n)$ to denote the Hilbert space of equivalence classes of strongly measurable (in the B\"ochner sense \cite[p.103]{showalter2013monotone}) mappings from $[0,T]$ to $(L^2[0,1])^n$ that are integrable with the norm 
	$\| \Fx \|_{L^2([0,T];(L^2[0,1])^n)} = \Big(\int_0^T \|\Fx(t)\|^2_{(L^2[0,1])^n} dt\Big)^{\frac12}.$
%
The function $\mathbf{1}\in L^2[0,1]$ is defined as follows: for all $\alpha \in [0,1]$, $\mathbf{1}(\alpha)=1$. For any vector $v \in \BR^n$,  $v\mathbf{1}$ denotes the function in $(L^2[0,1])^n$ such that  $(v \mathbf{1})(\alpha)=v$ for all $\alpha \in [0,1]$. $\mathbf{1}_n$ denotes the $n$-dimensional vector of ones and $I_{_{\N}}$ denotes the identity matrix of dimension $\N \times \N$.
For any two functions $f$ and $g$ defined on subsets of $\BR$, $f=O(g)$ means that there exist a positive real constant $c$ and a number $x_0$ such that $|f(x)|\leq c g(x)$ holds for all $x\geq x_0$.
%
\section{Preliminaries}
\subsection{Graphs, Graphons and Graphon Operators}
 A graph
 $G=(V,E)$ is specified by a node set $V=\{1,...,N\}$ and an edge set $E\subset V\times V$. The corresponding adjacency matrix $W=[w_{ij}]$ is defined as follows: $w_{ij}=1$ if $(i,j) \in E$; otherwise $w_{ij} =0$. A graph is undirected if its edge pair is unordered. 
 For a weighted undirected graph, $w_{ij}$ in its adjacency matrix $W$ is given by the weight  between nodes $i$ and $j$. 
 Furthermore an adjacency matrix can be represented as a pixel diagram on the unit square $[0,1]^2 \subset \BR^2$, which corresponds to a graphon step function \cite{lovasz2012large} (see Fig.~\ref{fig: converge-in-pixel-pictures}).

 Graphons are defined as bounded symmetric Lebesgue measurable functions $\FM: [0,1]^2\rightarrow [0,1]$. The space of graphons endowed with the \emph{cut metric} (see \cite{lovasz2012large}) allows the definition of the convergence of graph sequences. 
 In this paper, we consider bounded symmetric Lebesgue measurable functions $\FM:[0,1]^2 \rightarrow [-c, c]$ with $c>0$, and the space of all such functions is denoted by $\ESC$. The space $\ESC$ is compact under the cut metric after identifying equivalent points of cut distance zero  \cite{lovasz2012large}.  
 %
%
 \begin{figure}[htb]
    \centering
    \includegraphics[height=1.8cm]{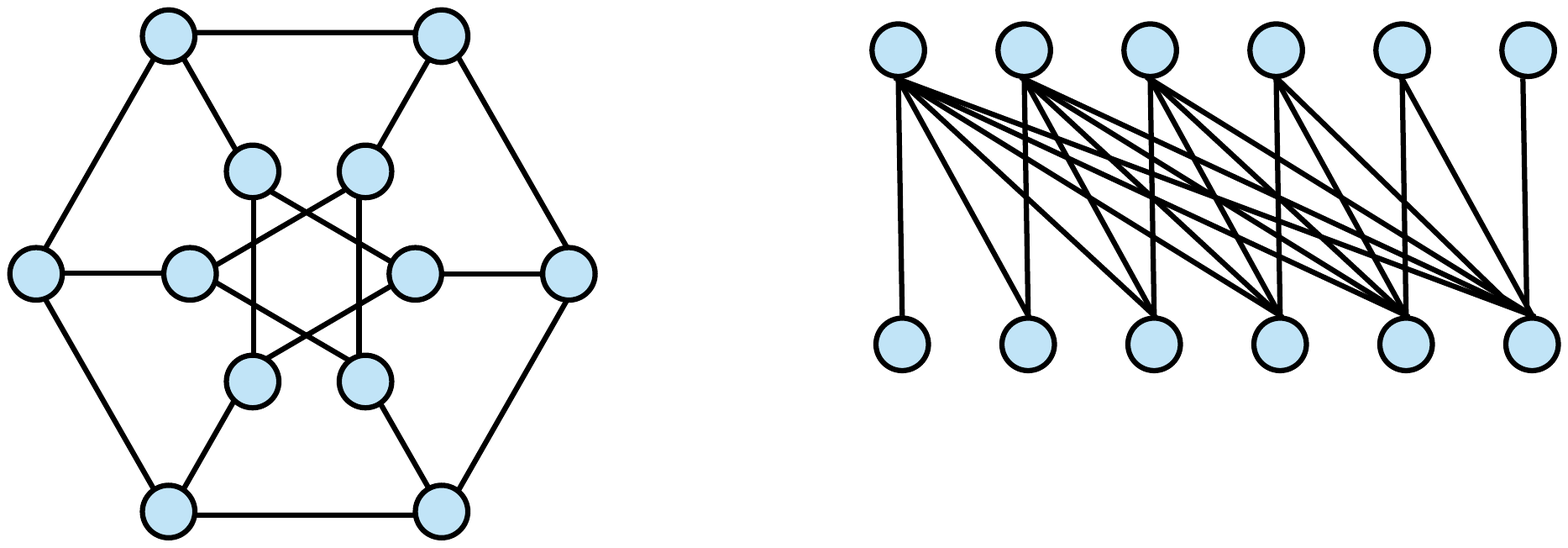}\qquad
    \includegraphics[height=1.8cm]{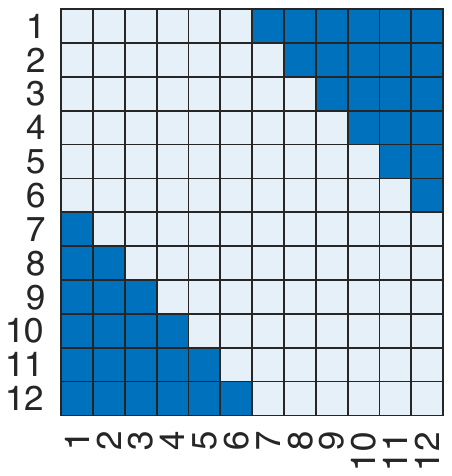}\qquad
    \includegraphics[height =1.8cm]{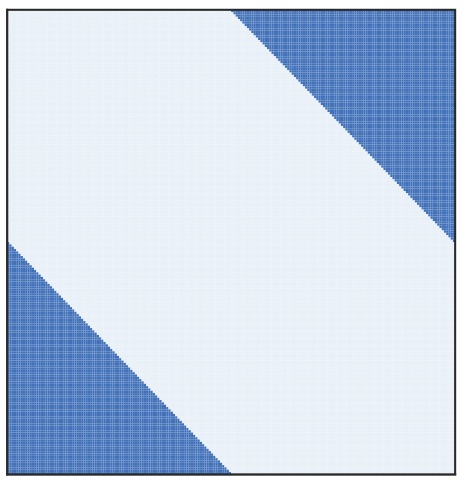}
    \caption{{A half graph \cite{lovasz2012large}, its pixel diagram, and its limit graphon}}
    \label{fig: converge-in-pixel-pictures}
\end{figure}

A graphon $\FM \in \ESC$ also defines a self-adjoint bounded linear operator from $L^2[0,1]$ to $L^2[0,1]$ as follows:
\begin{equation}\label{eq:graphon-self-adjoint-operator}
	(\FM \Fu)(\alpha) = \int_{[0,1]} 
 \FM(\alpha, \eta)\Fu(\eta)d\eta, \quad \forall \alpha \in [0,1],
\end{equation}
 where $\Fu,~  \FM \Fu \in L^2[0,1]$. 
Moreover, graphons can be associated with  operators from $(L^2[0,1])^n$ to $(L^2[0,1])^n$. 
{Let $\mathcal{L}\left(\left(L^2[0,1]\right)^n\right)$ represent the set of bounded linear operators from $\left(L^2[0,1]\right)^n$ to $\left(L^2[0,1]\right)^n$. 
For any general bounded linear operator $\BT \in \mathcal{L}\left(L^2[0,1]\right)$ and $D\in \BR^{n\times n}$, the operator $[D \BT] \in \mathcal{L}\left((L^2[0,1])^n\right)$ is defined as follows: for any $\Fv \in \left(L^2[0,1]\right)^n$ and any index $\alpha \in [0,1]$,
\begin{equation}\label{eq:bound-linear-operator}
	([D\BT]\Fv)(\alpha)  \triangleq D\PMATRIX{ (\BT\Fv_1)(\alpha) \\
 	 \vdots \\
 	(\BT\Fv_n)(\alpha)  } \in \BR^n,
\end{equation}%
where $\Fv_i \in L^2[0,1]$ denotes the $i$th component of $\Fv\in (L^2[0,1])^n$.
We use the square bracket $[\cdot]$ in \eqref{eq:bound-linear-operator} to indicate that the operator is in $\mathcal{L}\left(\left(L^2[0,1]\right)^n\right)$.
The $k$th $(k\geq 0)$ power functions of $[D\BT]$ and is  given by 
$
	[D\BT]^k = [D^k \BT^k] 
$
where  $\BT^0$ is formally defined as the identity operator from $L^2[0,1]$ to $L^2[0,1]$.}
Following \eqref{eq:bound-linear-operator}, the operator $[D\FM] \in \mathcal{L}\left(\left(L^2[0,1]\right)^n\right)$ with $D\in \BR^{n\times n}$ and $\FM \in \ESC$ is therefore defined as follows:  for any $\Fv \in \left(L^2[0,1]\right)^n$ and any index $\alpha \in [0,1]$,
 \begin{equation}
 \label{eq:operation-vec}
 \begin{aligned}
 	([D\FM]\Fv)(\alpha) &   \triangleq D \PMATRIX{ (\FM \Fv_1)(\alpha)  \\
 	 \vdots \\
 	(\FM \Fv_n)(\alpha)} \in \BR^n.\\
 \end{aligned}
 \end{equation}
Since $[D\FM]$ is a bounded linear operator from $\left(L^2([0,1])\right)^n$ to $\left(L^2([0,1])\right)^n$, it generates a uniformly continuous (hence  strongly continuous) semigroup \cite{pazy1983semigroups} given by
$
	S_{[D\FM]}(t)= \textup{exp}{\left(t [D\FM]\right)  } \triangleq \sum_{k=0}^\infty \frac1{k!}t^k[ D\FM]^k, ~ t \geq 0. 
$
Following the definition in \eqref{eq:bound-linear-operator}, for the identity operator $\BI \in \mathcal{L}\left(L^2[0,1]\right)$ and $D\in \BR^{n\times n}$, the operation $[D \BI]$ satisfies the following: for any $\Fv \in \left(L^2[0,1]\right)^n$ and any index $\alpha \in [0,1]$,
\begin{equation*}
	([D\BI]\Fv)(\alpha)  \triangleq D\PMATRIX{ \Fv_1(\alpha)~ \hdots~
 	\Fv_n(\alpha)  }^\TRANS= D \Fv(\alpha) \in \BR^n.
\end{equation*}%

\subsection{Invariant Subspace and Component-Wise Decomposition} \label{subsec:invarint-subspace-intro}
Let $\HS$ denote a Hilbert space.  
 An \emph{invariant subspace} of a bounded linear operator $\mathbb{T}\in \LS(\HS)$  is defined as any subspace $\SBS_\HS \subset \HS$ such that 
 $
  \mathbb{T} \SBS_\HS\subset \SBS_\HS.
$
Then the subspace $\SBS_\HS$ is  called \emph{$\mathbb{T}$-invariant}. Since a graphon $\FM \in \ESC$ defines a self-adjoint operator as in \eqref{eq:graphon-self-adjoint-operator},  for any invariant subspace $\SBS\subset L^2[0,1]$ of $\FM$, $\SBS^\perp$ is also an invariant subspace of $\FM$ (see \cite{ShuangPeterTCNS20})  where $S^\perp$ denotes the orthogonal complement subspace of $\SBS$ in $L^2[0,1]$. 
The \emph{kernel} (or \emph{nullspace}) of $\FM \in \ESC$ is denoted as 
$\ker(\FM)\triangleq\left\{\Fv \in L^2[0,1]:  \FM\Fv=\mathbf {0} \right\}$. By its definition, $\ker(\FM)$ is an invariant subspace of $\FM \in \ESC$. 

The \emph{characterizing graphon invariant subspace} of $\FM \in \ESC$ is the subspace $\mathcal{S} \subset L^2[0,1]$ such that 
$\SBS^\perp=\ker(\FM)$. 

Let 
$
    (\mathcal{S})^n \triangleq \underbrace{\mathcal{S}\times\ldots \times \mathcal{S}}_{n} \subset (L^2[0,1])^n.
$
%
%
Clearly, by definition, $(\SBS \oplus \SBS^\perp)^n = (L^2[0,1])^n$. Any $\Fv \in (L^2[0,1])^n$ can be uniquely decomposed through its components as
\begin{equation}\label{eq:component-wise-decomp}
    \Fv_i = \bar \Fv_i +  \Fv_{i}^\perp, \quad \forall i \in \{1,..., n\}
\end{equation}
where $\bar \Fv_i \in \SBS \subset L^2[0,1]$ and $ \Fv_{i}^\perp \in \SBS^\perp  \subset L^2[0,1]$. We call the decomposition in \eqref{eq:component-wise-decomp} the \emph{component-wise decomposition} of $\Fv$ into $(\SBS)^n$ and  $(\SBS^\perp)^n$, and denote it by  $\Fv = \bar\Fv +  \Fv^\perp$ where $\bar \Fv \in  (\SBS)^n$ and $ \Fv^\perp \in (\SBS^\perp)^n$  (see \cite{ShuangPeterTCNS20} for more details). 

\section{Graphon Dynamical Systems}

\subsection{Graphon Dynamical System Model}
Consider the graphon time-varying dynamical system  
\begin{equation}\label{equ: infinite-system-model-vec}
	\dot \Fx(t)  = [A(t) \BI + D(t) \FM]\Fx(t) + [B(t) \BI+E(t)\FM]\Fu(t)
\end{equation}
where $\FM \in \ESC$, $\Fx(t) \in \left(L^2[0,1]\right)^n$ for each $t\in[0,T]$. 
The admissible control $\Fu{(\cdot) }$ lies in $L^2([0,T];(L^2[0,1])^n)$. For any $t \in [0,T]$, $A(t)$, $B(t)$,  $D(t)$ and $E(t)$ are $n \times n$  matrices; 
furthermore, $A(\cdot) $, $B(\cdot)$,  $D(\cdot)$ and $E(\cdot)$ are assumed to be continuous from $[0,T]$ to $\BR^{n\times n}$. 
Let $\BA(t) = [A(t) \BI + D(t) \FM] $ and $\BB(t) = [B(t) \BI+E(t)\FM]$. 
A \emph{mild solution} of \eqref{equ: infinite-system-model-vec} is defined as the solution $\Fx$ that is continuous over $[0,T]$ and satisfies the integral equation
\begin{equation}\label{eq:mild-solution-ode}
	\Fx(t) = \Fx_0 + \int_0^t\left(\BA(\tau) \Fx(\tau) + \BB(\tau) \Fu(\tau)\right) d\tau.
\end{equation}

{
Consider the initial value problem 
	\begin{equation}\label{eq:IVP}
		\dot \Fx(t) = \BA(t)\Fx(t),\quad  \Fx(s) = \Fx_s,\quad  0 \leq s \leq t\leq T,
	\end{equation}
	where $\BA(t) = [A(t) \BI + D(t) \FM]$. Clearly $\BA(t)$ for every $t\in [0,T]$  is bounded and continuous under the uniform operator topology. Hence 
	  the classical solution\footnote{The classical solution follows the definition of \cite[Def.2.1, Chp. 4]{pazy1983semigroups}, that is, $\Fx$ is continuous on $[0, T]$, $\Fx(t)$ is in the domain of $\BA(t)$ for all $t\in[0,T]$,  $\Fx$ is continuously differentiable on $(0,T]$, and satisfies  \eqref{eq:IVP}.} to \eqref{eq:IVP}  exists and is unique (\cite[Thm.~5.2, Ch.~5]{pazy1983semigroups}). 
\begin{definition}[Evolution Operator] 
The evolution operator associated with $\BA(\cdot)$
	is defined as the two-parameter family of operators $\Phi(\cdot, \cdot)$ that satisfies
	$
	\Phi(t, s) \Fx_s= \Fx(t), ~ \text{for}~  0 \leq s \leq t\leq T
	$
	where  $\Fx$ denotes the classical solution to \eqref{eq:IVP}.
\end{definition}
We note that $\BA(t)=[A(t) \BI + D(t) \FM]$ is a bounded linear operator from $(L^2[0,1])^n$ to $(L^2[0,1])^n$ and $\BA(\cdot)$ is continuous under the uniform operator topology. Hence following \cite[Thm.~5.2, Ch.~5]{pazy1983semigroups}, the evolution operator $\Phi(\cdot, \cdot)$ satisfies
\begin{equation}\label{eq:evo-op}
		\frac{\partial \Phi(t,\tau)}{\partial t} = \BA(t) \Phi(t,\tau),   ~
		\Phi(\tau,\tau) = \BI,~ t,\tau\in[0,T],
\end{equation}
in $ \mathcal{L}_u\big((L^2[0,1])^n\big)$ (the space of all bounded linear operators on $(L^2[0,1])^n$ under the uniform operator topology). }

\begin{lemma}[Mild Solution]
\label{lem:time-varying-graphon-systems} 
	The system \eqref{equ: infinite-system-model-vec} has {a unique mild solution} $\Fx$ in $C([0,T];(L^2[0,1])^n)$ 
	given by 
\begin{align}\label{eq:mild-soln}
    \Fx(t) = \Phi(t,0)\Fx(0) + \int_0^t \Phi(t,\tau) [B(\tau) \BI+D(\tau)\FM] \Fu(\tau) d\tau
\end{align}
with  $\Phi(\cdot,\cdot)$ as the evolution operator associated with $[A(\cdot)\BI + D(\cdot)\FM]$.  \NoEndMark
\end{lemma}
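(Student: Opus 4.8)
The plan is to verify that the right-hand side of \eqref{eq:mild-soln} defines an element of $C([0,T];(L^2[0,1])^n)$ satisfying the integral identity \eqref{eq:mild-solution-ode}, and then to prove uniqueness by a Gr\"onwall estimate. Throughout I would use that $\BA(\cdot)$ and $\BB(\cdot)$ are $\mathcal{L}_u$-continuous on the compact interval $[0,T]$, so that $M_A \triangleq \sup_{t\in[0,T]}\|\BA(t)\|$ and $M_B \triangleq \sup_{t\in[0,T]}\|\BB(t)\|$ are finite, and that the evolution operator $\Phi(\cdot,\cdot)$ supplied by \cite[Thm.~5.2, Ch.~5]{pazy1983semigroups} is jointly $\mathcal{L}_u$-continuous on $\{0\le\tau\le t\le T\}$, hence uniformly bounded there by some constant $M_\Phi$, and satisfies \eqref{eq:evo-op}.

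First I would check that \eqref{eq:mild-soln} is well posed. Since $\Fu\in L^2([0,T];(L^2[0,1])^n)\subset L^1([0,T];(L^2[0,1])^n)$ and $\tau\mapsto\Phi(t,\tau)\BB(\tau)$ is $\mathcal{L}_u$-continuous, the map $\tau\mapsto\Phi(t,\tau)\BB(\tau)\Fu(\tau)$ is strongly measurable and dominated in norm by $M_\Phi M_B\|\Fu(\tau)\|$, so it is Bochner integrable and $\Fx(t)$ in \eqref{eq:mild-soln} is well defined. Continuity of $t\mapsto\Fx(t)$ then follows by splitting, for $t'>t$, the difference $\int_0^{t'}\Phi(t',\tau)\BB(\tau)\Fu(\tau)d\tau - \int_0^{t}\Phi(t,\tau)\BB(\tau)\Fu(\tau)d\tau$ into $\int_0^t(\Phi(t',\tau)-\Phi(t,\tau))\BB(\tau)\Fu(\tau)d\tau$, which vanishes as $t'\to t$ by dominated convergence and joint continuity of $\Phi$, plus $\int_t^{t'}\Phi(t',\tau)\BB(\tau)\Fu(\tau)d\tau$, whose norm is at most $M_\Phi M_B\int_t^{t'}\|\Fu(\tau)\|d\tau\to 0$ by absolute continuity of the Lebesgue integral; together with $\mathcal{L}_u$-continuity of $t\mapsto\Phi(t,0)$ this gives $\Fx\in C([0,T];(L^2[0,1])^n)$.

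Next I would substitute \eqref{eq:mild-soln} into $\int_0^t\BA(\tau)\Fx(\tau)d\tau$. Integrating \eqref{eq:evo-op} in the first argument gives, for $0\le\sigma\le t\le T$, the identity $\int_\sigma^t\BA(\tau)\Phi(\tau,\sigma)d\tau = \Phi(t,\sigma)-\Phi(\sigma,\sigma) = \Phi(t,\sigma)-\BI$ in $\mathcal{L}_u((L^2[0,1])^n)$. Expanding the substitution, the homogeneous part contributes $\big(\int_0^t\BA(\tau)\Phi(\tau,0)d\tau\big)\Fx(0) = (\Phi(t,0)-\BI)\Fx(0)$, while the convolution part, after interchanging the order of integration over $\{0\le\sigma\le\tau\le t\}$ by Fubini's theorem (legitimate because of the uniform norm bounds and $\Fu\in L^1$), becomes $\int_0^t\big(\int_\sigma^t\BA(\tau)\Phi(\tau,\sigma)d\tau\big)\BB(\sigma)\Fu(\sigma)d\sigma = \int_0^t(\Phi(t,\sigma)-\BI)\BB(\sigma)\Fu(\sigma)d\sigma$. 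Adding the two and rearranging using \eqref{eq:mild-soln} yields $\int_0^t\BA(\tau)\Fx(\tau)d\tau = \Fx(t)-\Fx(0)-\int_0^t\BB(\sigma)\Fu(\sigma)d\sigma$, which is exactly \eqref{eq:mild-solution-ode}.

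Finally, for uniqueness, if $\Fx^{(1)},\Fx^{(2)}\in C([0,T];(L^2[0,1])^n)$ both satisfy \eqref{eq:mild-solution-ode}, their difference $\Fy$ satisfies $\Fy(t)=\int_0^t\BA(\tau)\Fy(\tau)d\tau$, hence $\|\Fy(t)\|\le M_A\int_0^t\|\Fy(\tau)\|d\tau$ for all $t\in[0,T]$, and Gr\"onwall's inequality forces $\Fy\equiv 0$. I expect the only genuinely delicate point to be the Fubini interchange together with the collapse of the resulting double operator integral via $\int_\sigma^t\BA(\tau)\Phi(\tau,\sigma)d\tau=\Phi(t,\sigma)-\BI$; the rest is routine given that the generators are bounded and norm-continuous, which is precisely what makes this setting simpler than the general $C_0$-semigroup case. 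An equivalent route would be a Picard/contraction argument for \eqref{eq:mild-solution-ode} directly on $C([0,T];(L^2[0,1])^n)$, followed by the same computation to identify the fixed point with \eqref{eq:mild-soln}.
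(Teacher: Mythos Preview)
Your proof is correct, but it takes a different route from the paper. The paper's own proof is a short verification of hypotheses followed by an appeal to \cite[Lem.~3.2, Prop.~3.4, Prop.~3.6, Part~II]{bensoussan2007representation}: it checks that $[B(\cdot)\BI+D(\cdot)\FM]\Fu(\cdot)$ is B\"ochner measurable (via strong continuity of the operator family, the Uniform Boundedness Principle, and $\Fu\in L^2$), notes that $(L^2[0,1])^n$ is reflexive, and then invokes the cited result for existence, uniqueness, and the representation \eqref{eq:mild-soln} in one stroke. You instead prove everything from scratch: well-posedness and continuity of the variation-of-constants formula, direct substitution into \eqref{eq:mild-solution-ode} using the identity $\int_\sigma^t\BA(\tau)\Phi(\tau,\sigma)\,d\tau=\Phi(t,\sigma)-\BI$ and Fubini, and uniqueness via Gr\"onwall. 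Your argument is self-contained and makes transparent exactly where the boundedness and norm-continuity of $\BA(\cdot),\BB(\cdot)$ are used (in particular you never need reflexivity), at the cost of being longer; the paper's version is terse but leans on an external reference that packages the same machinery.
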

\begin{proof}

Since $A(\cdot) $, $B(\cdot) $ and $D(\cdot) $ are continuous functions from $[0,T]$ to $\BR^{n\times n}$,  we obtain that for any $\Fv \in (L^2[0,1])^n$, $\big[A(\cdot) \BI + D(\cdot) \FM \big]\Fv$  and 
$\big[B(\cdot) \BI + D(\cdot) \FM \big]\Fv$ are continuous functions from $[0,T]$ to $(L^2[0,1])^n$. By the Uniform Boundedness Principle, there exists $c>0$ such that 
$
\left\|\big[B(t) \BI + D(t) \FM \big]\right\|_{\textup{op}} \leq c, ~ t\in [0,T].
$
This together with $\Fu{(\cdot)} \in L^2([0,T];(L^2[0,1])^n)$ implies 
$$\Big(\big[B(t) \BI + D(t) \FM \big] \Fu{(t)}  \Big)_{t\in[0,T]} \in L^2([0,T];(L^2[0,1])^n),$$ that is, it is {B\"ochner} measurable. 
\textcolor{black}{Furthermore, we note that $(L^2[0,1])^n$ is  a reflexive Banach space.}
	Therefore, all the conditions in \cite[Lem.~3.2, Prop. 3.4, Prop. 3.6, Part II]{bensoussan2007representation} are verified and  we obtain that the system \eqref{equ: infinite-system-model-vec} is well defined and has a unique mild solution 
	 and the solution is given by \eqref{eq:mild-soln}. 
\end{proof}

{\begin{lemma}[Classical Solution] \label{eq:classical-solution-pair}
If $\Fu \in C([0,T];(L^2[0,1])^n)$, then the system \eqref{equ: infinite-system-model-vec} has a unique classical solution\footnote{That is the solution $\Fx$ is continuous on $[0, T]$, $\Fx(t)$ is in the domain of $[A(t)\BI + D(t)\FM]$ for all $t\in[0,T]$, $\Fx$ is continuously differentiable on $(0,T]$ and satisfies  \eqref{equ: infinite-system-model-vec}.}  $\Fx$, and the classical solution is also given by \eqref{eq:mild-soln}.  
\NoEndMark
\end{lemma}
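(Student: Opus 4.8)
The plan is to leverage Lemma~\ref{lem:time-varying-graphon-systems}, which already gives the unique mild solution $\Fx$ via formula \eqref{eq:mild-soln}, and upgrade it to a classical solution under the stronger hypothesis $\Fu \in C([0,T];(L^2[0,1])^n)$. First I would recall that, since $[A(t)\BI + D(t)\FM]$ and $[B(t)\BI + E(t)\FM]$ are bounded linear operators on $(L^2[0,1])^n$ for each $t$ and depend continuously on $t$ in the uniform operator topology, the domain of each $\BA(t)$ is all of $(L^2[0,1])^n$; hence the domain condition in the definition of a classical solution is automatic and one only needs to establish continuous differentiability of $t\mapsto \Fx(t)$ on $(0,T]$ together with the identity \eqref{equ: infinite-system-model-vec}.

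The key steps are as follows. I would start from the mild-solution formula $\Fx(t) = \Phi(t,0)\Fx(0) + \int_0^t \Phi(t,\tau)\bigl[B(\tau)\BI + E(\tau)\FM\bigr]\Fu(\tau)\,d\tau$ and differentiate it in $t$. For the first term, \eqref{eq:evo-op} gives $\frac{d}{dt}\Phi(t,0)\Fx(0) = \BA(t)\Phi(t,0)\Fx(0)$. For the integral term I would write $\Psi(t) \triangleq \int_0^t \Phi(t,\tau)\Fg(\tau)\,d\tau$ with $\Fg(\tau) \triangleq \bigl[B(\tau)\BI + E(\tau)\FM\bigr]\Fu(\tau)$, noting that $\Fg \in C([0,T];(L^2[0,1])^n)$ because $\Fu$ is continuous and the operators are uniformly bounded and continuous in $t$. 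The standard computation for evolution families then yields, using $\Phi(t,t)=\BI$ and $\partial_t\Phi(t,\tau)=\BA(t)\Phi(t,\tau)$ in $\mathcal{L}_u$, that $\Psi$ is differentiable with $\dot\Psi(t) = \Fg(t) + \int_0^t \BA(t)\Phi(t,\tau)\Fg(\tau)\,d\tau = \Fg(t) + \BA(t)\Psi(t)$, the last equality holding because $\BA(t)$ is bounded and can be pulled out of the (Böchner) integral. Adding the two pieces gives $\dot\Fx(t) = \BA(t)\Fx(t) + \Fg(t) = [A(t)\BI + D(t)\FM]\Fx(t) + [B(t)\BI + E(t)\FM]\Fu(t)$, which is exactly \eqref{equ: infinite-system-model-vec}; continuity of the right-hand side in $t$ gives $\Fx \in C^1$. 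Uniqueness is inherited from the mild solution (any classical solution is in particular a mild solution, and Lemma~\ref{lem:time-varying-graphon-systems} gives uniqueness of the latter).

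The main obstacle — really the only nonroutine point — is justifying the differentiation under the integral sign and the interchange $\frac{d}{dt}\int_0^t\Phi(t,\tau)\Fg(\tau)\,d\tau = \Phi(t,t)\Fg(t) + \int_0^t \partial_t\Phi(t,\tau)\,\Fg(\tau)\,d\tau$ in the Banach space $(L^2[0,1])^n$. This requires that $(t,\tau)\mapsto \Phi(t,\tau)$ be jointly continuous and that $\partial_t\Phi(t,\tau)$ be continuous in $(t,\tau)$ (uniformly on the compact triangle $0\le\tau\le t\le T$), which follows from \eqref{eq:evo-op} and the uniform continuity of $\BA(\cdot)$ in $\mathcal{L}_u$, together with continuity of $\Fg$. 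Rather than reproving these properties I would cite \cite[Thm.~5.2, Ch.~5]{pazy1983semigroups} (or the bounded-perturbation theory for evolution operators in the uniform operator topology), which supplies precisely the regularity of $\Phi$ needed, and invoke dominated convergence (the integrand and its difference quotients being bounded in norm uniformly on $[0,T]$) to pass the limit inside the integral. The rest is the elementary manipulation sketched above.
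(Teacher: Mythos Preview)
Your argument is correct, but it takes a more elaborate route than the paper's. You start from the variation-of-constants formula \eqref{eq:mild-soln} and differentiate $\Phi(t,0)\Fx(0)+\int_0^t\Phi(t,\tau)\Fg(\tau)\,d\tau$ in $t$, which forces you to control the joint regularity of $(t,\tau)\mapsto\Phi(t,\tau)$ and to justify differentiation under the integral sign. The paper instead works from the \emph{defining} integral equation \eqref{eq:mild-solution-ode},
\[
\Fx(t)=\Fx_0+\int_0^t\bigl(\BA(\tau)\Fx(\tau)+\BB(\tau)\Fu(\tau)\bigr)\,d\tau,
\]
where the integrand depends on $\tau$ only. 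Once the mild solution $\Fx$ is known to lie in $C([0,T];(L^2[0,1])^n)$ and $\Fu$ is assumed continuous, the integrand $\tau\mapsto\BA(\tau)\Fx(\tau)+\BB(\tau)\Fu(\tau)$ is continuous, so the fundamental theorem of calculus for the Bochner integral immediately gives $\Fx\in C^1$ with $\dot\Fx(t)=\BA(t)\Fx(t)+\BB(t)\Fu(t)$. No evolution-family regularity, no interchange of limit and integral is needed. Your approach buys nothing extra here, since the operators are bounded and everything lives on the whole space; the paper's shortcut is the cleaner one.
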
 
\begin{proof}
	The proof follows a similar argument to that of \cite[Thm.5.1, Chp.5]{pazy1983semigroups}.
	First by Picard's iterations, one can establish the existence of a unique mild solution satisfying the integral equation \eqref{eq:mild-solution-ode} based on the observations that $\BA(\cdot)$ is continuous in the uniform operator topology.  Then the fact that $\Fx$ lies in $C([0,T];(L^2[0,1])^n)$ together with the assumption $\Fu \in C([0,T];(L^2[0,1])^n)$ implies that the right-hand side of the integral equation \eqref{eq:mild-solution-ode} is differentiable. By differentiating both sides of \eqref{eq:mild-solution-ode} in the classical sense, we obtain the classical solution to \eqref{equ: infinite-system-model-vec}. \textcolor{black}{Clearly, from the analysis above, the unique classical solution is also given by \eqref{eq:mild-soln} (see also \cite[p.~130]{pazy1983semigroups}).}
\end{proof}
}

\begin{remark}
	%
Compared to \cite{ShuangPeterTCNS20}, the graphon dynamical system model in \eqref{equ: infinite-system-model-vec} is time-varying; more specifically, the parameter matrices $A(\cdot), B(\cdot), D(\cdot)$ and $E(\cdot)$ are time-varying, but the underlying graphon $\FM$ is time-invariant.  This time-varying formulation will be used in characterizing the solutions to the limit LQG-GMFG problems via two coupled graphon time-varying differential equations (see Section \ref{sec:FBEquations}). 
\end{remark}

\subsection{Relations with Finite Network Systems}\label{sec:Finite-Graphon}
Consider an $N$-node network with the following nodal dynamics: for $i\in \{1,...,N\}$,
\begin{equation}\label{eq:network-system}
	\dot x_i(t)= A(t)x_i(t) + B(t) u_i(t)+ D(t) x_i^{\mathcal{G}} (t) + E(t) u_i^{\mathcal{G}}(t)
\end{equation}
where $x_{i}(t) \in \BR^n$ and  $u_i(t)\in \BR^n$ represent respectively the state and the control of $i$th node at time $t$, and 
 $$x^{\mathcal{G}}_i(t) \triangleq \frac{1}{N}\sum_{j=1}^N m_{ij}x_j(t)\quad  \text{and}\quad u_i^{\mathcal{G}}(t) \triangleq  \frac{1}{N}\sum_{j=1}^N m_{ij}u_j(t)$$ 
 represent respectively the network influence of states and that of the control at time $t\in [0,T]$. The coupling weights satisfy that $m_{ij}\leq c$ for all $i,j\in\{1,...,N\}$ where $c$ is the same constant for the graphon set $\ESC$.  
We note that  problems with $m$-dimensional control inputs $(m<n)$ for the nodal dynamics can be represented by placing zeros in columns (with indices between $m$ and $n$) of $D(t)$ and $E(t)$ for all $t\in [0,T]$. 

Consider a uniform partition $\{P_1, \ldots,P_N\}$ of $[0,1]$ with $P_1 =[0,\frac{1}{N}]$ and $P_k =(\frac{k-1}{N},\frac{k}{N}]$ for $2\leq k\leq N$.   
The \emph{step function graphon} $\SM$ that corresponds to $M_N \triangleq [m_{ij}]$ is  defined by 
\begin{equation*}
	\SM(\vartheta,\varphi) \triangleq \sum_{i=1}^{N} \sum_{j=1}^{N} \Chi_{_{P_i}}(\vartheta)\Chi_{_{P_j}}(\varphi)m_{ij},  \quad (\vartheta,\varphi) \in [0,1]^2,
\end{equation*}
where $\Chi_{_{P_i}}(\cdot)$ is the indicator function (that is, $\Chi_{_{P_i}}(\vartheta)=1$ if $\vartheta\in P_i$ and $\Chi_{_{P_i}}(\vartheta)=0$ if $\vartheta\notin P_i$). 
Let $\Sx(t) \in (L^2{[0,1]})^n$ be the piece-wise constant function (in the $\vartheta$ argument) corresponding to $x(t) \triangleq ({x_1(t)}^\TRANS,...,{x_N(t)}^\TRANS )^\TRANS \in \BR^{nN}$ given by 
$\mathbf{x}^\mathbf{{[N]}}_\vartheta (t) \triangleq\sum_{i=1}^N {\Chi}_{_{P_i}}(\vartheta) x_i(t), ~ \forall \vartheta \in [0,1].$  
Similarly, define $\Su (t) \in (L^2{[0,1])^n}$ that corresponds to $u(t)\triangleq ({u_1(t)}^\TRANS,...,{u_N(t)}^\TRANS )^\TRANS \in \BR^{nN}$.

Then  the network system in \eqref{eq:network-system} may be compactly represented by the following graphon dynamical system 
\begin{equation}
	\begin{aligned} \label{equ:step-function-dynamical-system}
	&\dot{{\mathbf{x}}}^\mathbf{[N]}(t)= \left[A(t)\BI+D(t)\SM\right] \Sx(t)\\
	& \qquad   \qquad +\left[B(t)\BI+D(t)\SM\right] \Su(t),~~ t\in[0,T],
	\end{aligned}
\end{equation}
 where  $  \Sx(t), \Su(t) \in (L^2_{pwc}{_{[0,1]}})^n$, $\SM \in \ESC$ represents step function graphon couplings associated with the underlying graph (via its adjacency matrix $[m_{ij}]$), and $L^2_{pwc}{{[0,1]}}$ denotes the set of all piece-wise constant (over each element  of the uniform partition) functions  in $L^2{[0,1]}$.

 The trajectories of the graphon dynamical system in \eqref{equ:step-function-dynamical-system} correspond one-to-one to the trajectories of the network system in \eqref{eq:network-system}, following a similar proof argument to \cite[Lem. 3]{ShuangPeterTAC18}. 
Moreover, the system in \eqref{equ: infinite-system-model-vec} can represent the limit system  for a sequence of systems represented in the form of \eqref{equ:step-function-dynamical-system} when the underlying step function graphon sequence converges to a limit graphon (under suitable  norms) and  initial conditions converges to a limit initial condition in $(L^2[0,1])^n$, following a similar proof argument to  \cite[Thm. 7]{ShuangPeterTAC18}. 
 
%
\section{LQG Graphon Mean Field Games}
{The application of the graphon mean field games methodology to finite network game problems is as follows: by passing to the nodal population limit and then network limit, one can identify the limit equilibrium; this limit equilibrium is then used by all the agents to generate the approximation of the best response strategies. This methodology bypasses the combinatorial intractability of computing the exact Nash equilibria for dynamic game problems on large networks.}

\subsection{Stochastic Dynamic Games on Finite Networks}
Consider an $N$-node graph where each node is associated with a homogeneous population of individual agents. 
Each individual agent is influenced by the mean field of its nodal population and  the mean fields of other nodal populations over the graph.   Let 
 ${\cal V}_c$ denote the set of nodes representing the clusters and $N=|\mathcal{V}_c|$ denote the total number of such nodes.
 Let $\mathcal{C}_q$ denote the set of agents in the $q$th cluster. Then the total number of agents is  given by $K=\sum_{q =1}^ {N} |\mathcal{C}_q|$.

Following the problem formulation in \cite{PeterMinyiCDC18GMFG, PeterMinyiCDC19GMFG,PeterMinyiSIAM21GMFG},
the dynamics  of an individual agent $i\in\{1,...,K\}$ are given by
\begin{align}
dx_i(t)= (Ax_i(t) +B u_i(t)+ D z_i(t))dt +\Sigma dw_i(t),  \label{eqn1} 
\end{align}
where $t\in [0,T]$, $x_i(t)$, $u_i(t)$, and $z_i(t)$ are respectively the state, the control and the \emph{network empirical average} in $\BR^n$.  $\{w_i, 1\leq i \leq K\}$ are independent standard $n$-dimensional Wiener processes and are independent of the initial conditions $\{x_i(0), 1\leq i\leq K\}$ which are also assumed to be independent. 
$\Sigma$ is a constant $n\times n$ matrix.
 {We drop the time index for $A(\cdot), B(\cdot), D(\cdot)$  purely for notation simplicity.} 
Problems with $m$-dimensional control inputs $(m<n)$ for the nodal dynamics can be represented by placing zeros in columns (with indices between $m$ and $n$) of $B$. 
For an agent $i\in \mathcal{C}_q$, the \emph{network empirical average} $z_i(t)$ is given by 
\begin{equation}
    z_i(t)=\frac{1}{N}\sum_{\ell =1}^{N} m_{q\ell} \frac{1}{|\cal C_\ell|}\sum_{j\in \cal C_\ell}x_j(t) = \frac{1}{N}\sum_{\ell =1}^{N} m_{q\ell} \bar x_\ell(t) 
\end{equation}
where $M= [m_{q\ell}]$ is the adjacency matrix of the underlying graph, $\bar x_\ell (t)\triangleq \int_{\BR^n } x \hat \mu_\ell(t,x)dx$, and $\hat \mu_\ell(t,\cdot)$ denotes the empirical distribution of agent states in  cluster $\mathcal{C}_\ell$ at time $t$.

The {individual agent's cost} is given by 
\begin{multline}
J_i(u_i)\triangleq \mathbb{E}\Big(\int_0^T\big( \|x_i(t)-\nu_i(t)\|_Q^2 + \|u_i(t)\|_R^2\big)dt \\
+  \|x_i(T)-\nu_i(T)\|_{Q_{_T}}^2\Big)
\end{multline}
where $Q,Q_T\geq 0, R>0$, $\nu_i(t)\triangleq H(z_i(t)+\eta)$,  $\eta \in \BR^n $ and $H\in \BR^{n\times n}$. In other words, agent $i$ is trying to ensure that its state $x_i(t)$ tracks $\nu_i(t)$ for all $t\in[0,T]$ with relatively small control efforts.

Let $\gamma_i(\cdot, \cdot): [0,T]\times \mathcal{I}_i\to \BR^n$ denote the strategy of agent $i$,  $i\in \{1,...,K\}$ where $\mathcal{I}_i$ denotes the information set available to agent $i$. The control of agent $i$ at time $t$ is then given by $u_i(t) = \gamma_i(t, \eta)$ with $\eta \in \mathcal{I}_i$. 
A strategy $K$-tuple  $(\gamma_1,..., \gamma_K)$ is a \emph{Nash equilibrium} if it satisfies 
\begin{equation}
	J_i(\gamma_i, \gamma_{-i})\leq  J_i(\gamma, \gamma_{-i}), \quad \forall \gamma(\cdot, \cdot):[0,T]\times \mathcal{I}_i\to \BR^n,
	\end{equation}
	for all $ i \in 
	\{1,...,K\},$
	where $\gamma_{-i}\triangleq (\gamma_1,...\gamma_{i-1}, \gamma_{i+1},..., \gamma_K)$, and  $J(\gamma, \gamma_{-i})$ denotes the cost for agent $i$ when agent $i$ follows strategy $\gamma(\cdot, \cdot):[0,T]\times \mathcal{I}_i\to \BR^n$ and all the other agents follow strategies specified in $\gamma_{-i}$.  
Given that all other agents are taking strategies specified by $\gamma_{-i}$, the \emph{best response} of agent $i$ is defined by  $\arg\inf_{\gamma \in \mathcal{U}_i } J_i(\gamma, \gamma_{-i})
$,
where the sets of admissible strategies $(\mathcal{U}_{i})_{i=1}^K$ may consist of open-loop, close-loop, or state-feedback strategies depending on the information structures (see \cite{bacsar1998dynamic} for detailed discussions). 

Directly finding Nash equilibria for such problems on very large networks is generally intractable. The graphon mean field game approach \cite{PeterMinyiCDC18GMFG,PeterMinyiCDC19GMFG,PeterMinyiSIAM21GMFG} employs the idea of finding approximate solutions based on  both the mean field limit and the graphon limit. The corresponding best response for each individual agent in the approximate solution is decentralized in the sense that for each agent $i$ only its local state observation is required in $\mathcal{I}_i$.  

\subsection{Infinite Nodal Population Problems on Finite Networks}
In the asymptotic local population limit (i.e. $|\mathcal{C}_q| \rightarrow \infty$ for all $q \in\{1,...,\N\}$), the dynamics of a generic agent $\alpha$ in the cluster $\mathcal{C}_q$ (i.e. $\alpha \in \mathcal{C}_q$) are then given by
\begin{equation}\label{eq:dyn-net-mf}
dx_\alpha(t)= (Ax_\alpha(t) +B u_\alpha(t)+ D z_\alpha(t))dt +\Sigma dw_\alpha(t), 
\end{equation}
where $z_{\alpha}(t)  = \frac{1}{\N} \sum_{\ell=1}^{\N}m_{q\ell} \bar{x}_\ell(t), $
\begin{equation*}
\begin{aligned}
	\quad   
\bar{x}_\ell(t) \triangleq \lim_{|\cal C_\ell| \rightarrow \infty} \frac{1}{|\cal C_\ell|}\sum_{j\in \cal C_\ell}x_j(t) = \int_{\BR^n}  x \mu_\ell(t,dx),
\end{aligned}
\end{equation*}
and  $\mu_\ell(t,\cdot)$ is the state probability distribution at cluster $\ell$ at time $t$. 
The cost for a generic agent $\alpha \in \mathcal{C}_q$ is then 
\begin{multline}\label{eq:nodal-limit-finitenet-cost}
J_\alpha(u_\alpha)= \mathbb{E}\Big(\int_0^T\big( \|x_\alpha(t)-\nu_\alpha(t)\|_Q^2+  \|u_\alpha(t)\|_R^2\big)dt \\ +  \|x_\alpha(T)-\nu_\alpha(T)\|_{Q_{_T}}^2\Big),
\end{multline}
\\
where $Q,Q_T\geq 0, R>0$ and $\nu_\alpha(t)\triangleq H(z_\alpha(t)+\eta)$.
Let 
\[
\bar{u}_\ell(t) \triangleq \lim_{|\cal C_\ell| \rightarrow \infty} \frac{1}{|\cal C_\ell|}\sum_{i\in \cal C_\ell}u_i(t),~ \bar{z}_\ell(t) \triangleq \lim_{|\cal C_\ell| \rightarrow \infty} \frac{1}{|\cal C_\ell|}\sum_{i\in \cal C_\ell}z_i(t).
\]
{Assuming the limits $\bar u_\ell, \bar z_\ell$ and $\bar x_\ell$ exist, 
we obtain the dynamics in the infinite  population limit for each cluster
$\cal C_\ell$, $\ell \in \mathcal{V}_{c}$, 
when the individual agent dynamics are given by \eqref{eq:dyn-net-mf}; specifically, this yields} the dynamics of the \emph{nodal mean field} $\bar x_\ell(t)$ in the cluster $\cal C_\ell$: 
\begin{equation}\label{eq:cluster-ell}
\begin{aligned}
	\dot {\bar x}_\ell(t) & = A \bar{x}_\ell(t) + B \bar{u}_\ell(t) + D \bar{z}_\ell(t). 
\end{aligned}
\end{equation}
 Then the \emph{(nodal) network mean field} $\bar z_q(t)\triangleq \frac{1}{\N} \sum_{\ell=1}^{\N}m_{q\ell} \bar{x}_\ell(t)$ for node $q\in \mathcal{V}_c$  is given by the deterministic dynamics
\begin{equation*}
\begin{aligned}	
	\dot{\bar z}_q(t) & = A \bar{z}_q(t) + \frac{1}{\N} \sum_{\ell=1}^{\N}m_{q\ell} ( B \bar{u}_\ell(t) + D \bar{z}_\ell(t)),~~q\in \mathcal{V}_c.
\end{aligned}
\end{equation*}
 The \emph{network mean field} refers to $\bar{z}(t) \triangleq (\bar{z}_1(t)^\TRANS,..., \bar{z}_{_{\N}}(t)^\TRANS)^\TRANS$. Let $\bar{s}(t)$ and $\bar{x}(t)$ be defined similarly to $\bar{z}(t)$.

\begin{proposition}
\label{prop:best-response-finitnetwork}
If there exists a unique (classical) solution pair $(\bar{z},\bar{s})$ to the coupled forward-backward equations 
\begin{equation}\label{eq:compact-z-evo}
    \begin{aligned}
    \dot{\bar{z}}&(t) =~ \big(I_{_{\N}}\otimes A_c(t)  + \frac{1}{\N} M \otimes D \big)\bar{z}(t) \\& - \frac{1}{\N} M \otimes BR^{-1}B^\TRANS \bar{s}(t),~ 
    \bar{z}(0)=~{\frac1N{(M\otimes I_n)}}  \bar{x}(0),
    \end{aligned}
\end{equation}
\begin{equation}\label{eq:compact-s-evo}
    \begin{aligned}
    - &\dot{\bar{s}}(t) = ~ I_{_{\N}}\otimes A_c(t)^\TRANS \bar{s}(t)  - I_{_{\N}}\otimes (Q H-\Pi_tD )\bar{z}(t)\\
    & \qquad \qquad \qquad \qquad -(I_{_{\N}}\otimes Q {H}) (\mathbf{1}_n\otimes\eta), \\
    &
    \bar{s}(T) = ~ (I_N\otimes Q_TH) (\bar{z}(T)+\mathbf{1}_n\otimes\eta) ,
    \end{aligned}
\end{equation}
where $t\in [0,T]$, $A_c(t)\triangleq (A  -  B R^{-1}B^\TRANS \Pi_t) $, and $\Pi_{(\cdot)}$ is the solution to the $n\times n$-dimensional matrix Riccati equation 
\begin{equation}\label{eq:Finite-Riccati-Prop1}
		-\dot{\Pi}_t = A^\TRANS \Pi_t + \Pi_t A    -  \Pi_t B R^{-1}B^\TRANS \Pi_t +  Q, ~  \Pi_T =   Q_T,
\end{equation}
  then the game problem defined by \eqref{eq:dyn-net-mf} and \eqref{eq:nodal-limit-finitenet-cost} has a unique Nash equilibrium and the best response in the equilibrium is given as follows: for a generic agent $\alpha$ in cluster $\mathcal{C}_q$,
	\begin{align}\label{equ:locallimit-BR}
		 u_\alpha(t) &= - R^{-1}B^\TRANS(\Pi_t x_\alpha(t)+ \bar s_q(t)), ~ \alpha \in \mathcal{C}_q, ~ q \in \mathcal{V}_c. 
		\end{align} \NoEndMark
\end{proposition}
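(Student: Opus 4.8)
The plan is to observe that, in the infinite nodal population limit, every generic agent faces a \emph{standard} finite-horizon LQG tracking problem against a deterministic signal, to solve that problem by the usual dynamic-programming argument, and then to impose a mean field consistency condition that reproduces exactly the forward--backward system \eqref{eq:compact-z-evo}--\eqref{eq:compact-s-evo}. Concretely: let $(\bar z,\bar s)$ be the assumed unique classical solution of \eqref{eq:compact-z-evo}--\eqref{eq:compact-s-evo}, with blocks $\bar z=(\bar z_1^\TRANS,\dots,\bar z_{_{\N}}^\TRANS)^\TRANS$ and $\bar s=(\bar s_1^\TRANS,\dots,\bar s_{_{\N}}^\TRANS)^\TRANS$. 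For a generic agent $\alpha\in\mathcal{C}_q$ the network empirical average in \eqref{eq:dyn-net-mf} equals $z_\alpha(t)=\tfrac1{\N}\sum_\ell m_{q\ell}\bar x_\ell(t)=\bar z_q(t)$, which is deterministic and, being part of a classical solution, is $C^1$ in $t$; hence both the drift term $D\bar z_q(t)$ and the tracking target $\nu_\alpha(t)=H(\bar z_q(t)+\eta)$ are deterministic and continuous. Thus agent $\alpha$ minimises \eqref{eq:nodal-limit-finitenet-cost} subject to $dx_\alpha=(Ax_\alpha+Bu_\alpha+D\bar z_q(t))\,dt+\Sigma\,dw_\alpha$ over adapted controls; since $R>0$, $Q,Q_T\ge 0$ and the state is affine in the control, the cost is strictly convex in $u_\alpha(\cdot)$, so the best response exists and is unique. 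The value-function ansatz $V(t,x)=x^\TRANS\Pi_t x+2 p_q(t)^\TRANS x+c_q(t)$ then yields that $\Pi$ solves the Riccati equation \eqref{eq:Finite-Riccati-Prop1}, that the affine coefficient $p_q$ solves the linear terminal-value ODE
\[
-\dot p_q(t)=A_c(t)^\TRANS p_q(t)-(QH-\Pi_t D)\bar z_q(t)-QH\eta,
\]
with terminal data at $t=T$ read off from the terminal cost weight $Q_T$, and that the unique optimal control is $u_\alpha(t)=-R^{-1}B^\TRANS(\Pi_t x_\alpha(t)+p_q(t))$; identifying $p_q$ with $\bar s_q$, this offset equation stacked over all clusters is \eqref{eq:compact-s-evo}, and the control law is \eqref{equ:locallimit-BR}.

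Next I would close the loop through mean field consistency. Under the candidate best response, which is affine in the own state, and because the agents in a cluster are exchangeable with independent driving noise, the nodal mean field $\bar x_\ell(t)=\BE[x_\alpha(t)]$ (for $\alpha\in\mathcal{C}_\ell$) satisfies the deterministic ODE obtained by substituting the averaged control $\bar u_\ell=-R^{-1}B^\TRANS(\Pi_t\bar x_\ell+\bar s_\ell)$ into \eqref{eq:cluster-ell}, i.e.\ $\dot{\bar x}_\ell=A_c(t)\bar x_\ell-BR^{-1}B^\TRANS\bar s_\ell+D\bar z_\ell$ (the $\Sigma\,dw_\alpha$ term is mean-zero and drops out), with $\bar x_\ell(0)=\BE[x_\alpha(0)]$. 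Aggregating through the adjacency matrix, $\bar z_q=\tfrac1{\N}\sum_\ell m_{q\ell}\bar x_\ell$ then satisfies exactly \eqref{eq:compact-z-evo} with initial condition $\bar z(0)=\tfrac1{\N}(M\otimes I_n)\bar x(0)$. Hence the pair $(\bar z,\bar s)$ used to generate the best responses is forced to solve \eqref{eq:compact-z-evo}--\eqref{eq:compact-s-evo}; conversely, the assumed solution of that system yields, via the previous paragraph, a well-defined strategy profile whose induced mean field is consistent.

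It then remains to conclude the Nash property and uniqueness. Setting $\gamma_\alpha(t,x)=-R^{-1}B^\TRANS(\Pi_t x+\bar s_q(t))$ for $\alpha\in\mathcal{C}_q$: facing the deterministic mean field $\bar z$ it itself generates, each agent is by the first paragraph best-responding, so $(\gamma_\alpha)$ is a Nash equilibrium; a completion-of-squares verification against $V(t,x)=x^\TRANS\Pi_t x+2\bar s_q(t)^\TRANS x+c_q(t)$, using the Riccati and offset equations, makes this rigorous. For uniqueness, an arbitrary Nash equilibrium induces some network mean field $\bar z'$; given $\bar z'$, the best response of each agent is the unique minimiser of the associated strictly convex LQG tracking problem and hence has the feedback-plus-offset form with offset $p'_q$ determined by $\bar z'_q$ through the offset ODE above (in particular two agents of the same cluster solve identical problems and use the same strategy, so the equilibrium is automatically symmetric within clusters); imposing consistency as in the second paragraph forces $(\bar z',p')$ to solve \eqref{eq:compact-z-evo}--\eqref{eq:compact-s-evo}, so the assumed uniqueness gives $\bar z'=\bar z$, $p'=\bar s$, and the equilibrium coincides with the constructed one.

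The step I expect to be the main obstacle is the consistency bookkeeping of the second paragraph: one must justify carrying the expectation through the affine closed loop so that $\bar x_\ell$, and hence $\bar z_q$, satisfies the stated deterministic ODE --- i.e.\ that the cluster limits $\bar u_\ell,\bar z_\ell,\bar x_\ell$ exist and interchange with the dynamics in the way used around \eqref{eq:cluster-ell} --- and then to match, term by term and with the correct initial and terminal data, the resulting forward equation for $\bar z$ and the offset equations for the $p_q$'s against \eqref{eq:compact-z-evo}--\eqref{eq:compact-s-evo}, including the precise placement of $A_c(t)$, $\tfrac1{\N}M\otimes D$ and $\tfrac1{\N}M\otimes BR^{-1}B^\TRANS$. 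The LQG tracking step is otherwise routine, with $R>0$ supplying both existence and uniqueness of each agent's best response and the solvability of the offset ODE following from continuity of $\Pi_{(\cdot)}$ and $\bar z(\cdot)$ on $[0,T]$.
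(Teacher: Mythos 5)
Your proposal is correct and follows essentially the same route as the paper's proof: reduce each agent's problem in the infinite nodal population limit to an LQG tracking problem against the deterministic signal $\nu_q=H(\bar z_q+\eta)$, obtain the feedback-plus-offset best response via the Riccati equation and the offset ODE, and close the loop with the mean field consistency condition that stacks into \eqref{eq:compact-z-evo}--\eqref{eq:compact-s-evo}. You merely make explicit some steps the paper leaves implicit (the value-function derivation of the tracking solution and the uniqueness/verification argument), and your term-by-term matching of the offset and aggregated forward equations agrees with the paper's \eqref{eq:s-equation} and \eqref{equ:graphon-field}.
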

\begin{proof}
Within an infinite nodal population, the individual effect on the nodal mean field is negligible.  Hence each individual agent in cluster $\mathcal{C}_q$ is solving an LQG tracking problem to track a reference trajectory $\nu_q$. 
The  best response for a generic agent $\alpha$ in cluster $\mathcal{C}_q$ is simply given by the optimal LQG tracking solution as 
\begin{equation}\label{equ:locallimit-BR-proof}
	\begin{aligned}
		 u_\alpha(t) &= - R^{-1}B^\TRANS(\Pi_t x_\alpha(t)+ \bar s_q(t)), \qquad \alpha \in \mathcal{C}_q,
		\end{aligned}
\end{equation}
where $\Pi$ is given by \eqref{eq:Finite-Riccati-Prop1} and $\bar{s}_q$ is given by	
\begin{equation}\label{eq:s-equation}
	\begin{aligned}		
	-\dot{\bar s}_q(t) &= A_c(t)^\TRANS {\bar s}_q(t) - Q\nu_q(t) {+ \Pi_t D \bar z_q}(t),\\[3pt]
	\end{aligned}
\end{equation}
with $\bar s_q{(T)} = Q_T \nu_q(T)$ and $\nu_q \triangleq H(\bar z_q+\eta)$.
If all agents follow the best response in \eqref{equ:locallimit-BR}, then the evolution of the network mean field $\bar{z}$ must satisfy
\begin{equation} \label{equ:graphon-field}
\begin{aligned}
	\dot{\bar{z}}_q(t)  =&~ A_c(t)\bar z_q(t) +D\frac{1}{\N}\sum_{\ell=1}^{\N}m_{q\ell} \bar z_\ell(t)  \\
	& \quad - BR^{-1}B^\TRANS \frac{1}{\N}\sum_{\ell=1}^{\N}m_{q\ell} \bar s_\ell(t)
	\end{aligned}	
\end{equation}
with $\bar{z}_q(0)  = \frac{1}{\N} \sum_{\ell=1}^{\N}m_{q\ell}\bar{x}_\ell(0),~ 1\leq q \leq \N
$.
If there exists a unique solution pair 
 $   (\bar z_q(t), \bar s_q(t))_{q \in \mathcal{V}_c, t\in [0,T]}$
to  \eqref{eq:s-equation} and \eqref{equ:graphon-field}, then the best response strategy for each  agent is uniquely determined by \eqref{equ:locallimit-BR-proof}, \eqref{eq:Finite-Riccati-Prop1}, \eqref{eq:s-equation} and \eqref{equ:graphon-field}.
The joint equations \eqref{eq:s-equation} and \eqref{equ:graphon-field} can be represented in an equivalent compact form by two $n\N$-dimensional equations as \eqref{eq:compact-s-evo} and \eqref{eq:compact-z-evo}. 
\end{proof}

The solution pair to the two coupled equations \eqref{eq:compact-s-evo} and \eqref{eq:compact-z-evo} together with the sufficient conditions for existence and uniqueness can be provided based on the fixed-point method in \cite{huang2010nce} or the solution method based on Riccati equations following  \cite{huang2012social,bensoussan2016linear,salhab2016collective}.  See Appendix \ref{sec:appendix-algorithms} for more details.

Each individual agent, in order to generate the (network mean field) best response in \eqref{equ:locallimit-BR}, needs to solve two $n\N$ dimensional equations \eqref{eq:compact-s-evo} and  \eqref{eq:compact-z-evo}, and moreover each individual agent is required to know the exact graph structure.  For large graphs, the computation of solutions and the requirement for exact graph structure become extremely difficult, if not intractable, to achieve. 
To overcome these difficulties, we employ the idea of approximating large graph structures by their graphon limit(s) in the following section. 
\subsection{Infinite Nodal Population Problems on Graphons} \label{sec:FBEquations}

Consider a uniform partition $\{P_1, \ldots,P_{\N}\}$ of $[0,1]$ with $P_1 =[0,\frac{1}{\N}]$ and $P_k =(\frac{k-1}{\N},\frac{k}{\N}]$ for $2\leq k\leq \N$.   Let node $q$ be associated with the partition $P_q$. 
If we embed $\bar z$ and $\bar s$ into the Hilbert space $L^2\big([0,T];(L^2[0,1])^n\big)$, 
 denoted by $\Sz$ and $\Ss$ following the construction in Section \ref{sec:Finite-Graphon}, then the problem given by {\eqref{eq:compact-z-evo} and \eqref{eq:compact-s-evo}} can be equivalently represented by the following graphon time-varying dynamical systems: %
\vspace{3pt}
\begin{equation}\label{eq:z-evo-stepfunction}
	\begin{aligned}
	\dot{{\Fz}}^{[\FN]}(t) & = [\BA(t)+D \SM]\Sz(t)  - [BR^{-1}B^\TRANS \SM] \Ss(t) \\
	 \Sz(0)  & =    \int_{[0,1]} \FM(\cdot, \beta) \bar{x}_\beta (0) d\beta, 
	 \end{aligned}
\end{equation}
\begin{equation} \label{eq:s-evo-stepfunction}
	\begin{aligned}
   & \dot{\Fs}^{[\FN]}(t) = -\big[\BA(t)^\TRANS\big] \Ss(t) + [(Q H- \Pi_t D)\BI]\Sz(t)\\[3pt] &~+ [Q H\BI](\eta\mathbf{1}) , \quad 
	\Ss (T)  = [Q_T H  \BI] (\Sz(T)+\eta\mathbf{1}), 
	\end{aligned}
\end{equation}
where  $\BA (t) \triangleq [(A - BR^{-1}B^\TRANS \Pi_t) \BI]\in \mathcal{L}({(L^2[0,1])^n})$ (with a slight abuse of the notation $\BA(t)$),  and $\Sz, \Ss \in L^2([0,T];(L_{pwc}^2[0,1])^n )$.  

This equivalent formulation enables us to represent arbitrary-size graphs, since any graph of a finite size can be represented by $\SM$ through a step function graphon as illustrated in  Section \ref{sec:Finite-Graphon}. 
As the number of nodes goes to infinity, the limit of joint equations {\eqref{eq:compact-z-evo} and \eqref{eq:compact-s-evo}} (if it exists)  is given by the joint equations  \eqref{eq:z-evo} and \eqref{eq:s-evo} below. 
{Conditions for existence and uniqueness of solution pairs to the joint equations are presented later in Section \ref{sec:fixed-point-based-soln}, while the convergence properties of the solution pairs $\{(\Sz,\Ss)\}$ to $(\Fz, \Fs)$ in $C([0,T];(L^2[0,1])^n)\times C([0,T];(L^2[0,1])^n)$ under suitable conditions are presented in \cite[Appendix \ref{sec:convergence-analysis}]{ShuangPeterMinyiArXiv21} and \cite{ShuangPeterMinyiCDC21}.} 

\vspace{5pt}
\noindent\textbf{The Global LQG-GMFG Forward-Backward Equations}
~
\begin{equation}\label{eq:z-evo}
	\begin{aligned}
	&\dot \Fz(t)  = [\BA(t)+D \FM]\Fz(t)  - [BR^{-1}B^\TRANS \FM] \Fs(t), \\
	 &\Fz(0)   =  [I \FM] \bar{\Fx}(0) = \int_{[0,1]} \FM(\cdot, \beta) \bar{x}_\beta (0) d\beta \in (L^2[0,1])^n,	\end{aligned}
\end{equation}
\begin{equation} \label{eq:s-evo}
	\begin{aligned}
    &\dot{\Fs}(t) = -\big[\BA(t)^\TRANS\big] \Fs(t) + [(Q H- \Pi_t D)\BI]\Fz(t)+ [Q H\BI](\eta\mathbf{1}) , \\[3pt]
	&\Fs (T)  = [Q_T H  \BI] (\Fz(T)+\eta\mathbf{1}) \in (L^2[0,1])^n,
	\end{aligned}
\end{equation}
where  $\BA (t) \triangleq [(A - BR^{-1}B^\TRANS \Pi_t) \BI]\in \mathcal{L}({(L^2[0,1])^n})$,  $\Pi_{(\cdot)}$ is given by the $n\times n$-dimensional Riccati equation
\begin{equation}\label{eq:Finite-Riccati}
		-\dot{\Pi}_t = A^\TRANS \Pi_t + \Pi_t A    -  \Pi_t B R^{-1}B^\TRANS \Pi_t +  Q,  ~ \Pi_T =   Q_T,\\[3pt]
\end{equation}
and $ \Fz ,\Fs \in L^2([0,T];(L^2[0,1])^n )$. 

If the joint solutions  $\Fz$ and $\Fs$ to  \eqref{eq:z-evo} and  \eqref{eq:s-evo} exist in $L^2([0,T];(L^2[0,1])^n )$, then by Lemma \ref{lem:time-varying-graphon-systems} they also lie in $C\big([0,T];(L^2[0,1])^n\big)$. 
By the Arzel\`a–Ascoli Theorem and the Uniform Limit Theorem \cite{1999Topology}, the space $C([0,T];(L^2[0,1])^n)$ is complete under the  uniform norm $\|\cdot\|_C$ defined by
\begin{equation}\label{eq:uniform-norm}
	\|\Fv\|_{_C}\triangleq\sup_{t\in[0,T]}\|\Fv(t)\|_{(L^2[0,1])^n}, \forall \Fv \in C([0,T];(L^2[0,1])^n).
\end{equation}

\begin{proposition}
Assume there exists a unique classical solution pair $(\Fz,\Fs)$ to equations \eqref{eq:z-evo} and \eqref{eq:s-evo}.  Then the graphon limit mean field game problem has a unique Nash equilibrium  and the  best response in the equilibrium for a generic agent $\alpha$ in cluster $\mathcal{C}_\vartheta$ for almost all $\vartheta\in [0,1]$ is given by 
\begin{equation}\label{equ:lqmfg-BR}
		 u_\alpha(t) = - R^{-1}B^\TRANS(\Pi_t x_\alpha(t)+  \Fs_\vartheta(t)), ~ \alpha \in \mathcal{C}_\vartheta, \vartheta \in[0,1]
\end{equation}
where $(\Fs_\vartheta(t))_{\vartheta \in [0,1], t\in[0,T]}$ is given by the joint equations  \eqref{eq:z-evo} and \eqref{eq:s-evo}, and $\Pi_{(\cdot)}$ is given by \eqref{eq:Finite-Riccati}.
\end{proposition}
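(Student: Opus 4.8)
The plan is to reproduce the two-step argument used for Proposition~\ref{prop:best-response-finitnetwork} --- first characterize the best response of a single agent against a frozen network mean field, then impose the consistency condition --- with the finite node index $q\in\mathcal{V}_c$ replaced by the continuum index $\vartheta\in[0,1]$ and the $n\N$-dimensional vector equations \eqref{eq:compact-z-evo}--\eqref{eq:compact-s-evo} replaced by the graphon operator equations \eqref{eq:z-evo}--\eqref{eq:s-evo}. By hypothesis there is a unique classical solution pair $(\Fz,\Fs)$, which by Lemma~\ref{lem:time-varying-graphon-systems} also lies in $C([0,T];(L^2[0,1])^n)$; I write $\Fz_\vartheta(t),\Fs_\vartheta(t)$ for its values at index $\vartheta$.

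First I would freeze the network mean field $t\mapsto\Fz(t)$ and look at a generic agent $\alpha\in\mathcal{C}_\vartheta$, whose dynamics are $dx_\alpha=(Ax_\alpha+Bu_\alpha+D\Fz_\vartheta(t))\,dt+\Sigma\,dw_\alpha$ and whose cost is \eqref{eq:nodal-limit-finitenet-cost} with $\nu_\alpha(t)=H(\Fz_\vartheta(t)+\eta)$. Because a single agent's deviation has no effect on $\bar{x}_\vartheta$ --- hence none on any $\Fz_\beta$ --- in the infinite nodal population, $\Fz_\vartheta(\cdot)$ is an exogenous signal and the agent faces a standard LQG tracking problem; since $R>0$ it has a unique optimal control, and completing the square yields exactly \eqref{equ:lqmfg-BR}, where $\Pi$ solves the Riccati equation \eqref{eq:Finite-Riccati} and the offset $\Fs_\vartheta$ solves the per-node backward ODE
\begin{equation*}
-\dot{\Fs}_\vartheta(t)=A_c(t)^\TRANS\Fs_\vartheta(t)-Q\,H(\Fz_\vartheta(t)+\eta)+\Pi_t D\,\Fz_\vartheta(t),\qquad \Fs_\vartheta(T)=Q_T H(\Fz_\vartheta(T)+\eta),
\end{equation*}
which is the exact analogue of \eqref{eq:s-equation}.

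Next I would impose that every agent in every cluster uses \eqref{equ:lqmfg-BR} and take expectations in the resulting closed-loop dynamics (the Wiener terms average to zero, and agents within a cluster are conditionally i.i.d., so the empirical nodal mean equals the expectation): the nodal mean field then obeys $\dot{\bar{x}}_\vartheta=A_c(t)\bar{x}_\vartheta-BR^{-1}B^\TRANS\Fs_\vartheta+D\Fz_\vartheta$, and substituting $\Fz_\vartheta(t)=\int_{[0,1]}\FM(\vartheta,\beta)\bar{x}_\beta(t)\,d\beta$ and writing this continuum family of ODEs in operator form --- with the graphon operators $[D\FM]$, $[BR^{-1}B^\TRANS\FM]$ and $\BA(t)=[(A-BR^{-1}B^\TRANS\Pi_t)\BI]$ --- gives precisely \eqref{eq:z-evo}, while the corresponding aggregation of the per-node offset ODEs (splitting $QH(\Fz_\vartheta+\eta)$ into the state term $[QH\BI]\Fz$ and the forcing term $[QH\BI](\eta\mathbf{1})$) gives \eqref{eq:s-evo}. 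Conversely, evaluating the classical solution $(\Fz,\Fs)$ of \eqref{eq:z-evo}--\eqref{eq:s-evo} at a.e.\ index $\vartheta$ recovers a per-node pair $(\Fz_\vartheta,\Fs_\vartheta)$ satisfying the two equations above, so the profile \eqref{equ:lqmfg-BR} assembled from this $(\Fz,\Fs)$ is simultaneously a best response for every agent against the mean field that the profile itself generates --- i.e.\ a Nash equilibrium of the graphon limit game in the sense of \cite{PeterMinyiSIAM21GMFG}. For uniqueness, any Nash equilibrium induces through its closed loop a consistent network mean field $\Fz$; the best-response step forces each strategy to be the unique LQG-tracking minimizer \eqref{equ:lqmfg-BR}, and consistency forces the induced $(\Fz,\Fs)$ to solve \eqref{eq:z-evo}--\eqref{eq:s-evo}, which has a unique solution by hypothesis, so $\Fz$, $\Fs$ and every agent's strategy are uniquely determined.

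The step I expect to be the main obstacle is making the passage between the continuum family of per-node forward/backward ODEs and the single pair of operator equations \eqref{eq:z-evo}--\eqref{eq:s-evo} fully rigorous: one has to verify that $\vartheta\mapsto(\Fz_\vartheta(t),\Fs_\vartheta(t))$ is B\"ochner-measurable and lies in $(L^2[0,1])^n$, that time-differentiation commutes with evaluation at $\vartheta$ and with the $\FM$-integral over $\beta$, and that the classical solution supplied by the hypothesis disintegrates component-wise into solutions of the per-node equations. This is the continuum counterpart of the elementary stacking that turns \eqref{eq:s-equation} into \eqref{eq:compact-s-evo} in the finite case, and of the step-function embedding \eqref{eq:z-evo-stepfunction}--\eqref{eq:s-evo-stepfunction}; it is precisely where the graphon dynamical-system machinery of Lemma~\ref{lem:time-varying-graphon-systems} and Section~\ref{sec:Finite-Graphon} is needed.
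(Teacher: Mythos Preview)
Your proposal is correct and follows exactly the approach the paper indicates: the paper's proof consists of the single sentence ``The proof follows the same lines of arguments as the proof for Proposition~\ref{prop:best-response-finitnetwork},'' and your two-step argument (freeze the network mean field, solve the per-agent LQG tracking problem, then impose consistency and identify the resulting system with \eqref{eq:z-evo}--\eqref{eq:s-evo}) is precisely that. Your discussion of the measurability/disintegration obstacle is in fact more careful than the paper, which does not spell this out.
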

The proof follows the same lines of arguments as the proof for Proposition~\ref{prop:best-response-finitnetwork}. 
\vspace{3pt}
%
The best response in the Nash equilibrium for the limit LQG-GMFG problem is similar to that in  \cite{PeterMinyiCDC18GMFG, PeterMinyiCDC19GMFG,PeterMinyiSIAM21GMFG}, but the characterization of the offset process  $\Fs$  is different.
The Global LQG-GMFG Forward-Backward Equations explicitly specify the space for the solution pair $( \Fz,\Fs)$ following similar lines to the analysis Graphon Control in \cite{ShuangPeterCDC17, ShuangPeterCDC18, ShuangPeterTAC18}, whereas in \cite{PeterMinyiCDC18GMFG, PeterMinyiCDC19GMFG,PeterMinyiSIAM21GMFG} these processes are specified in a pointwise sense. 
The formulation in this paper further enables the analysis of  LQG-GMFG solutions based on spectral and subspace decompositions.
	

\section{Existence, Uniqueness and Computation} \label{sec:fixed-point-based-soln}
%
\subsection{Sufficient Conditions for the Existence of a Unique Solution}\label{subsec:fixed-point}
Let  $\BA (t) \triangleq [(A - BR^{-1}B^\TRANS \Pi_t) \BI]$.
Let $\phi_1^\FM(\cdot,\cdot)$ and $\phi_2(\cdot,\cdot)$  denote the evolution operators respectively associated with $[\BA(\cdot)+ D \FM] $ and $[-\BA(\cdot)^\TRANS]$. 
{Following the standard definition of mild solutions in \eqref{eq:mild-solution-ode}}, the Global LQG-GMFG Forward-Backward Equations   \eqref{eq:z-evo} and \eqref{eq:s-evo} have the following integral representations 
\begin{align}\label{eq:z-integral}
	&\Fz(t) = \phi_1^\FM(t,0)\Fz(0) + \int_0^t \phi_1^\FM(t,\tau)  (- [BR^{-1}B^\TRANS \FM] \Fs(\tau) )d\tau, \\
	&\Fs(\tau) = \phi_2(\tau, T)\Fs(T) \notag\\
	&~ - \int^T_\tau \phi_2(\tau, q)\big([(Q H- \Pi_q D)\BI]\Fz(q)+ [Q H\BI]\eta\big) dq .\label{eq:s-integral}
\end{align}
Substituting $\Fs(\tau)$ in \eqref{eq:z-integral} by \eqref{eq:s-integral} yields  
\begin{equation*}
\begin{aligned}
	\Fz(t) &= \phi_1^\FM(t,0)\Fz(0) \\
	&\quad - \int_0^t \phi_1^\FM(t,\tau)  [BR^{-1}B^\TRANS \FM]  \Big\{ \phi_2(\tau, T)\Fs(T) - \\
	&\quad \int^T_\tau \phi_2(\tau, q)\big([(Q H- \Pi_q D)\BI]\Fz(q)+ [Q H\BI]\eta\big) dq  \Big\}d\tau.
\end{aligned}
\end{equation*}
We recall from \eqref{eq:s-evo} that $\Fs(T) = [Q_T H  \BI] (\Fz(T)+\eta\mathbf{1})$. 
Assuming the initial boundary condition $\Fz(0)$ is known,  
we then define the following operator $\Gamma(\cdot)$ from $L^2([0,T];(L^2[0,1])^n)$ to  $L^2([0,T];(L^2[0,1])^n)$:
\begin{equation}\label{eq:Gamma-operation}
\begin{aligned}
	(&\Gamma(\Fv))(t) \triangleq \phi_1^\FM(t,0)\Fz(0)\\
	& - \int_0^t \phi_1^\FM(t,\tau) [BR^{-1}B^\TRANS \FM] \Big\{\phi_2(\tau, T)[Q_T H  \BI] (\Fv(T)+\eta\mathbf{1}) - \\
	&\int^T_\tau \phi_2(\tau, q)\big([(Q H- \Pi_q D)\BI]\Fv(q)+ [Q H\BI]\eta \mathbf{1}\big) dq  \Big\}d\tau
\end{aligned}
\end{equation}
for any $\Fv$ in $L^2([0,T];(L^2[0,1])^n)$.
{Then one can easily verify the following lemma.} 
\begin{lemma}\label{lem:mappingC2C}
	$\Gamma(\cdot)$ is a mapping from $C([0,T];(L^2[0,1])^n)$ to $C([0,T];(L^2[0,1])^n)$.
\end{lemma}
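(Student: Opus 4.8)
The plan is to split $\Gamma(\Fv)(t)$ into the three summands of \eqref{eq:Gamma-operation} and to check that each defines a continuous map $[0,T]\to(L^2[0,1])^n$; note that restricting to $\Fv\in C([0,T];(L^2[0,1])^n)$ is exactly what makes the terminal value $\Fv(T)$ appearing in \eqref{eq:Gamma-operation} meaningful. First I would record the regularity of the ingredients. By the discussion around \eqref{eq:evo-op} together with Lemma~\ref{lem:time-varying-graphon-systems}, the generators $[\BA(\cdot)+D\FM]$ and $[-\BA(\cdot)^\TRANS]$ are bounded and continuous in the uniform operator topology on $[0,T]$, so by \cite[Thm.~5.2, Ch.~5]{pazy1983semigroups} the associated evolution operators $\phi_1^\FM(\cdot,\cdot)$ and $\phi_2(\cdot,\cdot)$ are jointly continuous in $\mathcal{L}_u\big((L^2[0,1])^n\big)$ on the relevant triangular subsets of $[0,T]^2$; being continuous on a compact set, they are also uniformly bounded there, say by a constant $M_\phi$. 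Moreover $q\mapsto[(QH-\Pi_q D)\BI]$ is continuous in $\mathcal{L}_u\big((L^2[0,1])^n\big)$, since $\Pi_{(\cdot)}$ is $C^1$ (it solves the matrix Riccati equation \eqref{eq:Finite-Riccati} with continuous data) and $[(\cdot)\BI]$ is a bounded linear embedding of $\BR^{n\times n}$ into $\mathcal{L}\big((L^2[0,1])^n\big)$; the operators $[BR^{-1}B^\TRANS\FM]$, $[Q_T H\BI]$, $[QH\BI]$ are fixed and bounded, and $\eta\mathbf{1}$ and $\Fz(0)$ are fixed elements of $(L^2[0,1])^n$.

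Next I would treat the three summands. The first, $t\mapsto\phi_1^\FM(t,0)\Fz(0)$, is continuous because $t\mapsto\phi_1^\FM(t,0)$ is continuous. For the second, fix $t$: the integrand $\tau\mapsto\phi_1^\FM(t,\tau)[BR^{-1}B^\TRANS\FM]\phi_2(\tau,T)[Q_T H\BI](\Fv(T)+\eta\mathbf{1})$ is continuous in $\tau\in[0,t]$, hence B\"ochner integrable, so the term is well defined; and since $(t,\tau)\mapsto\phi_1^\FM(t,\tau)[BR^{-1}B^\TRANS\FM]\phi_2(\tau,T)[Q_T H\BI](\Fv(T)+\eta\mathbf{1})$ is jointly continuous and uniformly bounded on $\{0\le\tau\le t\le T\}$, the map $t\mapsto\int_0^t(\cdot)\,d\tau$ is continuous by the standard estimate: for $t<t'$ write $\int_0^{t'}(\cdot)\,d\tau-\int_0^t(\cdot)\,d\tau=\int_t^{t'}(\cdot)\,d\tau+\int_0^t\big((\cdot)_{t'}-(\cdot)_{t}\big)\,d\tau$ and bound the first piece by $M_\phi C(t'-t)$ and the second by $t\sup_{\tau\le t}\|(\cdot)_{t'}-(\cdot)_{t}\|$, which tends to $0$ as $t'\to t$ by uniform continuity on the compact triangle.

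For the third summand I would first show that $g(\tau):=\int_\tau^T\phi_2(\tau,q)\big([(QH-\Pi_q D)\BI]\Fv(q)+[QH\BI]\eta\mathbf{1}\big)\,dq$ defines a continuous map $g:[0,T]\to(L^2[0,1])^n$. For fixed $\tau$ the integrand is continuous in $q\in[\tau,T]$ (products and compositions of continuous operator- and vector-valued maps), so $g(\tau)$ is well defined; continuity of $g$ in $\tau$ follows from the same splitting trick applied to the variable lower limit, using joint continuity and uniform boundedness of $\phi_2$, continuity of $q\mapsto[(QH-\Pi_q D)\BI]$, and uniform continuity and boundedness of $\Fv$ on $[0,T]$. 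Once $g\in C([0,T];(L^2[0,1])^n)$, the third summand equals $-\int_0^t\phi_1^\FM(t,\tau)[BR^{-1}B^\TRANS\FM]g(\tau)\,d\tau$, which is continuous in $t$ by exactly the argument used for the second summand. Summing the three pieces gives $\Gamma(\Fv)\in C([0,T];(L^2[0,1])^n)$.

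The lemma is essentially bookkeeping, so there is no deep obstacle; the part requiring the most care is the continuity of the inner integral $g(\tau)$ in $\tau$, where one must simultaneously handle the moving lower limit of integration, the joint continuity of the composed operator family $\phi_2(\tau,q)[(QH-\Pi_q D)\BI]$, and the continuity of $\Fv$ — it is precisely here that uniform boundedness of $\phi_2$ on the compact triangle and uniform continuity of $\Fv$ are used.
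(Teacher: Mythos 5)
Your proof is correct. The paper offers no argument for this lemma at all (it is introduced with ``one can easily verify''), so there is nothing to compare against; your verification --- joint continuity and uniform boundedness of the evolution operators on the compact triangle, continuity of $q\mapsto[(QH-\Pi_q D)\BI]$, and the standard splitting estimate for integrals with a moving limit, applied first to the inner integral $g(\tau)$ and then to the outer one --- is exactly the routine bookkeeping the authors are waving at, and your observation that restricting to $C([0,T];(L^2[0,1])^n)$ is what makes the point evaluation $\Fv(T)$ meaningful is a worthwhile remark the paper glosses over. The only blemish is the sign you attach to the third summand (the double negative in \eqref{eq:Gamma-operation} makes it $+\int_0^t\phi_1^\FM(t,\tau)[BR^{-1}B^\TRANS\FM]g(\tau)\,d\tau$), which is immaterial to the continuity argument.
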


Lemma \ref{lem:mappingC2C} allows us to use  the Contraction Mapping Principle in the Banach space $C\big([0,T];(L^2[0,1])^n\big)$ endowed with the uniform norm in \eqref{eq:uniform-norm} to establish conditions for the existence of a unique solution pair to the joint  equations  \eqref{eq:z-evo} and \eqref{eq:s-evo} above.  
With a slight abuse of notation, we use $\|\cdot\|_{\textup{op}}$ to denote the operator norm for both $\mathcal{L}((L^2[0,1])^n)$ and $\mathcal{L}(L^2[0,1])$, as it will become clear in the specific context which operator norm is referred to.
Define the following mapping $\textup{L}_0(\cdot): \ESC \to [0,\infty)$: 
\begin{equation*}
	\begin{aligned}
		&\textup{L}_0(\FM)\triangleq \sup_{t\in[0,T]}\bigg\{\int_0^t\int^T_\tau \Big\|\Big\{\phi_1^\FM(t,\tau)   [BR^{-1}B^\TRANS \FM]\phi_2(\tau, q)  \\
	& \qquad \qquad \qquad \qquad\qquad\quad  [(Q H- \Pi_q D)\BI]\Big\}\Big\|_{\textup{op}}dq d\tau \bigg\}+\\
		&\sup_{t\in[0,T]}\left\{ \int_0^t \Big\|\phi_1^\FM(t,\tau)  [BR^{-1}B^\TRANS \FM]\phi_2(\tau, T)[Q_T H  \BI]\Big\|_{\textup{op}}  d\tau\right\}
		\end{aligned}
\end{equation*}
for any  $\FM \in \ESC$.
\begin{lemma}
\label{eq:lem-fixed-point}
If  the following condition 
\begin{equation}\label{eq:contraction-condition}
\begin{aligned}
	\textup{L}_0(\FM)<1
\end{aligned}
\end{equation} 
holds,
 then there exists a unique {classical} solution pair $(\Fz,\Fs)$ 
 to the Global LQG-GMFG Forward Backward Equations  \eqref{eq:z-evo} and \eqref{eq:s-evo}. \NoEndMark
\end{lemma}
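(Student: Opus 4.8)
The plan is to reduce the coupled forward--backward system \eqref{eq:z-evo}--\eqref{eq:s-evo} to a single fixed-point equation $\Fz=\Gamma(\Fz)$ on the Banach space $C([0,T];(L^2[0,1])^n)$ equipped with the uniform norm $\|\cdot\|_{_C}$ of \eqref{eq:uniform-norm}, and then apply the Contraction Mapping Principle. By Lemma~\ref{lem:mappingC2C} the operator $\Gamma$ of \eqref{eq:Gamma-operation} maps $C([0,T];(L^2[0,1])^n)$ into itself (this relies on the boundedness and strong continuity of the evolution operators $\phi_1^\FM(\cdot,\cdot)$ and $\phi_2(\cdot,\cdot)$, which follow from \eqref{eq:evo-op} and the uniform-operator-topology continuity of $[\BA(\cdot)+D\FM]$ and $[-\BA(\cdot)^\TRANS]$). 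So the remaining analytic content is the contraction estimate.

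First I would take $\Fv,\Fw \in C([0,T];(L^2[0,1])^n)$ and form the difference $\Gamma(\Fv)-\Gamma(\Fw)$. The leading term $\phi_1^\FM(t,0)\Fz(0)$ in \eqref{eq:Gamma-operation} does not depend on the argument, so it cancels; what remains is a sum of two nested-integral terms, one carrying the terminal contribution $[Q_TH\BI](\Fv(T)+\eta\mathbf{1})$ and one carrying the running source $[(QH-\Pi_qD)\BI]\Fv(q)$. In the first term I would pull $\|\Fv(T)-\Fw(T)\|_{(L^2[0,1])^n}\le\|\Fv-\Fw\|_{_C}$ out and bound the operator composition $\phi_1^\FM(t,\tau)[BR^{-1}B^\TRANS\FM]\phi_2(\tau,T)[Q_TH\BI]$ in operator norm under the $\tau$-integral; in the second term I would pull $\sup_{q\in[0,T]}\|\Fv(q)-\Fw(q)\|_{(L^2[0,1])^n}\le\|\Fv-\Fw\|_{_C}$ out and bound the composition $\phi_1^\FM(t,\tau)[BR^{-1}B^\TRANS\FM]\phi_2(\tau,q)[(QH-\Pi_qD)\BI]$ in operator norm under the $(\tau,q)$-double integral. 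Taking the supremum over $t\in[0,T]$ and matching the resulting expression with the two summands in the definition of $\textup{L}_0$ yields
\[
\|\Gamma(\Fv)-\Gamma(\Fw)\|_{_C}\ \le\ \textup{L}_0(\FM)\,\|\Fv-\Fw\|_{_C}.
\]
Under the hypothesis \eqref{eq:contraction-condition}, $\Gamma$ is a strict contraction on the complete metric space $\big(C([0,T];(L^2[0,1])^n),\|\cdot\|_{_C}\big)$, so it has a unique fixed point $\Fz$.

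Next I would recover the second component: define $\Fs$ by the right-hand side of \eqref{eq:s-integral} with terminal value $\Fs(T)=[Q_TH\BI](\Fz(T)+\eta\mathbf{1})$. Since $\phi_2(\cdot,\cdot)$ is strongly continuous, $\Pi_{(\cdot)}$ is continuous, and $\Fz\in C([0,T];(L^2[0,1])^n)$, the pair $(\Fz,\Fs)$ lies in $C([0,T];(L^2[0,1])^n)\times C([0,T];(L^2[0,1])^n)$ and, by construction and the fixed-point identity, satisfies the integral equations \eqref{eq:z-integral}--\eqref{eq:s-integral}; these are exactly the mild-solution forms of \eqref{eq:z-evo} and \eqref{eq:s-evo}. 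Because the forcing term $-[BR^{-1}B^\TRANS\FM]\Fs(\cdot)$ of \eqref{eq:z-evo} and the forcing term $[(QH-\Pi_{(\cdot)}D)\BI]\Fz(\cdot)+[QH\BI](\eta\mathbf{1})$ of \eqref{eq:s-evo} are both continuous in $t$, Lemma~\ref{eq:classical-solution-pair} upgrades each mild solution to the classical one, so $(\Fz,\Fs)$ is a classical solution pair. For uniqueness, any classical (hence mild) solution pair of \eqref{eq:z-evo}--\eqref{eq:s-evo} satisfies \eqref{eq:z-integral}--\eqref{eq:s-integral}; substituting \eqref{eq:s-integral} into \eqref{eq:z-integral} shows its first component is a fixed point of $\Gamma$, which by the contraction property equals $\Fz$, and then \eqref{eq:s-integral} pins down $\Fs$.

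\textbf{Main obstacle.} The delicate step is the contraction estimate: one must control the nested integrals and the compositions of two distinct evolution operators, keeping careful track of which operator norm is in play ($\mathcal{L}((L^2[0,1])^n)$ for the full compositions, $\mathcal{L}(L^2[0,1])$ where only the scalar graphon operator enters), and arrange the bounds so that they reproduce \emph{exactly} the two supremum terms in $\textup{L}_0(\FM)$ rather than a larger quantity. A secondary technical point is to confirm that all the operator-norm factors appearing are finite and the $t$-suprema are attained (or at least finite), which follows from continuity on the compact interval $[0,T]$ and the uniform boundedness already used in the proof of Lemma~\ref{lem:time-varying-graphon-systems}.
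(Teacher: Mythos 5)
Your proposal is correct and follows essentially the same route as the paper: the contraction estimate $\|\Gamma(\Fv)-\Gamma(\Fu)\|_{_C}\leq \textup{L}_0(\FM)\|\Fv-\Fu\|_{_C}$, the Contraction Mapping Principle on $C([0,T];(L^2[0,1])^n)$, recovery of $\Fs$ from the backward equation, and the upgrade from mild to classical solutions via Lemma~\ref{eq:classical-solution-pair}. The only difference is that you spell out the uniqueness direction (any classical solution pair yields a fixed point of $\Gamma$) a bit more explicitly than the paper does, which is a welcome but minor addition.
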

\begin{proof}
	For any $\Fv, \Fu \in C([0,T];(L^2[0,1])^n)$,
\begin{equation}
\begin{aligned}
	&\|\Gamma(\Fv) - \Gamma(\Fu)\|_{_C} \\	
	& \leq\sup_{t\in[0,T]}\Big\| \int_0^t \phi_1^\FM(t,\tau)  [BR^{-1}B^\TRANS \FM]\phi_2(\tau, T)[Q_T H  \BI] \\
	&  \qquad \qquad \qquad \qquad \qquad \qquad (\Fu(T)- \Fv(T)) d\tau\Big\|_{(L^2[0,1])^n}  \\
	&\quad 
	+\sup_{t\in[0,T]}\Big\|\int_0^t\int^T_\tau \Big\{\phi_1^\FM(t,\tau)   [BR^{-1}B^\TRANS \FM]\phi_2(\tau, q)  \\
	& \qquad \qquad  [(Q H- \Pi_q D)\BI](\Fv(q)-\Fu(q))\Big\}dq d\tau\Big \|_{(L^2[0,1])^n}\\
	& \leq \left\{\sup_{t\in[0,T]}\int_0^t \Big\| \phi_1^\FM(t,\tau)  [BR^{-1}B^\TRANS \FM]\phi_2(\tau, T)[\gamma Q_T  \BI] \Big\|_{\textup{op}} d\tau\right. \\ 
	&\quad+\sup_{t\in[0,T]}\int_0^t\int^T_\tau \Big\|\Big\{\phi_1^\FM(t,\tau)   [BR^{-1}B^\TRANS \FM]\phi_2(\tau, q)  \\
	& \qquad \qquad \left. [(Q H- \Pi_q D)\BI]\Big\}\Big\|_{\textup{op}}dq d\tau \right\}\|(\Fv-\Fu)\|_{_C}\\
	& = \textup{L}_0(\FM)\|(\Fv-\Fu)\|_{_C}. 
\end{aligned}
\end{equation}
Therefore \eqref{eq:contraction-condition}
ensures that $\Gamma(\cdot)$  is a contraction in the Banach space $C([0,T];(L^2[0,1])^n)$ endowed with the uniform norm $\|\cdot\|_C$ defined in \eqref{eq:uniform-norm}. By the Contraction Mapping Principle, there exists a unique fix point $\Fz \in C([0,T];(L^2[0,1])^n)$ for $\Gamma(\cdot)$. Based on \eqref{eq:s-evo}, a unique solution $\Fs \in C([0,T];(L^2[0,1])^n)$ can then be obtained. Therefore LQG-GMFG Forward Backward Equations \eqref{eq:z-evo} and \eqref{eq:s-evo} have a  unique mild solution pair $(\Fz, \Fs)$.  {Applying Lemma \ref{eq:classical-solution-pair} to each of the equations \eqref{eq:z-evo} and \eqref{eq:s-evo}, we obtain that $\Fz$ and $\Fs$ are also classical solutions.}
\end{proof}

\subsection{Spectral Decompositions of Forward-Backward Equations}

Let $\SBS$ denote the characterizing invariant subspace of $\FM$ as defined in Section \ref{subsec:invarint-subspace-intro} and let $\SBS^\perp$ denote the orthogonal complement of $\SBS$ in $L^2[0,1]$. 
Consider all the orthonormal eigenfunctions $\{\Ff_\ell\}_{\ell \in \mathcal{I}_\lambda}$ of $\FM$ associated with eigenvalues $\{\lambda_\ell\}_{\ell \in \mathcal{I}_\lambda}$,
where ${\mathcal{I}_\lambda}$ denotes the index \textcolor{black}{multiset} for all the non-zero eigenvalues of $\FM$.  By the definition of the characterizing invariant subspace, we have $\SBS=\text{span}(\Ff_\ell, {\ell \in \mathcal{I}_\lambda})$.
Since the graphon operator $\FM$ defined in \eqref{eq:graphon-self-adjoint-operator} is a Hilbert–Schmidt integral operator and hence a compact operator in $\mathcal{L}(L^2[0,1])$, the number of elements in ${\mathcal{I}_\lambda}$ is {either} finite or countably infinite (see for instance \cite[Prop.~1]{ShuangPeterCDC19W1}).
Projecting the processes $\Fz$ and $\Fs$ governed by \eqref{eq:z-evo} and \eqref{eq:s-evo} into  the orthogonal subspaces $(\SBS^\perp)^n$  and  the eigendirections $(\textup{span}(\Ff_\ell))^n$ with $\ell \in \mathcal{I}_\lambda$  yields the following result.

\begin{proposition}
\label{prop:fixepoint-eigenprojection}
 If the Global LQG-GMFG Forward-Backward Equations 
 \eqref{eq:z-evo} and  \eqref{eq:s-evo} have a unique {classical}
 solution pair $(\Fz,\Fs)$,  then the solution pair satisfies the following: for almost all $\theta \in [0,1]$ and for all $t \in [0,T]$,
 \begin{equation}\label{eq:equivalent-representation-fixpoint}
\begin{aligned}
	\Fs_\theta(t) &= \sum_{\ell\in \mathcal{I}_\lambda}\Ff_\ell(\theta) s^\ell(t) + \breve s(t)(1-\sum_{\ell \in \mathcal{I}_\lambda}\langle\Ff_\ell, \mathbf{1} \rangle\Ff_\ell(\theta)))\\ 
	&= \sum_{\ell}\Ff_\ell(\theta) (s^\ell(t)-\breve s(t)) + \breve s(t), \\
	\Fz_\theta (t)   &= \sum_{\ell \in \mathcal{I}_\lambda}\Ff_\ell(\theta) z^\ell(t),
\end{aligned}
\end{equation}
where for all $\ell \in \mathcal{I}_\lambda$,  $z^\ell(t)\Ff_\ell\in (\textup{span}(\Ff_\ell))^n$ and $s^\ell(t) \Ff_\ell \in (\textup{span}(\Ff_\ell))^n$,  $\breve s(t)\big(\mathbf{1}-\sum_{\ell \in \mathcal{I}_\lambda}\langle\Ff_\ell, \mathbf{1}\rangle\Ff_\ell\big)  \in (\SBS^\perp)^n$, and  $z^\ell$, $s^\ell$  and $\breve s \in C([0,T]; \BR^n)$ are given by
\begin{equation} \label{eq:z-ell}
 	\begin{aligned}
 		&\dot{z}^\ell(t)  =   (A_c(t)+ \lambda_\ell D)z^\ell(t)  - \lambda_\ell BR^{-1} B^\TRANS  s^\ell(t),\\
 		&z^\ell(0) = \lambda_\ell \int_{[0,1]} \Ff_\ell(\beta) \bar{x}_\beta(0) d\beta,
 	\end{aligned}
 \end{equation}
 \vspace{-10pt}
  \begin{equation}\label{eq:s-ell}
 	\begin{aligned}
&  		\dot{s}^\ell(t) =   - A_c(t)^\TRANS s^\ell(t) + (QH- \Pi_t D) z^\ell(t)   + QH \eta, \\
 		 & s^\ell(T) = ~Q_T H (z^\ell(T)+\eta),
  	\end{aligned}
 \end{equation}		
 \vspace{-0.3cm}
  \begin{equation}\label{eq:s-breve}
 	\begin{aligned}
 		&\dot{\breve s}(t) =  - 
 		A_c(t)^\TRANS \breve s(t)  + QH \eta, \quad  \breve s(T) = Q_T H \eta,
  	\end{aligned}
 \end{equation}	
 with  $A_c(t)\triangleq (A-BR^{-1}B^\TRANS \Pi_t)$.  \NoEndMark
 \end{proposition}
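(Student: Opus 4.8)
The plan is to project the coupled graphon equations \eqref{eq:z-evo} and \eqref{eq:s-evo} onto the spectral decomposition of $\FM$ and verify that the claimed scalar (i.e. $\BR^n$-valued) ODEs \eqref{eq:z-ell}--\eqref{eq:s-breve} are precisely the projected equations. First I would record the relevant structural facts: $\FM$ is a compact self-adjoint (Hilbert--Schmidt) operator on $L^2[0,1]$, so by the spectral theorem its nonzero eigenvalues $\{\lambda_\ell\}_{\ell\in\mathcal{I}_\lambda}$ together with an orthonormal family of eigenfunctions $\{\Ff_\ell\}_{\ell\in\mathcal{I}_\lambda}$ span the characterizing invariant subspace $\SBS$, and $\SBS^\perp=\ker(\FM)$. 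Since $\SBS$ is $\FM$-invariant and $\FM=\FM^\TRANS$, both $(\SBS)^n$ and $(\SBS^\perp)^n$ are invariant under every operator appearing in \eqref{eq:z-evo}--\eqref{eq:s-evo}: operators of the form $[D'\BI]$ act componentwise and leave any closed subspace of the form $(\mathcal{T})^n$ invariant, while $[D'\FM]$ maps into $(\SBS)^n$ and annihilates $(\SBS^\perp)^n$. Hence the component-wise decomposition $\Fz=\bar\Fz+\Fz^\perp$, $\Fs=\bar\Fs+\Fs^\perp$ of Section \ref{subsec:invarint-subspace-intro} is preserved by the flow, and the two pieces decouple.

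Next I would treat the two pieces separately. For the $(\SBS^\perp)^n$ part: since $[BR^{-1}B^\TRANS\FM]\Fs^\perp=0$ and $[(QH-\Pi_tD)\BI]\Fz^\perp\in(\SBS^\perp)^n$, equation \eqref{eq:z-evo} forces $\Fz^\perp$ to satisfy a homogeneous linear ODE with zero initial data (note $\Fz(0)=[I\FM]\bar\Fx(0)\in(\SBS)^n$), so $\Fz^\perp\equiv 0$; then \eqref{eq:s-evo} restricted to $(\SBS^\perp)^n$ becomes $\dot\Fs^\perp(t)=-[\BA(t)^\TRANS]\Fs^\perp(t)+[QH\BI](\eta\mathbf{1})^\perp$ with terminal condition $[Q_TH\BI](\eta\mathbf{1})^\perp$, where $(\eta\mathbf{1})^\perp$ denotes the projection of the constant function $\eta\mathbf{1}$ onto $(\SBS^\perp)^n$, namely $\eta\big(\mathbf{1}-\sum_{\ell}\langle\Ff_\ell,\mathbf{1}\rangle\Ff_\ell\big)$. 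Because $[\BA(t)^\TRANS]=[(A-BR^{-1}B^\TRANS\Pi_t)^\TRANS\BI]=[A_c(t)^\TRANS\BI]$ acts componentwise as the fixed matrix $A_c(t)^\TRANS$, writing $\Fs^\perp(t)=\breve s(t)\big(\mathbf{1}-\sum_\ell\langle\Ff_\ell,\mathbf{1}\rangle\Ff_\ell\big)$ with $\breve s\in C([0,T];\BR^n)$ reduces this to exactly \eqref{eq:s-breve}. For the eigendirection parts: fix $\ell\in\mathcal{I}_\lambda$ and take the $L^2[0,1]$-inner product of each equation componentwise against $\Ff_\ell$, i.e. apply the bounded projection onto $(\textup{span}(\Ff_\ell))^n$. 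Using $\FM\Ff_\ell=\lambda_\ell\Ff_\ell$, $\langle\Ff_\ell,\Ff_m\rangle=\delta_{\ell m}$, and that componentwise-matrix operators commute with this projection and act as the corresponding matrices on the coefficient vectors $z^\ell(t)=\langle\Fz(t),\Ff_\ell\rangle$-type coefficients and $s^\ell(t)$, equation \eqref{eq:z-evo} projects to $\dot z^\ell=(A_c(t)+\lambda_\ell D)z^\ell-\lambda_\ell BR^{-1}B^\TRANS s^\ell$ with $z^\ell(0)=\lambda_\ell\int\Ff_\ell(\beta)\bar x_\beta(0)d\beta$ (the factor $\lambda_\ell$ coming from $[I\FM]\bar\Fx(0)$ having $\Ff_\ell$-coefficient $\lambda_\ell\langle\bar\Fx(0),\Ff_\ell\rangle$), and \eqref{eq:s-evo} projects to \eqref{eq:s-ell}, where the constant forcing $[QH\BI](\eta\mathbf{1})$ contributes $QH\eta\langle\mathbf{1},\Ff_\ell\rangle$; absorbing $\langle\mathbf{1},\Ff_\ell\rangle$ into the definition of $s^\ell$ versus $\breve s$ reconciles this with the stated form (this is the bookkeeping behind the identity $\Fs_\theta=\sum_\ell\Ff_\ell(\theta)(s^\ell-\breve s)+\breve s$).

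The final step is to reassemble: since $\{\Ff_\ell\}\cup(\text{orthonormal basis of }\SBS^\perp)$ is an orthonormal basis of $L^2[0,1]$ and $\Fz(t)\in(\SBS)^n$, $\Fs(t)\in(L^2[0,1])^n$ for every $t$, we have $\Fz(t)=\sum_\ell z^\ell(t)\Ff_\ell$ and $\Fs(t)=\sum_\ell s^\ell(t)\Ff_\ell+\Fs^\perp(t)$ with $\Fs^\perp$ as above, which is exactly \eqref{eq:equivalent-representation-fixpoint} after evaluating at $\theta$; continuity in $t$ of the coefficients follows from $\Fz,\Fs\in C([0,T];(L^2[0,1])^n)$ (guaranteed by the hypothesis and Lemma \ref{lem:time-varying-graphon-systems}) and boundedness of the projections, and the classical ODE solutions \eqref{eq:z-ell}--\eqref{eq:s-breve} exist uniquely by standard finite-dimensional linear ODE theory. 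The main obstacle I anticipate is purely technical rather than conceptual: carefully justifying that the infinite eigenfunction expansion can be differentiated term by term and that projecting the integral/mild form of \eqref{eq:z-evo}--\eqref{eq:s-evo} onto each $(\textup{span}(\Ff_\ell))^n$ commutes with the evolution operators $\phi_1^\FM$ and $\phi_2$ — this needs the observation that $(\textup{span}(\Ff_\ell))^n$ and $(\SBS^\perp)^n$ are invariant under $[\BA(t)+D\FM]$ and $[-\BA(t)^\TRANS]$ for every $t$, so the evolution operators restrict to these subspaces, and on $(\textup{span}(\Ff_\ell))^n$ the generator acts as the matrix $A_c(t)+\lambda_\ell D$ (resp. $-A_c(t)^\TRANS$), making $\phi_1^\FM$ restrict to the finite-dimensional fundamental matrix of \eqref{eq:z-ell}. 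Once this invariance/commutation is in place the rest is routine linear algebra and coefficient-matching.
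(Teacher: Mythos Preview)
Your approach is essentially the same as the paper's: both decompose $(L^2[0,1])^n$ into the eigendirections $(\textup{span}(\Ff_\ell))^n$ and the kernel $(\SBS^\perp)^n$, observe that every operator appearing in \eqref{eq:z-evo}--\eqref{eq:s-evo} leaves these subspaces invariant, project, and read off the finite-dimensional ODEs. Your identification of the bookkeeping behind the $\langle\Ff_\ell,\mathbf{1}\rangle$ factor (reconciled through the identity $\Fs_\theta=\sum_\ell\Ff_\ell(\theta)(s^\ell-\breve s)+\breve s$) is exactly the redefinition the paper is implicitly making when it passes from the raw projected equation to \eqref{eq:s-ell}.

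One small simplification the paper exploits that you can adopt: the technical obstacle you anticipate (term-by-term differentiation, commutation of projections with the evolution operators) is bypassed entirely by the hypothesis that $(\Fz,\Fs)$ is a \emph{classical} solution. This means $\dot\Fz,\dot\Fs\in C([0,T];(L^2[0,1])^n)$, so for each fixed $t$ the derivative is an honest element of $(L^2[0,1])^n$ and the bounded projection onto $(\textup{span}(\Ff_\ell))^n$ applies directly to both sides of the differential equation---no need to pass through the mild form or worry about series convergence. The invariance of the subspaces then gives the projected ODEs immediately.
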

 \begin{proof}
 For $t\in [0,T]$, 	let  the component-wise decomposition of $\Fs(t)$ be given as follows: $\Fs(t) = \sum_{\ell\in {\mathcal{I}_\lambda}} \Fs^{\Ff_\ell}(t) + \breve \Fs(t), ~~ \Fs^{\Ff_\ell }(t)\in (\text{span}(\Ff_\ell))^n,  \breve \Fs(t) \in (\SBS^{\perp})^n $,
 where $\SBS^{\perp}$ denotes the orthogonal complement subspace of $\SBS \triangleq \text{span}(\{\Ff_\ell\}_{\ell \in \mathcal{I}_\lambda})$ in $L^2[0,1]$ and ${\mathcal{I}_\lambda}$ denotes the index multiset for all the non-zero eigenvalues of $\FM$.  Similarly define $\Fz^{\Ff_\ell}(t)$ and $\breve \Fz(t)$. 
The following operators 
 $\BA(t)^\TRANS$, $[(Q H- \Pi_t D)\BI]$, $[Q H\BI]$, $[\BA(t)+D \FM]$ and $[BR^{-1}B^\TRANS \FM]$, $[I \FM] $ and $[Q_T H  \BI]$ share the same invariant subspaces 
 $(\Ff_\ell)^n$, $(\SBS^\perp)^n$ and $\SBS^n$ (see \cite[Proposition 3]{ShuangPeterTCNS20}). Hence the dynamics  \eqref{eq:z-evo} and \eqref{eq:s-evo} can be component-wise decoupled into different subspaces (similar to that in \cite[Lem. 2]{ShuangPeterTCNS20}).  Furthermore if we let $\Fz^{\Ff_\ell}(t) = z^\ell(t)\Ff_\ell$ and $\Fs^{\Ff_\ell}(t) = s^\ell(t)\Ff_\ell$, then for any matrix $F\in \BR^{n\times n}$,
 \[
 \begin{aligned}
 	&[F\FM]\Fz^{\Ff_\ell}(t) = \lambda_\ell [F \BI]\Fz^{\Ff_\ell}(t) = \lambda_\ell Fz^\ell(t) \Ff_\ell \in (L^2[0,1])^n,\\
 	& [F\FM]\Fs^{\Ff_\ell}(t) = \lambda_\ell [F \BI]\Fs^{\Ff_\ell}(t)=\lambda_\ell Fs^\ell(t) \Ff_\ell \in (L^2[0,1])^n,\\
 	& [F\FM]\breve \Fz(t) = [F\FM]\breve \Fs(t)=0 \in (L^2[0,1])^n, \quad \forall t \in [0,t].
 \end{aligned}
 \]
We note that $\breve \Fz(t) = \Fz(t) - \sum_{\ell\in {\mathcal{I}_\lambda}}\Fz^{\Ff_\ell}(t)= \Fz(t) - \sum_{\ell\in {\mathcal{I}_\lambda}} z^\ell(t)\Ff_\ell$ and similar representations hold for $\breve \Fs$. 
{We note that since $\Fs$ and $\Fz$ are classical solutions, $\dot{\Fs}(\cdot)$ and $\dot{\Fz}(\cdot)$ are well defined in $C([0,T]; (L^2[0,1])^n)$. Hence the projections of $\dot{\Fs}(t)$ and $\dot{\Fz}(t)$ into eigen subspaces are well-defined for $t \in [0,T]$.}
By projecting both sides of \eqref{eq:z-evo} and \eqref{eq:s-evo} into different eigendirections $\{(\Ff_\ell)^n\}_{\ell\in {\mathcal{I}_\lambda}}$ and the orthogonal subspace $(S^{\perp})^n$, we obtain the following:
\begin{equation}
	\begin{aligned}
    & \dot{\Fs}^{\Ff_\ell}(t) = -\big[\BA(t)^\TRANS\big] \Fs^{\Ff_\ell}(t) + [(Q H- \Pi_t D)\BI]\Fz^{\Ff_\ell}(t)\\
    & \qquad\qquad \qquad  + [Q H\BI](\langle\Ff_\ell, \mathbf{1}\rangle \eta\Ff_\ell )  \\[3pt]
	&\Fs^{\Ff_\ell} (T)  = [Q_T H  \BI] (\Fz^{\Ff_\ell}(T)+\langle\Ff_\ell, \mathbf{1}\rangle \eta\Ff_\ell) \in (L^2[0,1])^n,
	\end{aligned}
\end{equation}
~
\begin{equation}
	\begin{aligned}
	& \dot \Fz^{\Ff_\ell}(t) = [\BA(t)+ \lambda_\ell  D \BI]\Fz^{\Ff_\ell}(t)  - [\lambda_\ell BR^{-1}B^\TRANS \BI] \Fs^{\Ff_\ell}(t) \\
	 & \Fz^{\Ff_\ell}(0)   =  \Big(\lambda_\ell \int_{[0,1]} \Ff_\ell(\beta) \bar{x}_\beta(0) d\beta\Big) \Ff_\ell,
	 	\end{aligned}
\end{equation}
\begin{equation}
	\begin{aligned}
	\dot {\breve \Fz}(t) & = [\BA(t)] \breve \Fz(t), \qquad 
	 \Fz(0)   = 0 \in (\SBS^\perp)^n,	
	 \end{aligned}
\end{equation}
(which implies $\breve \Fz(t)=0$ for all $t\in [0,T]$, and hence)
\begin{equation}
	\begin{aligned}
    &\dot{\breve \Fs}(t) = -\big[\BA(t)^\TRANS\big] \breve \Fs(t)  
    + [Q H\BI](\eta(\mathbf{1} -\sum_{\ell\in \mathcal{I}_\lambda} \langle \Ff_\ell, \mathbf{1} \rangle\Ff_\ell) ),  \\[3pt]
	&\breve \Fs (T)  = [Q_T H  \BI] (\eta(\mathbf{1} -\sum_{\ell \in \mathcal{I}_\lambda} \langle \Ff_\ell, \mathbf{1}\rangle \Ff_\ell)) \in (\SBS^\perp)^n.
	\end{aligned}
\end{equation}
Let $\breve \Fs(t) = \breve{s}(t) (\mathbf{1} -\sum_{\ell\in {\mathcal{I}_\lambda}} \langle\Ff_\ell, \mathbf{1}\rangle\Ff_\ell) \in (\SBS^\perp)^n$.
Equivalently, we have \eqref{eq:equivalent-representation-fixpoint}, \eqref{eq:s-ell}, \eqref{eq:z-ell} and \eqref{eq:s-breve}. We note that  $\breve \Fz(t)$ for $t\in[0,T]$ is always zero. 
\end{proof}

\begin{remark}[Solution Complexity]
It is worth emphasizing that  \eqref{eq:z-ell}, \eqref{eq:s-ell} and  \eqref{eq:s-breve} are all $n$-dimensional differential equations.  \eqref{eq:s-breve} always has a solution, but the joint equations \eqref{eq:z-ell} and \eqref{eq:s-ell} require extra conditions for the existence of a unique solution pair.  
The solution pair to the joint equations \eqref{eq:z-ell} and \eqref{eq:s-ell}  can be numerically computed via fixed-point iterations (see {Algorithm~1} in Appendix-\ref{sec:appendix-algorithms}). 
 Let  $d_\textup{dist}$ denote the number of distinct non-zero eigenvalues of $\FM$. Then each agent only needs to solve  one $n$-dimensional differential equation as \eqref{eq:s-breve} and $d_\textup{dist}$ number of  forward-backward joint equations as \eqref{eq:z-ell} and \eqref{eq:s-ell}, each of which is  $n$-dimensional. We note that $d_\textup{dist}\leq \textup{rank}(\FM)$. If $d_\textup{dist}$ is infinite, one may rely on approximations via a finite number of eigendirections. 
 A special case of the equations  \eqref{eq:z-ell} and \eqref{eq:s-ell} is studied in \cite{ShuangRinelPeterCIS20}.
\end{remark}


%
%
\begin{proposition} {\qed}
Let $\Pi_{(\cdot)}$ denote the solution to \eqref{eq:Finite-Riccati} and let $A_c(t)\triangleq (A-BR^{-1}B^\TRANS \Pi_t)$. 
	If the following $n\times n$-dimensional non-symmetric Riccati equation 
	\begin{equation}\label{eq:l-riccati}
	\begin{aligned}
				-\dot{o}_t^\ell  = & A_c(t)^\TRANS o_t^\ell + o_t^\ell (A_c(t) + \lambda_\ell D)  - \lambda_\ell o_t^\ell BR^{-1}B^\TRANS o_t^\ell \\
		&- (QH-\Pi_t D), \quad o^\ell(T)= Q_TH,
	\end{aligned}
	\end{equation}
	has a unique solution, 
	then the joint equations \eqref{eq:z-ell} and \eqref{eq:s-ell}  have a unique solution pair $( z^\ell,s^\ell)$ in $C([0,T];
		\BR^n)$ and furthermore $s^\ell$ and $z^\ell$, $\ell \in \mathcal{I}_\lambda$, are respectively given by  
 	\begin{align}
 		&\dot{z}^\ell(t)  =   (A_c(t)+ \lambda_\ell D)z^\ell(t)  - \lambda_\ell BR^{-1} B^\TRANS  (e^\ell(t)+o^\ell_t z^\ell(t)), \label{eq:decoupled-zell}\\
 	& s^\ell (t) = o_t^\ell z^\ell (t) + e^\ell(t),\label{eq:decoupled-sell}
 	\end{align}
 with the initial condition $z^\ell(0) = \lambda_\ell \int_{[0,1]} \Ff_\ell(\beta) \bar{x}_\beta(0) d\beta,$
 where $e^\ell(t)$ is given by 
 \begin{equation}
 	\dot{e}^\ell(t)=\big(-A_c^\TRANS(t)+ \lambda_\ell o_t^\ell  BR^{-1} B^\TRANS \big)e^\ell(t) + QH \eta
 \end{equation}
 with terminal condition $e^\ell(T)  = Q_T H \eta$. 
 \NoEndMark
\end{proposition}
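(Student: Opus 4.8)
The plan is to use the classical Riccati decoupling (invariant embedding) technique for linear two-point boundary value problems, with \eqref{eq:l-riccati} serving as the decoupling Riccati equation and $e^\ell$ absorbing the affine forcing term $QH\eta$. Concretely, I would first posit the ansatz $s^\ell(t) = o^\ell_t z^\ell(t) + e^\ell(t)$, differentiate it in $t$, and substitute $\dot z^\ell(t)$ from \eqref{eq:z-ell} (with $s^\ell$ replaced by $o^\ell_t z^\ell + e^\ell$) on one side and $\dot s^\ell(t)$ from \eqref{eq:s-ell} (again with $s^\ell$ replaced) on the other. Matching the terms that are linear in $z^\ell(t)$ forces $o^\ell_{(\cdot)}$ to satisfy exactly \eqref{eq:l-riccati}; matching the remaining ($z^\ell$-free) terms forces $e^\ell$ to satisfy $\dot e^\ell(t) = \big(-A_c(t)^\TRANS + \lambda_\ell o^\ell_t BR^{-1}B^\TRANS\big) e^\ell(t) + QH\eta$; and matching the two terminal conditions yields $o^\ell(T) = Q_T H$ and $e^\ell(T) = Q_T H\eta$. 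This algebraic bookkeeping is the heart of the derivation and is entirely routine.

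Next I would turn the ansatz into an existence statement. By hypothesis \eqref{eq:l-riccati} has a solution $o^\ell$ (necessarily $C^1$) on all of $[0,T]$; since $\Pi_{(\cdot)}$ from \eqref{eq:Finite-Riccati} is $C^1$ on $[0,T]$, the matrix $A_c(\cdot)$ is continuous, so the $e^\ell$-equation is a linear ODE with continuous coefficients and a prescribed terminal value, hence has a unique solution $e^\ell \in C([0,T];\BR^n)$. Substituting $s^\ell = o^\ell z^\ell + e^\ell$ into \eqref{eq:z-ell} gives the closed-loop equation \eqref{eq:decoupled-zell}, a linear ODE in $z^\ell$ with continuous coefficients and prescribed initial value $z^\ell(0) = \lambda_\ell \int_{[0,1]} \Ff_\ell(\beta)\bar x_\beta(0)\,d\beta$, hence has a unique solution $z^\ell \in C([0,T];\BR^n)$. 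Defining $s^\ell$ by \eqref{eq:decoupled-sell} and reversing the computation above shows that the pair $(z^\ell, s^\ell)$ solves \eqref{eq:z-ell}--\eqref{eq:s-ell} together with both boundary conditions, which establishes existence.

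Finally, for uniqueness I would take an arbitrary solution pair $(\tilde z^\ell, \tilde s^\ell)$ of \eqref{eq:z-ell}--\eqref{eq:s-ell} and set $w(t) \triangleq o^\ell_t \tilde z^\ell(t) + e^\ell(t) - \tilde s^\ell(t)$. Using the Riccati equation for $o^\ell$, the ODE for $e^\ell$, and \eqref{eq:z-ell}--\eqref{eq:s-ell}, a direct computation reduces $w$ to the homogeneous linear ODE $\dot w(t) = \big(-A_c(t)^\TRANS + \lambda_\ell o^\ell_t BR^{-1}B^\TRANS\big) w(t)$, while the matched terminal data force $w(T) = 0$; hence $w \equiv 0$, i.e.\ $\tilde s^\ell = o^\ell \tilde z^\ell + e^\ell$. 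Inserting this back into \eqref{eq:z-ell} shows $\tilde z^\ell$ solves the same linear ODE \eqref{eq:decoupled-zell} with the same initial condition, so $\tilde z^\ell = z^\ell$ and therefore $\tilde s^\ell = s^\ell$. The one place where a genuine hypothesis is used is the global solvability on $[0,T]$ of the non-symmetric (quadratic) Riccati equation \eqref{eq:l-riccati}: such equations may exhibit finite escape time, so this is precisely the obstruction that must be assumed away, and I expect it to be the only nontrivial point; once it is granted, everything else is elementary linear-ODE theory (Picard iteration / Gr\"onwall).
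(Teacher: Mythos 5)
Your proposal is correct and follows essentially the same route as the paper: the standard Riccati decoupling via the ansatz $s^\ell(t) = o^\ell_t z^\ell(t) + e^\ell(t)$, deriving the linear backward ODE for $e^\ell$, then solving the closed-loop forward equation for $z^\ell$ and verifying the pair satisfies \eqref{eq:z-ell}--\eqref{eq:s-ell}. Your explicit uniqueness step (showing $w \triangleq o^\ell_t \tilde z^\ell + e^\ell - \tilde s^\ell$ obeys a homogeneous linear ODE with $w(T)=0$, hence $w\equiv 0$) is actually more complete than the paper's argument, which only constructs a solution and asserts it satisfies the joint equations.
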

\begin{proof}
{The proof follows that for decoupling finite dimensional joint forward-backward  equations found in \cite{huang2012social,bensoussan2016linear,salhab2016collective}. }
	Let $e^\ell(t) = s^\ell(t) - o_t^\ell z^\ell(t) $.
	Then 
	\begin{equation} \label{eq:l-e}
		\begin{aligned}
			\dot{e}^\ell(t)  =& \dot s^\ell(t) - \dot o_t^\ell z^\ell(t)- o_t^\ell \dot z^\ell(t)\\
			  =& - A_c^\TRANS(t) s^\ell(t) + (QH- \Pi_t D) z^\ell(t)  + QH \eta \\
			& +\Big(A_c(t)^\TRANS o_t^\ell + o_t^\ell (A_c(t) + \lambda_\ell D)  - \lambda_\ell o_t^\ell BR^{-1}B^\TRANS o_t^\ell \\
		&\qquad - (QH-\Pi_t D)\Big)z^\ell(t)\\
			& - \dot o_t^\ell\Big( (A_c(t)+ \lambda_\ell D)z^\ell(t)  - \lambda_\ell BR^{-1} B^\TRANS  s^\ell(t)\Big)\\
			=& \big(-A_c^\TRANS(t)+ \lambda_\ell o_t^\ell  BR^{-1} B^\TRANS \big)e^\ell(t) + QH \eta 
		\end{aligned}
	\end{equation}
	with terminal condition $e^\ell(T) = s^\ell(T)- o_T^\ell z^\ell(T) = Q_T H \eta$.
	
	If \eqref{eq:l-riccati} has a unique solution,  then
	the solution $e^\ell$ to \eqref{eq:l-e} always exists. Based on $e^\ell$ and $o^\ell$, we can then obtain $z^\ell$ following \eqref{eq:decoupled-zell}.  {Finally, we can obtain $s^\ell$ based on \eqref{eq:decoupled-sell}. Clearly, $z^\ell$ and $s^\ell$ generated from the procedure above always satisfy equations \eqref{eq:z-ell} and \eqref{eq:s-ell}.}
\end{proof}
\section{Solutions via an Operator Riccati Equation} \label{sec:Riccati-based-soln}

\subsection{Sufficient Conditions for Existence and Uniqueness}
Following the standard idea for decoupling finite dimensional coupled forward-backward differential equations in \cite{huang2012social,bensoussan2016linear,salhab2016collective},  
we decouple the infinite dimensional coupled forward-backward equations  \eqref{eq:z-evo} and \eqref{eq:s-evo} based on the following non-symmetric operator Riccati equation 
\begin{equation}\label{eq:inf-Riccati}
\begin{aligned}
	- \dot \BP  =& \BA(t)^\TRANS \BP + \BP \BA(t)+ \BP [D \FM] {-} \BP [BR^{-1}B^\TRANS \FM] \BP  \\
	& \qquad \quad  {-} [(Q H - \Pi_t D )\BI],~~ \BP(T)=[Q_T H\BI] 
\end{aligned}
 \end{equation} 
 where $\BA (t) \triangleq (A - BR^{-1}B^\TRANS \Pi_t) \BI \in \mathcal{L}({(L^2[0,1])^n})$ and $\Pi_{(\cdot)}$ solves the $n\times n$-dimensional Riccati equation in \eqref{eq:Finite-Riccati}. 
{

Let $I \in \BR$ denote a compact time interval. 
A mapping  $\mathbb{F}: I \to \mathcal{L}((L^2[0,1])^n)$ is \emph{strongly continuous} if $\mathbb{F}(\cdot)\Fv$ is continuous  for all $\Fv$ in  $(L^2[0,1])^n$. 
A sequence $\{{\mathbb{F}_n}\}$ of strongly continuous mappings \emph{converges strongly} to ${\mathbb{F}}$ if
for all $ \Fv \in (L^2[0,1])^n$, the following holds:
$	\lim_{n\to \infty}\sup_{t\in I }\|{\mathbb{F}_n}(t)\Fv - {\mathbb{F}}(t)\Fv\|_2 =0.
$ 
Let $C_s(I;\mathcal{L}((L^2[0,1])^n)$ denote the set of strongly continuous mappings $I$ to $\mathcal{L}((L^2[0,1])^n$ under the strong convergence defined above. 
The strong continuity of $\BP \in C_s(I;\mathcal{L}((L^2[0,1])^n)$ implies  that  for each $\Fv \in (L^2[0,1])^n$, $\BP(\cdot) \Fv$ is bounded over the compact time interval $I$. Hence by the Uniform Boundedness Principle, $\|\BP(\cdot)\|_{\textup{op}}$ is uniformly bounded,
that is, $ \sup_{t\in I}  \|\BP(t)\|_{\textup{op}} <\infty$ (see \cite{bensoussan2007representation}).}
Let $C_u([0, T]; \mathcal{L}((L^2[0,1])^n)$ denote the space of strongly continuous mappings endowed with uniform norm
	 $\|\mathbb{F}\|:= \sup_{t\in I}\|\mathbb{F}(t)\|_\textup{op}$.
	 We note that for any compact interval $I \in \BR$,  the spaces $C_u(I; \mathcal{L}((L^2[0,1])^n)$  and $C_s(I; \mathcal{L}((L^2[0,1])^n)$ are equal as sets but their topologies are different (see \cite{bensoussan2007representation}).
\begin{definition}[Mild Solution to Operator Riccati Eqn.]
 $\BP \in C_s([0,T];\mathcal{L}((L^2[0,1])^n)$ is a mild solution to \eqref{eq:inf-Riccati} if it satisfies the following equation for all $\Fv \in (L^2[0,1])^n$,
\begin{equation}\label{eq:mild-solution}
\begin{aligned}
\BP(t)\Fv  &= \BP(T) \Fv+\int_t^T \Big(\BA(\tau)^\TRANS \BP(\tau) + \BP(\tau) (\BA(\tau) +  [D \FM])\\
	& \quad \quad {-} \BP(\tau) [BR^{-1}B^\TRANS \FM] \BP(\tau)  {-} [(Q H - \Pi_\tau D )\BI]\Big)\Fv d\tau 
\end{aligned}
\end{equation} 
 with terminal condition $  P(T)=  [Q_T H\BI]$.
\end{definition}

\begin{proposition}\label{prop:strong-ricc-sln}
If the mild solution $\BP$ in $C_s([0,T];\mathcal{L}((L^2[0,1])^n)$ to the operator Riccati equation \eqref{eq:inf-Riccati} exists, then it is strongly differentiable (that is, for any $\Fv \in (L^2[0,1])^n$, $\BP(\cdot) \Fv$ is differentiable). \NoEndMark
\end{proposition}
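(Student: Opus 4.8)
The plan is to differentiate the integral identity \eqref{eq:mild-solution} directly, via the fundamental theorem of calculus for Bochner integrals. Write
\[
\mathcal{F}(\tau) \triangleq \BA(\tau)^\TRANS \BP(\tau) + \BP(\tau)\bigl(\BA(\tau) + [D\FM]\bigr) - \BP(\tau)[BR^{-1}B^\TRANS \FM]\BP(\tau) - [(QH - \Pi_\tau D)\BI],
\]
the operator in the integrand of \eqref{eq:mild-solution}, so that $\BP(t)\Fv = \BP(T)\Fv + \int_t^T \mathcal{F}(\tau)\Fv\, d\tau$ for every $\Fv \in (L^2[0,1])^n$. If one shows that $\tau \mapsto \mathcal{F}(\tau)\Fv$ is continuous from $[0,T]$ into $(L^2[0,1])^n$ for each fixed $\Fv$, then the Bochner integral $t \mapsto \int_t^T \mathcal{F}(\tau)\Fv\, d\tau$ is continuously differentiable with derivative $-\mathcal{F}(t)\Fv$; hence $\BP(\cdot)\Fv$ is differentiable with $\frac{d}{dt}\BP(t)\Fv = -\mathcal{F}(t)\Fv$, which is exactly strong differentiability of $\BP$.

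So everything reduces to continuity of $\tau \mapsto \mathcal{F}(\tau)\Fv$. First I would record the ingredients: (i) $\BP(\cdot)$ is strongly continuous by hypothesis and, by the Uniform Boundedness Principle (as noted just before the definition of mild solution), $c_\BP \triangleq \sup_{t\in[0,T]}\|\BP(t)\|_{\textup{op}} < \infty$; (ii) $\Pi_{(\cdot)}$ solves the matrix Riccati equation \eqref{eq:Finite-Riccati} with constant coefficients, hence is $C^1$ and in particular continuous, so $t\mapsto \BA(t)$, $t\mapsto \BA(t)^\TRANS$ and $t\mapsto [(QH-\Pi_t D)\BI]$ are continuous in the uniform operator topology (the embedding $F\mapsto [F\BI]$ is linear and bounded by the spectral norm of $F$), and in particular strongly continuous and uniformly operator-bounded on the compact interval $[0,T]$; (iii) $[D\FM]$ and $[BR^{-1}B^\TRANS\FM]$ are fixed bounded operators. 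With these in hand, the four summands of $\mathcal{F}(\tau)\Fv$ are handled term by term. The term $[(QH - \Pi_\tau D)\BI]\Fv$ is continuous directly from (ii). For products of the form $S(\tau)T(\tau)\Fv$ I would use the split
\[
S(\tau)T(\tau)\Fv - S(\tau_0)T(\tau_0)\Fv = S(\tau)\bigl(T(\tau)\Fv - T(\tau_0)\Fv\bigr) + \bigl(S(\tau)-S(\tau_0)\bigr)\bigl(T(\tau_0)\Fv\bigr),
\]
bounding the first piece by a uniform operator bound on $S(\cdot)$ times $\|T(\tau)\Fv - T(\tau_0)\Fv\|$, and the second piece by strong continuity of $S(\cdot)$ evaluated at the fixed vector $T(\tau_0)\Fv$. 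Taking $S(\tau)=\BA(\tau)^\TRANS$, $T(\tau)=\BP(\tau)$ gives continuity of $\BA(\tau)^\TRANS\BP(\tau)\Fv$; taking $S(\tau)=\BP(\tau)$ (bounded by $c_\BP$) and $T(\tau)=\BA(\tau)+[D\FM]$ (with $T(\cdot)\Fv$ continuous by (ii)) gives continuity of $\BP(\tau)(\BA(\tau)+[D\FM])\Fv$; and taking $S(\tau)=\BP(\tau)$, $T(\tau)=[BR^{-1}B^\TRANS\FM]\BP(\tau)$ (with $T(\cdot)\Fv = [BR^{-1}B^\TRANS\FM]\bigl(\BP(\cdot)\Fv\bigr)$ continuous, being a fixed bounded operator composed with the continuous map $\BP(\cdot)\Fv$) gives continuity of the quadratic term $\BP(\tau)[BR^{-1}B^\TRANS\FM]\BP(\tau)\Fv$.

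The main obstacle is precisely this last, quadratic, term: a product of two merely strongly continuous operator-valued families need not be strongly continuous, so the argument genuinely needs the uniform operator bound $c_\BP$ furnished by the Uniform Boundedness Principle for the outer copy of $\BP$, combined with the observation that the inner factor $[BR^{-1}B^\TRANS\FM]\BP(\cdot)\Fv$ is continuous as an $(L^2[0,1])^n$-valued map. Once continuity of $\tau\mapsto\mathcal{F}(\tau)\Fv$ is established, the conclusion follows at once from the fundamental theorem of calculus for continuous Bochner integrands; in fact the argument yields slightly more than claimed, namely that $\BP(\cdot)\Fv$ is \emph{continuously} strongly differentiable with $\frac{d}{dt}\BP(t)\Fv=-\mathcal{F}(t)\Fv$, i.e. $\BP$ satisfies \eqref{eq:inf-Riccati} in the strong sense.
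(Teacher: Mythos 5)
Your proposal is correct and follows essentially the same route as the paper's proof, which simply observes that the integrand on the right-hand side of \eqref{eq:mild-solution} is continuous and concludes differentiability of $\BP(\cdot)\Fv$ from the integral identity. You have merely carried out the verification the paper leaves implicit (uniform boundedness of $\|\BP(\cdot)\|_{\textup{op}}$ plus the product-split argument for the bilinear and quadratic terms), and that verification is sound.
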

\begin{proof}
	If the mild solution $\BP$ in $C_s([0,T];\mathcal{L}((L^2[0,1])^n)$ to the operator Riccati equation \eqref{eq:inf-Riccati} exists, then one can verify that the integrand on the right-hand side of \eqref{eq:mild-solution} is continuous. Hence for any $\Fv$, $\BP(\cdot)\Fv$ is differentiable. 
\end{proof}

\begin{lemma}[Product Rule]\label{lem:product-rule}
	Let $\BP$ be the mild solution to \eqref{eq:inf-Riccati} and $(\Fz, \Fs)$ be the classical solution pair to \eqref{eq:z-evo} and \eqref{eq:s-evo}. 
	Then following product rule holds
\begin{equation}\label{eq:product-rule}
\begin{aligned}
		\frac {d( \BP(t) \Fz(t))}{dt}    =& \lim_{\varepsilon \to 0} \frac{(\BP({t+\varepsilon}) - \BP(t)) \Fz({t}) }{\varepsilon}  +   \BP(t) \dot{\Fz}(t).   \\
\end{aligned}
\end{equation} 
\NoEndMark
\end{lemma}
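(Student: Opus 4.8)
The plan is to establish the product rule in \eqref{eq:product-rule} by reducing the nonstandard term to a genuine derivative, using the two regularity facts available to us: by Proposition~\ref{prop:strong-ricc-sln}, $\BP(\cdot)\Fv$ is strongly differentiable for every fixed $\Fv \in (L^2[0,1])^n$, and by Lemma~\ref{eq:classical-solution-pair} (applied in Lemma~\ref{eq:lem-fixed-point}) the forward process $\Fz$ is a classical solution, hence $\Fz \in C^1((0,T];(L^2[0,1])^n)$ with $\dot{\Fz} \in C([0,T];(L^2[0,1])^n)$. First I would write the standard difference quotient for $t \mapsto \BP(t)\Fz(t)$ and split it in the usual way:
\begin{equation*}
\frac{\BP(t+\varepsilon)\Fz(t+\varepsilon) - \BP(t)\Fz(t)}{\varepsilon}
= \frac{(\BP(t+\varepsilon) - \BP(t))\Fz(t)}{\varepsilon}
+ \BP(t+\varepsilon)\,\frac{\Fz(t+\varepsilon) - \Fz(t)}{\varepsilon}.
\end{equation*}

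Next I would argue that the second term on the right converges to $\BP(t)\dot{\Fz}(t)$ as $\varepsilon \to 0$. This is where one must be a little careful, because $\BP(t+\varepsilon)$ varies with $\varepsilon$; the argument is the familiar ``joint limit'' manipulation. Write $\BP(t+\varepsilon)\frac{\Fz(t+\varepsilon)-\Fz(t)}{\varepsilon} = \BP(t+\varepsilon)\dot{\Fz}(t) + \BP(t+\varepsilon)\big(\frac{\Fz(t+\varepsilon)-\Fz(t)}{\varepsilon} - \dot{\Fz}(t)\big)$. Since $\BP \in C_s([0,T];\mathcal{L}((L^2[0,1])^n))$, we have $\BP(t+\varepsilon)\dot{\Fz}(t) \to \BP(t)\dot{\Fz}(t)$ by strong continuity applied to the fixed vector $\dot{\Fz}(t)$. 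For the remainder term, the uniform bound $\sup_{\tau \in [0,T]}\|\BP(\tau)\|_{\textup{op}} =: c_P < \infty$ — which holds by the Uniform Boundedness Principle as noted just before the Mild Solution definition — gives $\big\|\BP(t+\varepsilon)\big(\tfrac{\Fz(t+\varepsilon)-\Fz(t)}{\varepsilon} - \dot{\Fz}(t)\big)\big\| \le c_P\,\big\|\tfrac{\Fz(t+\varepsilon)-\Fz(t)}{\varepsilon} - \dot{\Fz}(t)\big\| \to 0$ by differentiability of $\Fz$ at $t$. Hence the second term converges to $\BP(t)\dot{\Fz}(t)$.

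Once that limit is secured, the identity follows by rearrangement: the left-hand difference quotient converges (because the second term on the right does and we are about to show the first does too in the sense of the claimed limit), and subtracting the limit $\BP(t)\dot{\Fz}(t)$ of the second term from it shows that $\frac{(\BP(t+\varepsilon)-\BP(t))\Fz(t)}{\varepsilon}$ has a limit equal to $\frac{d(\BP(t)\Fz(t))}{dt} - \BP(t)\dot{\Fz}(t)$, which is exactly \eqref{eq:product-rule} after moving that term to the other side. Strictly speaking one should first check that $t \mapsto \BP(t)\Fz(t)$ is differentiable; this can be seen directly from the splitting above, since the first term $\frac{(\BP(t+\varepsilon)-\BP(t))\Fz(t)}{\varepsilon}$ converges by strong differentiability of $\BP(\cdot)$ applied to the fixed vector $\Fz(t)$ (Proposition~\ref{prop:strong-ricc-sln}), and the second term converges as shown. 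So both pieces on the right converge, the sum converges, and the decomposition \eqref{eq:product-rule} is the bookkeeping identity recording which limit is which.

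The main obstacle — really the only subtlety — is the interchange handled in the second paragraph: one cannot simply ``plug in'' $\varepsilon = 0$ inside $\BP(t+\varepsilon)$ while the other factor is still a difference quotient, so the uniform operator bound on $\BP(\cdot)$ is essential to control the cross term, and strong continuity (not mere strong differentiability) of $\BP$ is what delivers $\BP(t+\varepsilon)\dot{\Fz}(t) \to \BP(t)\dot{\Fz}(t)$. Everything else is routine once those two ingredients are in place; in particular no norm-continuity of $\BP$ is needed, which matters because $\BP$ is only asserted to lie in $C_s$, not $C_u$.
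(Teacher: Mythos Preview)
Your argument is correct and follows essentially the same route as the paper: split the difference quotient, then control the cross term using strong continuity of $\BP$ together with the uniform bound $\sup_{t}\|\BP(t)\|_{\textup{op}}<\infty$. The only cosmetic difference is the choice of splitting: the paper writes $(\BP_{t+\varepsilon}-\BP_t)\Fz_{t+\varepsilon}+\BP_t(\Fz_{t+\varepsilon}-\Fz_t)$ and then further decomposes the first piece, which forces it to expand $\Fz_{t+\varepsilon}-\Fz_t$ via the ODE \eqref{eq:z-evo}; your splitting $(\BP_{t+\varepsilon}-\BP_t)\Fz_t+\BP_{t+\varepsilon}\frac{\Fz_{t+\varepsilon}-\Fz_t}{\varepsilon}$ avoids that detour and uses only abstract differentiability of $\Fz$, which is slightly cleaner.
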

\begin{proof}
To simplify the notation in the proof, we use $\BP_t$ (resp. $\Fz_t$) to denote $\BP(t)$ (resp. $\Fz(t)$). 
Based on the definition of (classical) differentiation, we have
	\begin{equation*}
\begin{aligned}
	&	\frac {d( \BP(t) \Fz(t))}{dt}   := \lim_{\varepsilon \to 0} \frac{\BP_{t+\varepsilon} \Fz_{t+\varepsilon}  - \BP_t \Fz_t }{\varepsilon} \\
		& =  \lim_{\varepsilon \to 0} \frac{ (\BP_{t+\varepsilon} - \BP_t) \Fz_{t+\varepsilon}}{\varepsilon} + \lim_{\varepsilon \to 0} \frac{ \BP_{t} (\Fz_{t+\varepsilon}-\Fz_t )  }{\varepsilon} \\
		& = \lim_{\varepsilon \to 0} \frac{ (\BP_{t+\varepsilon} - \BP_t) \Fz_{t}}{\varepsilon}  + \BP_{t}\dot{\Fz}_t+ \lim_{\varepsilon \to 0} \frac{(\BP_{t+\varepsilon} - \BP_t) (\Fz_{t+\varepsilon}-\Fz_{t} ) }{\varepsilon} .
\end{aligned}
\end{equation*} 
Hence to prove the product rule, it is enough to show $ \lim_{\varepsilon \to 0} \frac{(\BP_{t+\varepsilon} - \BP_t) (\Fz_{t+\varepsilon}-\Fz_{t} ) }{\varepsilon} =0$, for all $ t \in [0,T]$. 
Based on equation \eqref{eq:z-evo} and the fact that $\Fz$ and $\Fs$ are continuous in $C([0,T]; (L^2[0,1])^n)$, we know
\begin{equation*}
	\begin{aligned}
	\Fz_{t+\varepsilon} -\Fz_t = \int_t^{t+\varepsilon} \left([\BA(\tau)+ DM]\Fz_\tau -[BR^{-1}B^\TRANS \FM] \Fs_\tau \right)d\tau\\
	=  \left([\BA(t)+ DM]\Fz_t -[BR^{-1}B^\TRANS \FM] \Fs_t \right) \varepsilon + o(\varepsilon).
	\end{aligned}
\end{equation*}
{The second equality is due to the fact that the integrand is continuous and hence it is uniformly continuous over $[0,T]$.} 
Then
\begin{equation*}
\begin{aligned}
		 &\frac{ (\BP_{t+\varepsilon} - \BP_t) (\Fz_{t+\varepsilon}-\Fz_{t} ) }{\varepsilon}\\
		  &= (\BP_{t+\varepsilon} - \BP_t) \left([\BA(t)+ DM]\Fz_t -[BR^{-1}B^\TRANS \FM] \Fs_t \right) \\
		 &~~~+  (\BP_{t+\varepsilon} - \BP_t) W_\varepsilon
\end{aligned}
\end{equation*}
where $W_\varepsilon$ denotes the element in $(L^2[0,1])^n$ with norm amplitude $o(1)$.
We recall that the strong continuity of $\BP\in C_s([0,T];\mathcal{L}((L^2[0,1])^n)$  means that $\BP(\cdot) \Fw$ is continuous for any $\Fw \in (L^2[0,1])^n$. Hence, for any fixed $t\in [0,T]$,
\begin{equation*}
	\begin{aligned}
		 & \big\| (\BP_{t+\varepsilon} - \BP_t) \left([\BA(t)+ DM]\Fz_t -[BR^{-1}B^\TRANS \FM] \Fs_t \right) \big\|_2 
		 \end{aligned}
\end{equation*}
 goes to zero as $\varepsilon \to 0$.
In addition, 
$\|(\BP_{t+\varepsilon} - \BP_t) W_\varepsilon\| \leq \|\BP_{t+\varepsilon} - \BP_t\|_{\text{op}}\| W_\varepsilon\|$
goes to zero as $\varepsilon \to 0$, since $ \|\BP_{t+\varepsilon} - \BP_t\|_{\text{op}}$ is bounded as a result of the strong continuity of $\BP$ and $\| W_\varepsilon\| \to 0$ as $\varepsilon \to 0$. Therefore we obtain \eqref{eq:product-rule}.
\end{proof}

 Consider the following assumption
\begin{description}
	\item[\bf (A1)] The operator Riccati equation \eqref{eq:inf-Riccati} has a unique {mild} solution $\BP \in C_s([0,T];\mathcal{L}((L^2[0,1])^n)$.
\end{description}

{
\begin{lemma}\label{lem:classical-sln-strong-cnt-op}
Let $\BA_1  \in C_s([0,T];\mathcal{L}((L^2[0,1])^n)$ and $\Fu   \in C([0,T];(L^2[0,1])^n)$.  Then the following system
\begin{equation}\label{eq:ode-with-strong-cnt-op}
	\dot{\Fx}(t) = \BA_1(t) \Fx(t) + \Fu(t), \quad \Fx(0)=\Fx_o \in (L^2[0,1])^n,
\end{equation}
	has a unique classical solution $\Fx$.  
	\NoEndMark
\end{lemma}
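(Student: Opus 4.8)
The plan is to construct the solution directly by Picard iteration, using the Contraction Mapping Principle in the Banach space $C([0,T];(L^2[0,1])^n)$ endowed with the uniform norm $\|\cdot\|_C$ of \eqref{eq:uniform-norm}, which (as already recorded in the excerpt) is complete. I note at the outset that the evolution-operator route of \cite[Thm.~5.2, Ch.~5]{pazy1983semigroups} used for \eqref{eq:IVP} is \emph{not} directly available here, since $\BA_1(\cdot)$ is only assumed strongly continuous, not continuous in the uniform operator topology. However, since $\BA_1 \in C_s([0,T];\mathcal{L}((L^2[0,1])^n)$, the map $t\mapsto \BA_1(t)\Fv$ is continuous for each fixed $\Fv$, so by the Uniform Boundedness Principle (exactly as in the paragraph preceding this lemma) there is a finite constant $M \triangleq \sup_{t\in[0,T]}\|\BA_1(t)\|_{\textup{op}} < \infty$.

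First I would introduce the integral operator $\mathcal{T}$ on $C([0,T];(L^2[0,1])^n)$ defined by
\[
	(\mathcal{T}\Fx)(t) \triangleq \Fx_o + \int_0^t \big(\BA_1(\tau)\Fx(\tau) + \Fu(\tau)\big)\,d\tau, \qquad t\in[0,T].
\]
The key preliminary point is that for $\Fx \in C([0,T];(L^2[0,1])^n)$ the integrand $\tau\mapsto \BA_1(\tau)\Fx(\tau)$ is again continuous: from
\[
	\|\BA_1(\tau)\Fx(\tau) - \BA_1(t)\Fx(t)\|_{(L^2[0,1])^n} \le M\,\|\Fx(\tau)-\Fx(t)\|_{(L^2[0,1])^n} + \|\BA_1(\tau)\Fx(t) - \BA_1(t)\Fx(t)\|_{(L^2[0,1])^n},
\]
the first term vanishes as $\tau\to t$ by continuity of $\Fx$ together with the bound $M$, and the second vanishes by the strong continuity of $\BA_1(\cdot)$ evaluated at the fixed vector $\Fx(t)$. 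Since $\Fu$ is continuous as well, the integrand defining $\mathcal{T}$ is continuous, hence B\"ochner integrable on $[0,T]$, and $t\mapsto (\mathcal{T}\Fx)(t)$ is continuously differentiable; in particular $\mathcal{T}$ maps $C([0,T];(L^2[0,1])^n)$ into itself.

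Next I would show that a power of $\mathcal{T}$ is a contraction: for $\Fx, \Fy \in C([0,T];(L^2[0,1])^n)$ one has $\|(\mathcal{T}\Fx)(t)-(\mathcal{T}\Fy)(t)\|_{(L^2[0,1])^n} \le M\int_0^t \|\Fx(\tau)-\Fy(\tau)\|_{(L^2[0,1])^n}\,d\tau$, and iterating this estimate gives $\|(\mathcal{T}^k\Fx)(t)-(\mathcal{T}^k\Fy)(t)\|_{(L^2[0,1])^n} \le \tfrac{(Mt)^k}{k!}\,\|\Fx-\Fy\|_C$, so $\mathcal{T}^k$ is a contraction for $k$ large enough. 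By the Contraction Mapping Principle $\mathcal{T}$ has a unique fixed point $\Fx \in C([0,T];(L^2[0,1])^n)$, which satisfies the integral equation $\Fx(t) = \Fx_o + \int_0^t(\BA_1(\tau)\Fx(\tau)+\Fu(\tau))\,d\tau$. Because each $\BA_1(t)$ is a bounded operator on all of $(L^2[0,1])^n$, the domain condition $\Fx(t)\in \mathrm{dom}(\BA_1(t))$ is automatic; and because the integrand is continuous, $\Fx$ is continuously differentiable on $[0,T]$ with $\dot\Fx(t) = \BA_1(t)\Fx(t)+\Fu(t)$ and $\Fx(0)=\Fx_o$, i.e. $\Fx$ is a classical solution of \eqref{eq:ode-with-strong-cnt-op}. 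For uniqueness, any classical solution satisfies the same integral equation by the fundamental theorem of calculus for B\"ochner integrals, hence is a fixed point of $\mathcal{T}$ and therefore coincides with $\Fx$; equivalently one applies Gronwall's inequality to the norm of the difference of two solutions.

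I expect the only nonroutine step to be the joint-continuity observation — that $\tau\mapsto\BA_1(\tau)\Fx(\tau)$ is continuous using merely the strong continuity of $\BA_1$ — which is precisely what forces the preliminary reduction via the Uniform Boundedness Principle that makes that step work; once this is in hand, the remainder is the standard Picard/contraction argument in a Banach space (cf. \cite{pazy1983semigroups,bensoussan2007representation}).
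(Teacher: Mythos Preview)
Your proof is correct and follows essentially the same approach as the paper: both use the Uniform Boundedness Principle to obtain a uniform bound on $\|\BA_1(t)\|_{\textup{op}}$, apply the Picard iteration estimate $\tfrac{(Mt)^k}{k!}$ to show that a power of the integral operator is a contraction on $C([0,T];(L^2[0,1])^n)$, and then use the same splitting argument for the continuity of $\tau\mapsto\BA_1(\tau)\Fx(\tau)$ to upgrade the fixed point to a classical solution. The only cosmetic difference is that you verify this continuity before the fixed-point step (to confirm $\mathcal{T}$ maps into the right space), whereas the paper checks it afterward.
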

\begin{proof}
	By the definition of strong continuity of   $\BP \in C_s([0,T];\mathcal{L}((L^2[0,1])^n)$ and the Uniform Boundedness Principle, we obtain that $\|\BP(\cdot)\|_{\textup{op}}$ is uniformly bounded over the time interval $[0,T]$,
that is, $\alpha :=  \sup_{t\in [0,T]}  \|\BP(t)\|_{\textup{op}}<\infty$.  
Define a mapping $\BT$ from $ C([0,T];(L^2[0,1])^n)$ to itself by 
$
	\BT(\Fx)(t) = \Fx_o+\int_0^t \Big(\BA_1(\tau) \Fx(\tau) + \Fu (\tau) \Big) d\tau.
$
Recall that $\|\Fx\|_C := \sup_{t\in [0,T]} \|\Fx(t)\|_2. $ It easy to verify that
$	\|\BT(\Fx)(t) - \BT(\Fy)(t)\|_2 \leq t \alpha \|\Fx - \Fy\|_C . $
Then by induction 
$
	\|\BT^n(\Fx)(t) - \BT^n(\Fy)(t)\|_2 \leq \frac{\alpha^n t^n}{n!}\|\Fx - \Fy\|_C.
$
For $n$ large enough such that $\frac{\alpha^n t^n}{n!}<1$, following the generalization of the Banach fixed-point theorem, $\BT$ has a  unique fixed point in $ C([0,T];(L^2[0,1])^n)$ for which
\begin{equation}\label{eq:fixed-point-ode}
	\Fx(t) = \Fx_o+\int_0^t \Big(\BA_1(\tau) \Fx(\tau) + \Fu (\tau) \Big) d\tau.
\end{equation}	
Let $h(t, \varepsilon):= \|\BA_1(t+\varepsilon) \Fx(t+\varepsilon) - \BA_1(t) \Fx(t)\|_2$. Then
\[
\begin{aligned}
h(t, &\varepsilon)
\leq \|\BA_1(t+\varepsilon) (\Fx(t+\varepsilon) - \Fx(t))\|_2 \\
&\quad + \|(\BA_1(t+\varepsilon) - \BA_1(t)) \Fx(t)\|_2\\
& \leq \alpha \|(\Fx(t+\varepsilon) - \Fx(t))\|_2 +\|(\BA_1(t+\varepsilon) - \BA_1(t)) \Fx(t)\|_2.
\end{aligned}
\]
Hence the strong continuity of $\BA_1(\cdot)$ and the continuity of $\Fx(\cdot)$ imply that for any $t \in [0,T]$, $\lim_{\varepsilon \to 0} h(t,\varepsilon) =0$.  That is $\Big(\BA_1(\cdot) \Fx(\cdot)\Big)$ is continuous over $[0,T]$. 
Thus $\Big(\BA_1(\cdot) \Fx(\cdot) + \Fu (\cdot) \Big)$ is continuous, and the right-hand side of \eqref{eq:fixed-point-ode} is differentiable and hence $\Fx$ is differentiable.   Hence \eqref{eq:ode-with-strong-cnt-op} has a unique classical solution.
\end{proof}
}

\begin{proposition}\label{prop:Riccati-solution-FBequations}
 If \textup{(A1)} holds,  
  then the Global LQG-GMFG Forward-Backward Equations   \eqref{eq:z-evo} and \eqref{eq:s-evo} have a unique 
{classical}
 solution pair $(\Fz,\Fs)$ with $\Fz,\Fs$ in $C([0,T];(L^2[0,1])^n)$. 
\end{proposition}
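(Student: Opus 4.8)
The plan is to decouple the forward--backward system \eqref{eq:z-evo}--\eqref{eq:s-evo} via the Riccati ansatz $\Fs(t)=\BP(t)\Fz(t)+\Fe(t)$, exactly as in the finite-dimensional decoupling of \cite{huang2012social,bensoussan2016linear,salhab2016collective}, and then to apply Lemma~\ref{lem:classical-sln-strong-cnt-op} twice: first to solve a linear (backward) equation for the offset $\Fe$, and then to solve the resulting \emph{closed} forward equation for $\Fz$. To motivate the equation for $\Fe$, suppose momentarily that a classical pair $(\Fz,\Fs)$ exists and put $\Fe(t):=\Fs(t)-\BP(t)\Fz(t)$. By Proposition~\ref{prop:strong-ricc-sln} the map $t\mapsto\BP(t)\Fv$ is differentiable with $\dot\BP$ read off from \eqref{eq:mild-solution}, and by the product rule of Lemma~\ref{lem:product-rule}, $\dot\Fe=\dot\Fs-\dot\BP\Fz-\BP\dot\Fz$. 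Substituting \eqref{eq:inf-Riccati}, \eqref{eq:z-evo} and \eqref{eq:s-evo}, the terms $\BP\BA(t)\Fz$, $\BP[D\FM]\Fz$ and $[(QH-\Pi_tD)\BI]\Fz$ cancel, leaving
\begin{equation*}
	\dot\Fe(t)=\big(-\BA(t)^\TRANS+\BP(t)[BR^{-1}B^\TRANS\FM]\big)\Fe(t)+[QH\BI](\eta\mathbf{1}),\qquad \Fe(T)=[Q_TH\BI](\eta\mathbf{1}),
\end{equation*}
the terminal value following from $\BP(T)=[Q_TH\BI]$ and the terminal condition in \eqref{eq:s-evo}.

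I would then \emph{define} $\Fe$ as the solution of this last equation, independently of whether $(\Fz,\Fs)$ exists. Its coefficient $\BA_1(t):=-\BA(t)^\TRANS+\BP(t)[BR^{-1}B^\TRANS\FM]$ is strongly continuous: $\BA(\cdot)^\TRANS=[(A-BR^{-1}B^\TRANS\Pi_{(\cdot)})^\TRANS\BI]$ is even uniformly continuous since $\Pi_{(\cdot)}$ solves \eqref{eq:Finite-Riccati} and is continuous, while for fixed $\Fv$ the map $t\mapsto\BP(t)\big([BR^{-1}B^\TRANS\FM]\Fv\big)$ is continuous by the strong continuity of $\BP$; the inhomogeneous term is constant, hence continuous. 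After the change of variable $t\mapsto T-t$, Lemma~\ref{lem:classical-sln-strong-cnt-op} gives a unique classical $\Fe\in C([0,T];(L^2[0,1])^n)$. Inserting $\Fs=\BP\Fz+\Fe$ into \eqref{eq:z-evo} yields the closed forward equation
\begin{equation*}
	\dot\Fz(t)=\big([\BA(t)+D\FM]-[BR^{-1}B^\TRANS\FM]\BP(t)\big)\Fz(t)-[BR^{-1}B^\TRANS\FM]\Fe(t),\qquad \Fz(0)=[I\FM]\bar{\Fx}(0),
\end{equation*}
whose operator coefficient $\BA_2(t):=[\BA(t)+D\FM]-[BR^{-1}B^\TRANS\FM]\BP(t)$ is strongly continuous by the same reasoning and whose inhomogeneity lies in $C([0,T];(L^2[0,1])^n)$ because $[BR^{-1}B^\TRANS\FM]$ is bounded and $\Fe$ continuous. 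Lemma~\ref{lem:classical-sln-strong-cnt-op} then produces a unique classical $\Fz\in C([0,T];(L^2[0,1])^n)$, and I set $\Fs:=\BP\Fz+\Fe$; this $\Fs$ is continuous since $\BP(t)\Fz(t)-\BP(s)\Fz(s)=\BP(t)(\Fz(t)-\Fz(s))+(\BP(t)-\BP(s))\Fz(s)$ with $\sup_t\|\BP(t)\|_{\textup{op}}<\infty$ (Uniform Boundedness Principle) and $(\BP(t)-\BP(s))\Fz(s)\to0$ by strong continuity. By construction $(\Fz,\Fs)$ satisfies \eqref{eq:z-evo}; running the preceding computation in reverse (using the product rule for $\BP\Fz$) shows it also satisfies \eqref{eq:s-evo}, and since $\dot\Fz,\dot\Fs$ are then continuous, both are genuine classical solutions with matching boundary conditions.

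For uniqueness, given any classical pair $(\Fz',\Fs')$ I set $\Fe':=\Fs'-\BP\Fz'$; the computation of the first paragraph shows $\Fe'$ solves exactly the backward equation defining $\Fe$, so $\Fe'=\Fe$ by the uniqueness half of Lemma~\ref{lem:classical-sln-strong-cnt-op}; then $\Fz'$ solves the same closed forward equation as $\Fz$, so $\Fz'=\Fz$, whence $\Fs'=\BP\Fz'+\Fe'=\Fs$.

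The one point needing care is the differentiation of $t\mapsto\BP(t)\Fz(t)$ in the steps above: Lemma~\ref{lem:product-rule} is stated for the classical pair of the original system, but here I need it for $\Fz$ an arbitrary classical solution of a linear equation $\dot\Fz=\BA_2(t)\Fz+\Fg(t)$. I expect this to be the main (and only, minor) obstacle, handled by observing that the proof of Lemma~\ref{lem:product-rule} uses only the continuity of $\Fz$ together with the first-order expansion $\Fz_{t+\varepsilon}-\Fz_t=\dot\Fz_t\,\varepsilon+o(\varepsilon)$ — and the latter holds because $\BA_2(\cdot)\Fz(\cdot)+\Fg(\cdot)$ is continuous, hence uniformly continuous, on $[0,T]$ — so the argument transfers verbatim. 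Everything else is bookkeeping: tracking the cancellations and checking the strong continuity of the composed coefficients $\BA_1$ and $\BA_2$.
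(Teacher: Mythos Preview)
Your proposal is correct and follows essentially the same decoupling strategy as the paper's proof: derive the backward linear equation for $\Fe$ via the product rule, solve it by Lemma~\ref{lem:classical-sln-strong-cnt-op}, plug into \eqref{eq:z-evo} to get a closed forward equation for $\Fz$, solve that, and recover $\Fs$. The only cosmetic difference is that the paper obtains $\Fs$ by solving \eqref{eq:s-evo} with the now-known $\Fz$ (rather than setting $\Fs:=\BP\Fz+\Fe$ directly), but the verification step is equivalent; your explicit attention to the product-rule subtlety and to strong continuity of the composed coefficients is more careful than the paper's own write-up.
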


\begin{proof}
{First let us assume that the classical solution pair $(\Fz,\Fs)$ to \eqref{eq:z-evo} and \eqref{eq:s-evo}  exists.}    Let {$\Fe(t) \triangleq \Fs(t)- \BP(t)\Fz(t)$} for all $t\in [0,T]$.
	Then we can apply the product rule {in Lemma \ref{lem:product-rule}} and obtain that {\begin{equation} \label{eq:e-process}
		\begin{aligned}
		&  \dot{\Fe}(t)  =  \dot{\Fs}(t) - \lim_{\varepsilon \to 0} \frac{(\BP({t+\varepsilon}) - \BP(t)) \Fz({t}) }{\varepsilon} -  {\BP}(t)\dot{\Fz}(t)  \\
			& = - \BA(t)^\TRANS {\Fs(t)} + [Q H\BI]{(\Fz(t)+ \eta\mathbf{1})} { -[\Pi_t D\BI] \Fz(t)  }\\
			& ~~  + \BA(t)^\TRANS \BP(t) \Fz(t) + \BP(t) \BA(t)\Fz(t)+ \BP(t) [D \FM] \Fz(t)\\
	& \qquad  {-} \BP(t) [BR^{-1}B^\TRANS \FM] \BP(t) \Fz(t)  {-} [(Q H - \Pi_t D )\BI]\Fz(t)  \\
	&~~- \BP(t)\left(\BA(t)\Fz(t) +[D \FM] \Fz(t) - [BR^{-1}B^\TRANS \FM] \Fs(t)\right) \\
	& = \left(-\BA(t)^\TRANS+\BP(t) [BR^{-1}B^\TRANS \FM] \right) \Fe(t)+ [Q H\BI](\eta\mathbf{1}).
		\end{aligned}
	\end{equation}}
	That is $\Fe$ is the classical solution to 
	\begin{equation}\label{eq:evo-e-classical}
		\dot \Fe(t) = \left(-\BA(t)^\TRANS+\BP(t) [BR^{-1}B^\TRANS \FM] \right) \Fe(t)+ [Q H\BI](\eta\mathbf{1})
	\end{equation}
		with  $\Fe(T) =  {\Fs(T) -\BP(T)\Fz(T)  = [Q_T H \BI](\eta\mathbf{1})}.$ 
	Substituting $\Fs(t)$  in \eqref{eq:z-evo} by {$\Fs(t)=\BP(t)\Fz(t)+\Fe(t)$} yields
\begin{equation}\label{eq:ze-process}
	\begin{aligned}
	&\dot \Fz(t) = \big(\BA(t)+ [D \FM] - [BR^{-1}B^\TRANS \FM] \BP(t)\big)\Fz(t) \\
	&  \qquad~~  {-} [BR^{-1}B^\TRANS \FM]\Fe(t),\\
	 &\Fz(0)  =  
	 \int_{[0,1]} \FM(\cdot, \beta) \bar{x}_\beta (0) d\beta \in (L^2[0,1])^n,
	 \end{aligned}
\end{equation}

{
Now without assuming the classical solution pair $(\Fz,\Fs)$ exists, under Assumption (A1), one first computes $\Fe$ following equation \eqref{eq:e-process}. Then based on $\Fe$ and $\BP$, one computes $\Fz$ following \eqref{eq:ze-process}. Finally, one computes $\Fs$ 
based on equation \eqref{eq:s-evo} and $\Fz$. 
Based on Lemma \ref{lem:classical-sln-strong-cnt-op}, when Assumption (A1) holds,  each of these equations \eqref{eq:e-process},   \eqref{eq:ze-process} and  \eqref{eq:s-evo} has a unique {classical} solution.
 Furthermore, one can verify that the  pair $(\Fz,\Fs)$ generated following this procedure is actually the classical solution pair to the joint forward backward equation  \eqref{eq:z-evo} and \eqref{eq:s-evo}.
Therefore we obtain the unique classical solution pair $(\Fz,\Fs)$ for  \eqref{eq:z-evo} and \eqref{eq:s-evo}. } 
\end{proof}

\begin{remark}
In the proof, the joint equations  \eqref{eq:z-evo} and \eqref{eq:s-evo} are decoupled based on the solution of the operator Riccati solution \eqref{eq:inf-Riccati}. Moreover, given the solution to \eqref{eq:inf-Riccati}, the proof  actually provides a direct procedure for computing the solution pair to the joint equations  \eqref{eq:z-evo} and \eqref{eq:s-evo} by introducing a new process $\Fe \in C([0,T];(L^2[0,1])^n)$  in \eqref{eq:e-process} that satisfies $\Fe(t)=  \Fs(t)- \BP \Fz(t), t\in [0,T]$. 
\end{remark}

\begin{proposition}
\label{prop:contraction-to-riccati}
	{If $\textup{L}_0(\FM)<1$, then \textup{(A1)} holds (that is, the operator Riccati equation \eqref{eq:inf-Riccati} has a unique mild solution).}
\end{proposition}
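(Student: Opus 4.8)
The plan is to exhibit $\BP$ as the \emph{decoupling operator} of the forward--backward system \eqref{eq:z-evo}--\eqref{eq:s-evo} and to exploit the hypothesis $\textup{L}_0(\FM)<1$ not only for the specific initial datum $\Fz(0)$ but for the homogeneous ($\eta=0$) system started from an arbitrary time and an arbitrary state. Concretely, for each $t_0\in[0,T]$ and $\zeta\in(L^2[0,1])^n$, consider the system obtained from \eqref{eq:z-evo}--\eqref{eq:s-evo} by putting $\eta=0$ and replacing the initial datum by $\Fz(t_0)=\zeta$, posed on $[t_0,T]$. Its integral reformulation is exactly \eqref{eq:Gamma-operation} with $\phi_1^\FM(t,0)\Fz(0)$ replaced by $\phi_1^\FM(t,t_0)\zeta$ and the $\eta$ terms dropped, and the computation in the proof of Lemma~\ref{eq:lem-fixed-point} shows that the associated map on $C([t_0,T];(L^2[0,1])^n)$ is a contraction with constant $\textup{L}_0^{t_0}(\FM)$, where $\textup{L}_0^{t_0}$ is obtained from $\textup{L}_0$ by replacing $\int_0^t$ with $\int_{t_0}^t$; since the integrands are nonnegative, $\textup{L}_0^{t_0}(\FM)\le\textup{L}_0(\FM)<1$. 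Hence for every $(t_0,\zeta)$ there is a unique classical solution pair $(\Fz^{t_0,\zeta},\Fs^{t_0,\zeta})$, and since $\eta=0$ the auxiliary process $\Fe$ of the proof of Proposition~\ref{prop:Riccati-solution-FBequations} vanishes, so that setting $\BP(t)\zeta\triangleq\Fs^{t,\zeta}(t)$ yields the flow identity $\Fs^{t_0,\zeta}(t')=\BP(t')\Fz^{t_0,\zeta}(t')$ for all $t'\in[t_0,T]$ (by uniqueness of the restricted forward--backward solution).

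Next I would verify that $\BP(t)\in\mathcal{L}((L^2[0,1])^n)$ for each $t$ and that $\BP\in C_s([0,T];\mathcal{L}((L^2[0,1])^n))$. Linearity of $\zeta\mapsto\BP(t)\zeta$ is immediate from uniqueness applied to the \emph{linear} homogeneous system. For the uniform bound, note that $\phi_1^\FM$ and $\phi_2$ are continuous in the uniform operator topology (their generators being continuous there, cf.\ \cite{pazy1983semigroups}) and hence uniformly bounded on the compact triangle $\{0\le\tau\le t\le T\}$; combined with the contraction estimate this gives $\|\Fz^{t,\zeta}\|_C+\|\Fs^{t,\zeta}\|_C\le C\|\zeta\|$ with $C$ independent of $t$, so $\sup_{t\in[0,T]}\|\BP(t)\|_{\textup{op}}<\infty$. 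With this bound in hand, strong continuity becomes cheap: for $s\le t$ the flow identity gives $\Fs^{s,\zeta}(t)=\BP(t)\Fz^{s,\zeta}(t)$, while the uniform bound on the derivatives of the classical solutions yields $\|\Fz^{s,\zeta}(t)-\zeta\|=O(|t-s|)$ and $\|\Fs^{s,\zeta}(t)-\Fs^{s,\zeta}(s)\|=O(|t-s|)$ uniformly in $\zeta$ with $\|\zeta\|\le 1$, whence $\|\BP(t)\zeta-\BP(s)\zeta\|\le\|\BP(t)\|_{\textup{op}}\|\Fz^{s,\zeta}(t)-\zeta\|+\|\Fs^{s,\zeta}(t)-\Fs^{s,\zeta}(s)\|=O(|t-s|)$; in particular $t\mapsto\BP(t)\zeta$ is Lipschitz, so $\BP\in C_s([0,T];\mathcal{L}((L^2[0,1])^n))$.

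It then remains to check that $\BP$ is a mild solution of \eqref{eq:inf-Riccati}. Fixing $\Fv$ and writing, for small $\varepsilon>0$, $\BP(t+\varepsilon)\Fv-\BP(t)\Fv=\BP(t+\varepsilon)\big(\Fv-\Fz^{t,\Fv}(t+\varepsilon)\big)+\big(\Fs^{t,\Fv}(t+\varepsilon)-\Fs^{t,\Fv}(t)\big)$, using $\Fs^{t,\Fv}(t+\varepsilon)=\BP(t+\varepsilon)\Fz^{t,\Fv}(t+\varepsilon)$ and $\Fs^{t,\Fv}(t)=\BP(t)\Fv$, one expands both increments to first order in $\varepsilon$ through the ODEs \eqref{eq:z-evo}--\eqref{eq:s-evo} with $\eta=0$; letting $\varepsilon\to0$ and invoking the strong continuity and uniform boundedness of $\BP$ just established, one finds that $t\mapsto\BP(t)\Fv$ is continuously differentiable and that $\BP(\cdot)\Fv$ solves \eqref{eq:inf-Riccati} tested against $\Fv$. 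Integrating from $t$ to $T$ with $\BP(T)=[Q_T H\BI]$ then gives exactly \eqref{eq:mild-solution}. For uniqueness, if $\BP_1,\BP_2$ are two mild solutions, they are strongly differentiable (Proposition~\ref{prop:strong-ricc-sln}) and uniformly bounded; subtracting their integral equations, splitting the quadratic term as $\BP_1 K\BP_1-\BP_2 K\BP_2=\BP_1 K(\BP_1-\BP_2)+(\BP_1-\BP_2)K\BP_2$ with $K\triangleq[BR^{-1}B^\TRANS\FM]$, and taking operator norms yields $\|\BP_1(t)-\BP_2(t)\|_{\textup{op}}\le C\int_t^T\|\BP_1(\tau)-\BP_2(\tau)\|_{\textup{op}}\,d\tau$, so Gronwall's inequality forces $\BP_1\equiv\BP_2$.

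I expect the main obstacle to be the bookkeeping around the decoupling operator --- in particular establishing its \emph{strong} continuity (rather than mere pointwise boundedness) --- since once $\BP\in C_s([0,T];\mathcal{L}((L^2[0,1])^n))$ with a uniform operator bound is available, both the verification of the mild equation and the Gronwall uniqueness argument are essentially routine. A secondary point requiring care is checking that the subinterval contraction constants stay uniformly below $1$ and that the evolution operators depend continuously on both time arguments on the closed triangle.
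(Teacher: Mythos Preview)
Your argument is correct and genuinely different from the paper's. The paper proceeds by contradiction in the style of the finite–escape–time argument for Riccati equations (cf.\ \cite{huang2019linear}): it first proves a local existence lemma (a contraction in $C_u([T-\tau^*,T];\mathcal{L}((L^2[0,1])^n))$), assumes a maximal interval of existence $(t^*,T]$, deduces $\|\BP(t_k)\|_{\textup{op}}\to\infty$ along $t_k\downarrow t^*$, and then contradicts this by showing that the homogeneous TPBV problem on $[t_k,T]$ is solvable with bounds uniform in $t_k$ (since $\textup{L}_{t_k}(\FM)\le\textup{L}_0(\FM)<1$) and that its solution satisfies $\Fs(t_k)=\BP(t_k)\Fz_{t_k}$. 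You instead construct $\BP$ directly as the decoupling operator $\BP(t)\zeta=\Fs^{t,\zeta}(t)$ and verify the mild Riccati equation by differentiating the flow identity, which bypasses both the local existence lemma and the blowup argument; your approach also yields uniqueness by a clean Gr\"onwall estimate, which the paper does not spell out. Both routes ultimately rest on the same key observation (the subinterval contraction constants $\textup{L}_0^{t_0}(\FM)$ are uniformly $<1$), but yours is more constructive and arguably shorter.

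One presentational point: your appeal to ``the auxiliary process $\Fe$ of Proposition~\ref{prop:Riccati-solution-FBequations} vanishes'' is circular as written, since that proposition assumes (A1); you recover immediately via uniqueness of the restricted FB solution, so simply drop the reference to $\Fe$ and state the flow identity directly. Everything else (linearity, uniform bound, Lipschitz–in–$t$ hence strong continuity, the increment computation) checks out.
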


To prove Proposition \ref{prop:contraction-to-riccati}, we need to introduce a set of two-point boundary value problems and two lemmas. 
Consider the following two-point boundary value (TPBV) problem with a modified time horizon $[t_0, T]$ with $t_0\geq 0$ 
\begin{align}
    &\dot\Fs (t) = -\big[\BA(t)^\TRANS\big] \Fs(t) + [(Q H- \Pi_t D)\BI]\Fz(t)   \label{eq:s-t0}
\\[3pt]
&\dot \Fz(t)  = [\BA(t)+D \FM]\Fz(t)  - [BR^{-1}B^\TRANS \FM] \Fs(t) \label{eq:z-t0}
\end{align}
where $\Fs (T)  = [Q_T H  \BI] \Fz(T) \in (L^2[0,1])^n
$ and the modified initial condition is some generic function $\Fz_{t_0}\in (L^2[0,1])^n$. 
Define the following mapping $\textup{L}_{t_0}(\cdot): \ESC \to [0,\infty)$: 
\begin{equation}
	\begin{aligned}
		&\textup{L}_{t_0}(\FM)\triangleq \sup_{t\in[0,T]}\left\{\int_{t_0}^t\int^T_\tau \Big\|\Big\{\phi_1^\FM(t,\tau)   [BR^{-1}B^\TRANS \FM]\phi_2(\tau, q)\right.  \\
	& \qquad \qquad \qquad \qquad\qquad\quad \left. [(Q H- \Pi_q D)\BI]\Big\}\Big\|_{\textup{op}}dq d\tau \right\}+\\
		&\sup_{t\in[0,T]}\left\{ \int_{t_0}^t \Big\|\phi_1^\FM(t,\tau)  [BR^{-1}B^\TRANS \FM]\phi_2(\tau, T)[Q_T H  \BI]\Big\|_{\textup{op}}  d\tau\right\}
		\end{aligned}
\end{equation}
for any  $\FM \in \ESC$.
Since all the terms inside the integration from $t_0$ to $T$ is non-negative, we have $L_{t_0}(\FM)$ is non-increasing with respect to $t_0$ and in particular
$
L_{t_0}(\FM) \leq L_{0}(\FM),
$ for all $\FM \in \ESC$.

{
Let $r :=2 \|\BP(T)\|_{\textup{op}}+1$ and $$M_T := \sup_{t\in[0,T]}\max\{\|\BA^\TRANS(t)\|_{\textup{op}}, \|(\BA(t) +  [D \FM])\|_{\textup{op}}\}.$$ 
 Then let $\tau^* \in (0,T]$ be such that 
\[
\begin{aligned}
& \tau^*\Big(2M_T + 2r \|[BR^{-1}B \FM]\|_\textup{op}  \Big)\leq \frac12,\\
&\tau^* \Big(\|[(Q H - \Pi_s D )\BI]\|_\textup{op}+ {r^2 \|[BR^{-1}B \FM] \|_\textup{op}} + 2r M_T \Big) \\
 		&\leq \|\BP(T)\|_\textup{op}+1.
\end{aligned}
\]
\begin{lemma}[Local Existence of Riccati Mild Solution] \label{lem:local-existence-Riccati}
	 The mild solution to \eqref{eq:inf-Riccati} exists and is unique in the ball $$B_{r,\tau^*}:=\{\mathbb{F} \in C_u([T-\tau^*, T]; \mathcal{L}((L^2[0,1])^n): \|\mathbb{F}\|\leq r)\}.$$
\end{lemma}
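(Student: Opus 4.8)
The plan is to realize the mild solution of \eqref{eq:inf-Riccati} on the short horizon $[T-\tau^*,T]$ as the unique fixed point of the integral operator dictated by the mild-solution identity \eqref{eq:mild-solution}, and to run the Banach fixed-point theorem on the closed ball $B_{r,\tau^*}$. On the Banach space $C_u([T-\tau^*,T];\mathcal{L}((L^2[0,1])^n))$ I would define
\begin{align*}
(\mathcal{T}\BP)(t) &\triangleq [Q_T H\BI] + \int_t^T\Big(\BA(\tau)^\TRANS\BP(\tau) + \BP(\tau)\big(\BA(\tau)+[D\FM]\big) \\
&\qquad\quad - \BP(\tau)[BR^{-1}B^\TRANS\FM]\BP(\tau) - [(QH-\Pi_\tau D)\BI]\Big)\,d\tau,\qquad t\in[T-\tau^*,T],
\end{align*}
so that a fixed point of $\mathcal{T}$ is exactly a mild solution on $[T-\tau^*,T]$, and it automatically meets the terminal condition $\BP(T)=[Q_TH\BI]$ because the integral vanishes at $t=T$. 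The first thing to check is that $\mathcal{T}$ is well defined on $C_u$ (in the integrated sense of \eqref{eq:mild-solution}): for $\BP\in B_{r,\tau^*}$ and any $\Fv\in(L^2[0,1])^n$, the integrand applied to $\Fv$ is a continuous $(L^2[0,1])^n$-valued function of $\tau$ — this uses strong continuity of $\BP$, the bound $\|\BP(\cdot)\|_{\textup{op}}\le r$, and operator-norm continuity of the time-varying coefficients $\BA(\cdot)$ and $\Pi_{(\cdot)}$ — hence the B\"ochner integral exists, $t\mapsto(\mathcal{T}\BP)(t)$ is strongly continuous, and (by the pointwise estimates below) has bounded operator norm; since $C_u(I)$ and $C_s(I)$ coincide as sets, the eventual fixed point is a bona fide mild solution.

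Next I would verify the self-mapping property, which the definition of $r=2\|\BP(T)\|_{\textup{op}}+1$ and the second bound on $\tau^*$ are tailored for. For $\BP\in B_{r,\tau^*}$, using $\|\BA^\TRANS(\tau)\|_{\textup{op}},\|\BA(\tau)+[D\FM]\|_{\textup{op}}\le M_T$ and $\|\BP(\tau)\|_{\textup{op}}\le r$ on the integrand, one gets
\[
\|(\mathcal{T}\BP)(t)\|_{\textup{op}} \le \|\BP(T)\|_{\textup{op}} + \tau^*\Big(2rM_T + r^2\|[BR^{-1}B^\TRANS\FM]\|_{\textup{op}} + \sup_{s\in[0,T]}\|[(QH-\Pi_sD)\BI]\|_{\textup{op}}\Big),
\]
and the second hypothesis on $\tau^*$ bounds the bracketed quantity by $\|\BP(T)\|_{\textup{op}}+1$, so $\|(\mathcal{T}\BP)(t)\|_{\textup{op}}\le 2\|\BP(T)\|_{\textup{op}}+1=r$, i.e. $\mathcal{T}(B_{r,\tau^*})\subseteq B_{r,\tau^*}$.

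For the contraction property, take $\BP_1,\BP_2\in B_{r,\tau^*}$ and use the elementary identity $\BP_1[\cdot]\BP_1-\BP_2[\cdot]\BP_2=\BP_1[\cdot](\BP_1-\BP_2)+(\BP_1-\BP_2)[\cdot]\BP_2$ with $[\cdot]=[BR^{-1}B^\TRANS\FM]$ to bound the difference of the quadratic terms by $2r\|[BR^{-1}B^\TRANS\FM]\|_{\textup{op}}\|\BP_1-\BP_2\|$; together with the linear terms this gives
\[
\|\mathcal{T}\BP_1-\mathcal{T}\BP_2\| \le \tau^*\big(2M_T + 2r\|[BR^{-1}B^\TRANS\FM]\|_{\textup{op}}\big)\,\|\BP_1-\BP_2\| \le \tfrac12\,\|\BP_1-\BP_2\|
\]
by the first hypothesis on $\tau^*$, so $\mathcal{T}$ is a $\tfrac12$-contraction on $B_{r,\tau^*}$.

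Finally, since $B_{r,\tau^*}$ is a closed ball in the Banach space $C_u([T-\tau^*,T];\mathcal{L}((L^2[0,1])^n))$ it is a complete metric space, so the Banach fixed-point theorem produces a unique fixed point $\BP\in B_{r,\tau^*}$, which is the unique mild solution of \eqref{eq:inf-Riccati} in that ball. I do not expect a genuine obstacle, since this is a standard Picard/contraction scheme; the only points needing care are (i) confirming $\mathcal{T}$ maps $C_u$ into itself, which hinges on the operator-norm continuity of the time-varying coefficients so that the nonlinear integrand applied to each $\Fv$ is B\"ochner integrable, and (ii) the operator-norm bookkeeping for the quadratic Riccati term — and it is precisely this quadratic nonlinearity that forces the restriction to a ball of radius $r$ and the short horizon $\tau^*$.
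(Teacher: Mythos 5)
Your proposal is correct and follows essentially the same route as the paper's proof: both set up the mild-solution integral identity as a fixed-point map on the ball $B_{r,\tau^*}$, use the two hypotheses on $\tau^*$ to verify the self-mapping bound $\|(\mathcal{T}\BP)(t)\|_{\textup{op}}\le r$ and the $\tfrac12$-contraction estimate (via the same splitting of the difference of quadratic terms), and conclude by the Banach fixed-point theorem. The only addition is your explicit check that $\mathcal{T}$ is well defined on $C_u$, which the paper leaves implicit.
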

\begin{proof} 
Let \eqref{eq:mild-solution} be denoted by $\BP = \gamma (\BP)$.  For $t \in [T-\tau^*,T]$, 
\begin{equation}
	\begin{aligned}
		&\|\gamma(\BP(t))\Fv\|_2\\
		& = \Big \| \BP(T) \Fv+\int_t^T \Big(\BA(\tau)^\TRANS \BP(\tau) + \BP(\tau) (\BA(\tau) +  [D \FM])\\
	& \qquad \quad {-} \BP(\tau) [BR^{-1}B^\TRANS \FM] \BP(\tau)  {-} [(Q H - \Pi_\tau D )\BI]\Big)\Fv d\tau \Big\|_2\\
		&\leq \Big\{ \|\BP(T)\|_\textup{op}+\tau^* \Big(\|[(Q H - \Pi_s D )\BI]\|_\textup{op}\\
 		& \qquad \qquad \qquad \qquad+ {r^2 \|[BR^{-1}B \FM] \|_\textup{op}} + 2r M_T \Big) \Big\}\|\Fv\|_2\\
 		& \leq \Big(2 \|\BP(T)\|_\textup{op}+1\Big) \|\Fv\|_2 =  r\|\Fv\|_2.
	\end{aligned}
\end{equation}
That is $\gamma(\BP(\cdot))$ is a mapping from  $B_{r,\tau^*}$ to $B_{r,\tau^*}$.

For $\BP_1$ and $\BP_2$ in $B_{r,\tau^*}$, we obtain 
	\begin{equation*}
		\begin{aligned}
			&\gamma(\BP_1)(t)\Fv -\gamma( \BP_2)(t) \Fv= \int^T_t\Big(\BA(\tau)^\TRANS (\BP_1(\tau)-\BP_2(\tau)) \\
			&\qquad \quad + (\BP_1(\tau)-\BP_2(\tau)) (\BA(\tau) +  [D \FM]) \\
			&\qquad \quad +  (\BP_2(\tau)-\BP_1(\tau)) [BR^{-1}B \FM] \BP_2(\tau)\\
			&\qquad \quad + \BP_1(\tau)[BR^{-1}B \FM](\BP_2(s)-\BP_1(\tau)) \Big)  \Fv  d\tau,
		\end{aligned}
	\end{equation*}
\[
\begin{aligned}
	\text{which implies }& \left\|\gamma(\BP_1)(t) -\gamma( \BP_2)(t)\right\|_{\textup{op}} \\
	&\leq \tau^*\Big(2M_T + 2r \|[BR^{-1}B \FM]\|_\textup{op}  \Big)\|\BP_2- \BP_1\|\\
	& \leq \frac{1}{2}\|\BP_2- \BP_1\|. 
\end{aligned}
\]
Therefore $\gamma(\cdot)$ is $\frac12$-contraction in $B_{r,\tau^*}$ and there exists a unique mild solution $\BP$ in $B_{r,\tau^*}$.
\end{proof}}

Following the same contraction argument as for  Lemma \ref{eq:lem-fixed-point} in Section \ref{subsec:fixed-point}, we obtain the following lemma.
\begin{lemma}\label{lem:modified-TPBVP}
	The TPBV problem defined by \eqref{eq:s-t0} and \eqref{eq:z-t0}
over $[t_0, T]$ admits a unique {classical} solution pair $(\Fs, \Fz)$ if $L_{t_0}(\FM) <1$. 
\end{lemma}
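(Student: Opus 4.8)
The plan is to run the contraction-mapping argument of Lemma~\ref{eq:lem-fixed-point} essentially verbatim, now on the shorter horizon $[t_0,T]$ and with the affine forcing terms (those carrying $\eta\mathbf{1}$) removed. First I would pass to the mild (integral) form of \eqref{eq:s-t0}--\eqref{eq:z-t0}. Writing $\phi_1^\FM(\cdot,\cdot)$ and $\phi_2(\cdot,\cdot)$ for the evolution operators associated with $[\BA(\cdot)+D\FM]$ and $[-\BA(\cdot)^\TRANS]$ (whose restrictions to $[t_0,T]$ are again evolution operators by \eqref{eq:evo-op}, as $\BA(\cdot)$ is norm-continuous), one has, for $t\in[t_0,T]$,
\[
\Fz(t)=\phi_1^\FM(t,t_0)\Fz_{t_0}-\int_{t_0}^{t}\phi_1^\FM(t,\tau)[BR^{-1}B^\TRANS\FM]\Fs(\tau)\,d\tau ,
\]
\[
\Fs(\tau)=\phi_2(\tau,T)[Q_T H\BI]\Fz(T)-\int_{\tau}^{T}\phi_2(\tau,q)[(Q H - \Pi_q D)\BI]\Fz(q)\,dq .
\]
Substituting the second relation into the first yields a self-referential equation for $\Fz$ whose associated fixed-point map $\Gamma_{t_0}$ on $C([t_0,T];(L^2[0,1])^n)$ coincides with \eqref{eq:Gamma-operation} upon setting $\eta\mathbf{1}=0$, replacing $\Fz(0)$ by $\Fz_{t_0}$ and $\phi_1^\FM(t,0)$ by $\phi_1^\FM(t,t_0)$, and starting the outer integral at $t_0$. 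The argument proving Lemma~\ref{lem:mappingC2C} then applies without change to show that $\Gamma_{t_0}$ maps $C([t_0,T];(L^2[0,1])^n)$ into itself.

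Next I would estimate the Lipschitz constant of $\Gamma_{t_0}$. Since the $\Fz_{t_0}$ term and all $\eta\mathbf{1}$ terms cancel in the difference $\Gamma_{t_0}(\Fv)-\Gamma_{t_0}(\Fu)$, the chain of inequalities displayed in the proof of Lemma~\ref{eq:lem-fixed-point} carries over word for word, the only change being that the outer integrals now run from $t_0$ to $t$; this gives
\[
\|\Gamma_{t_0}(\Fv)-\Gamma_{t_0}(\Fu)\|_{_C}\le \textup{L}_{t_0}(\FM)\,\|\Fv-\Fu\|_{_C},\qquad \Fv,\Fu\in C([t_0,T];(L^2[0,1])^n).
\]
Hence, when $\textup{L}_{t_0}(\FM)<1$, $\Gamma_{t_0}$ is a contraction on the Banach space $C([t_0,T];(L^2[0,1])^n)$ under the uniform norm, and the Contraction Mapping Principle yields a unique fixed point $\Fz$; substituting this $\Fz$ into the integral form of \eqref{eq:s-t0} gives a unique $\Fs\in C([t_0,T];(L^2[0,1])^n)$, so $(\Fs,\Fz)$ is the unique mild solution pair of the TPBV problem. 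Finally, the forcing terms $[(Q H - \Pi_t D)\BI]\Fz(t)$ in \eqref{eq:s-t0} and $-[BR^{-1}B^\TRANS\FM]\Fs(t)$ in \eqref{eq:z-t0} lie in $C([t_0,T];(L^2[0,1])^n)$ because $\Fz,\Fs$ are continuous and the operator coefficients are norm-continuous; applying Lemma~\ref{eq:classical-solution-pair} to \eqref{eq:s-t0} and to \eqref{eq:z-t0} separately then upgrades $(\Fs,\Fz)$ to a classical solution pair.

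No genuine difficulty arises beyond bookkeeping: the main points to verify are that the restrictions of $\phi_1^\FM$ and $\phi_2$ to the subinterval $[t_0,T]$ remain well-defined evolution operators, and that the operator-norm integrands entering $\textup{L}_{t_0}(\FM)$ are integrable --- both of which follow, exactly as in Section~\ref{subsec:fixed-point}, from the uniform boundedness over the compact interval of $[BR^{-1}B^\TRANS\FM]$, $[(Q H - \Pi_q D)\BI]$, $[Q_T H\BI]$ and of the evolution operators. The monotonicity $\textup{L}_{t_0}(\FM)\le \textup{L}_0(\FM)$ already noted above then supplies the form of the lemma invoked in the proof of Proposition~\ref{prop:contraction-to-riccati}.
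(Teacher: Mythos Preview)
Your proposal is correct and follows exactly the route the paper intends: the paper's own proof of this lemma is the single sentence ``Following the same contraction argument as for Lemma~\ref{eq:lem-fixed-point} in Section~\ref{subsec:fixed-point}'', and you have supplied precisely that argument in detail, including the upgrade from mild to classical solutions via Lemma~\ref{eq:classical-solution-pair}.
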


Now we proceed to prove Proposition \ref{prop:contraction-to-riccati} in the following. 
\begin{proof}
The proof is by contradiction \textcolor{black}{following the idea in the proof of \cite[Thm.~12]{huang2019linear}.} \textcolor{black}{Suppose that the operator Riccati equation \eqref{eq:inf-Riccati} does not have a mild solution $\BP\in C_s([0,T];\mathcal{L}((L^2[0,1])^n)$  over $[0,T]$.}
{\color{black}First, we observe that there  always exists $\tau^*\in(0,T]$ such that the mild solution of Riccati solution exists over a small interval $[T-\tau^*, T]$  by Lemma~\ref{lem:local-existence-Riccati}.
Then it can be shown that the non-existence of a mild solution to \eqref{eq:inf-Riccati} over $[0,T]$ implies that there is a maximum interval of existence $(t^*,T]$ with $t^*>0$. 
This further implies that there exists a sequence of strictly decreasing time instances $\{t_k\}_{k=1}^\infty$ converging to  $t^*$
such that %
\begin{equation}\label{eq:assumed-non-existence}
	\lim_{t_k \downarrow t_*}\|\BP({t_k}) \|_{\textup{op}} = \infty.
\end{equation}
(Otherwise, if $\lim_{t_k\downarrow t_*}\|\BP({t}) \|_{\textup{op}}  <\infty $  for all $\{t_k\}_{k=1}^\infty$ converging to $t^*$ from above,  there exists  $\varepsilon>0$ which can be arbitrarily small such that $\sup_{t\in (t_*, t_*+\varepsilon]}\|\BP(t)\|_{\textup{op}}\leq C_p$ for some constant $C_p>0$.
 Then by the same proof argument as for Lemma \ref{lem:local-existence-Riccati}, we obtain that a unique solution exists over $[ t^*+\varepsilon-\delta,~ t^*+\varepsilon]$  and hence over a closed interval $[t^*+\varepsilon-\delta,~ T]\supset (t^*,T]$, where $\delta>0$ satisfies 

\[
\begin{aligned}
& \delta >\varepsilon, \quad \delta\Big(2M_T + 2r \|[BR^{-1}B \FM]\|_\textup{op}  \Big)\leq \frac12,\\
&\delta \Big(\|[(Q H - \Pi_s D )\BI]\|_\textup{op}+ {r^2 \|[BR^{-1}B \FM] \|_\textup{op}} + 2r M_T \Big) \\
 		&\leq C_p+1.
\end{aligned}
\]
with $r :=2 C_p+1$. This contradicts the maximum interval of existence $(t^*,T]$.)
}

One can verify that, for $t_k>t^*\geq 0$ (and $[t_k, T]\subset [t_{k+1}, T]\subset [t^*, T]$), $L_0(\FM)<1$ implies $L_{t_k}(\FM)<1$  based on the definition of $L_{t_k}(\FM)$. By Lemma \ref{lem:modified-TPBVP}, this further implies that the following joint equations have a unique {classical} solution pair, each of which is in $C([t_k, T]; (L^2[0,1])^n)$: 
\begin{align}
    &\dot \Fs (t) = -\big[\BA(t)^\TRANS\big] \Fs(t) + [(Q H- \Pi_t D)\BI]\Fz(t),   \label{eq:s-tk}
\\[3pt]
&\dot \Fz(t)  = [\BA(t)+D \FM]\Fz(t)  - [BR^{-1}B^\TRANS \FM] \Fs(t) \label{eq:z-tk}
\end{align}
with $\Fs (T)  = [Q_T H  \BI] \Fz(T) \in (L^2[0,1])^n
$ and  some generic initial condition   $\Fz_{t_k}\in (L^2[0,1])^n$ with $\|\Fz_{t_k}\|_2= 1$. {\color{black} 
Following similar arguments as those in Section \ref{subsec:fixed-point}, we obtain that
$\Fz(t) = \Gamma_{t_k} (\Fz)+\phi_1^\FM(t,0)\Fz_{t_k}$
where 
\begin{equation*}
	\begin{aligned}
	&(\Gamma_{t_k}(\Fv))(t) \triangleq \\
	&  -\int_{t_k}^t \phi_1^\FM(t,\tau) [BR^{-1}B^\TRANS \FM] \Big\{\phi_2(\tau, T)[Q_T H  \BI] (\Fv(T)+\eta\mathbf{1})  \\
	&-\int^T_\tau \phi_2(\tau, q)\big([(Q H- \Pi_q D)\BI]\Fv(q)+ [Q H\BI]\eta \mathbf{1}\big) dq  \Big\}d\tau.
\end{aligned}
\end{equation*}

\[
\textup{Then } \|\Fz\|_C \leq \frac{1}{1- L_{t_k}(\FM)} \|\phi_1^\FM(t,0)\Fz_{t_k}\|_2 \leq \frac{K}{1- L_{t_0}(\FM)}, \quad 
\]
where $K =\sup_{t,\tau \in [0,T]} \|\phi_1^\FM(t,\tau)\|_{\textup{op}}$. In parallel to \eqref{eq:s-integral-tk},
\begin{align}	
&\Fs(\tau) = \phi_2(\tau, T)[Q_T H  \BI] \Fz(T) \notag\\
	&~ - \int^T_\tau \phi_2(\tau, q)\big([(Q H- \Pi_q D)\BI]\Fz(q)+ [Q H\BI]\eta\big) dq .\label{eq:s-integral-tk}
\end{align}
Hence we can find $C_0$ independent of $t_k$ and $\Fz_{t_k}$ such that
\begin{equation}\label{eq: uniform-bnd-sz}
	\sup_{t\in [t_k, T]} \big(\|\Fz(t)\|_2 + \|\Fs(t)\|_2\big) \leq C_0.
\end{equation}

Following the decoupling technique for TPBV problems in Proposition \ref{prop:Riccati-solution-FBequations}, under the fact that $\BP$ exists over $[t_k, T]$,  one can verify that the solution pair to \eqref{eq:s-tk} and \eqref{eq:z-tk} satisfies 
\begin{equation}\label{eq:sPz-TPBV}
	\Fs(t) = \BP(t) \Fz(t),\quad \forall t\in[t_k, T].
\end{equation}
We note that the choice of the initial condition $\Fz_{t_k}\in (L^2[0,1])^n$ with $\|\Fz_{t_k}\|_2= 1$ is {arbitrary}. %
By \eqref{eq:assumed-non-existence} and the definition of operator norm, there exists  initial conditions $\{\Fz_{t_k}\}_{k=1}^\infty$ with $\|\Fz_{t_k}\|_2=1$ such that
\begin{equation}\label{eq:Pz-Inf}
	\begin{aligned}
		\lim_{k\to \infty} \|\BP(t_k)\Fz_{t_k}\|_2 &\geq  \lim_{k\to \infty} \Big(\|\BP(t_k)\|_{\textup{op}}- \frac1k \Big)=\infty.
	\end{aligned}
\end{equation}
Now we take the above $\{\Fz_{t_k}\}$  as initial conditions for \eqref{eq:z-tk}. 
Then by \eqref{eq:sPz-TPBV}, we have $\Fs(t_k) = \BP(t_k) \Fz_{t_k}$. Hence \eqref{eq:Pz-Inf} implies $\lim_{k\to \infty} \|\Fs(t_k)\|_2 = \infty$,}
which contradicts \eqref{eq: uniform-bnd-sz}.  Thus we complete the proof.
\end{proof}
\subsection{Subspace Decomposition for Operator Riccati Equations}

Let the subspace $\mathcal{S} \subset L^2[0,1]$ be the \textcolor{black}{characterizing} graphon invariant subspace of $\FM$ as defined in Section \ref{subsec:invarint-subspace-intro}  and let $\SBS^\perp$ denote its orthogonal complement subspace in $L^2[0,1]$.

$\bar\BT\in  \mathcal{L}((L^2[0,1])^n)$ is called the \emph{$(\SBS)^n$-equivalent operator}  of $\BT \in  \mathcal{L}((L^2[0,1])^n)$  if the following holds
\begin{equation}\label{eq:equivalent-operator}
	\bar\BT \Fv = \BT \Fv \quad \text{and}\quad  \bar\BT \Fu = 0,\quad  \forall \Fv \in (\SBS)^n, ~\forall \Fu \in (\SBS^\perp)^n.
\end{equation}
Let $\bar{\BP}(t) \in \mathcal{L}((L^2[0,1])^n)$ denote the $(\SBS)^n$-equivalent operator of $\BP(t)$. Let $\BI_\SBS$ (resp. $\BI_{\SBS^\perp}$) in $\mathcal{L}(L^2[0,1])$ denote the $\SBS$-equivalent operator (resp. $\SBS^\perp$-equivalent operator)  of the identity operator $\BI \in \mathcal{L}(L^2[0,1])$.
Let $A_c(t)\triangleq (A-BR^{-1}B^\TRANS \Pi_t)$. 
\begin{theorem}[Riccati Equation Subspace Decomposition]\label{thm:Riccati-Main} 
If  \textup{(A1)} holds,  %
	then the solution to the non-symmetric operator Riccati equation  \eqref{eq:inf-Riccati} is given by
	\begin{equation} \label{eq:decomp-riccati-sol}
		\BP(t) = [P^\perp(t) \BI_{\SBS^\perp}] + \bar{\BP}(t),~ ~ t\in [0,T]~
	\end{equation}
	where $[P^\perp(t) \BI_{\SBS^\perp}]\in \mathcal{L}((\SBS^\perp)^n)$,  $\bar{\BP}(t) \in \mathcal{L}((\SBS)^n)$ is given by the non-symmetric operator Riccati equation
\begin{equation}\label{eq:P-bar}
\begin{aligned}
	- \dot {\bar\BP}  =& [A_c(t) \BI_\SBS]^\TRANS\bar{\BP} + \bar{\BP} [A_c(t) \BI_\SBS]+ \bar \BP [D \FM] {-} \bar \BP [BR^{-1}B^\TRANS \FM] \bar\BP  
	 \\
	& {-}   [( Q H - \Pi_t D )\BI_{\SBS}], \quad \BP(T)	=[Q_T H\BI_{\SBS}], \quad t \in [0,T].
\end{aligned}
\end{equation}
and $P^\perp(t)\in \BR^{n\times n}$ is  given by the $n\times n$-dimensional linear matrix differential equation
	\begin{equation}\label{eq:Riccati-Perp}
	\begin{aligned}
			-&\dot{P^\perp} = A_c(t)^\TRANS P^\perp + P^\perp A_c(t) {-} ( Q H - \Pi_t D ),\\
			& P^\perp(T) =\gamma Q_T, ~t\in [0,T].
	\end{aligned}
	\end{equation}\NoEndMark 
\end{theorem}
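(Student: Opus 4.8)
The plan is to show that the unique mild solution $\BP$ of \eqref{eq:inf-Riccati} guaranteed by \textup{(A1)} respects the orthogonal splitting $(L^2[0,1])^n = (\SBS)^n \oplus (\SBS^\perp)^n$, and then to restrict \eqref{eq:inf-Riccati} to each summand. The key structural input is that all coefficient operators occurring in \eqref{eq:inf-Riccati} --- namely $\BA(t)=[A_c(t)\BI]$, $\BA(t)^\TRANS$, $[D\FM]$, $[BR^{-1}B^\TRANS\FM]$, $[(QH-\Pi_t D)\BI]$ and $[Q_T H\BI]$ --- leave both $(\SBS)^n$ and $(\SBS^\perp)^n$ invariant; this is exactly the fact used in the proof of Proposition~\ref{prop:fixepoint-eigenprojection} (see \cite[Proposition~3]{ShuangPeterTCNS20}), and it follows from $\SBS$ and $\SBS^\perp$ being $\FM$-invariant together with $\SBS^\perp=\ker(\FM)$. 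In particular, since every $\Fv\in(\SBS^\perp)^n$ has all $n$ components in $\ker(\FM)$, both $[D\FM]\Fv=0$ and $[BR^{-1}B^\TRANS\FM]\Fv=0$.

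\textbf{Step 1: the solution is block-diagonal.} I would first prove $\BP(t)(\SBS)^n\subseteq(\SBS)^n$ and $\BP(t)(\SBS^\perp)^n\subseteq(\SBS^\perp)^n$ for all $t\in[0,T]$. On the interval $[T-\tau^*,T]$ of Lemma~\ref{lem:local-existence-Riccati} the mild solution is the uniform-operator-topology limit of the Picard iterates $\BP^{(0)}\equiv\BP(T)=[Q_T H\BI]$, $\BP^{(k+1)}=\gamma(\BP^{(k)})$ where $\gamma$ is the contraction map of \eqref{eq:mild-solution} used in the proof of that lemma. Now $\BP^{(0)}$ is block-diagonal with respect to the splitting, and the Riccati right-hand side sends block-diagonal operators to block-diagonal operators: each term other than the quadratic one is a sum or product of block-diagonal operators, and the quadratic term $\BP[BR^{-1}B^\TRANS\FM]\BP$ is a triple product of block-diagonal factors. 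Hence every $\BP^{(k)}$ is block-diagonal by induction, and since $(\SBS)^n$ and $(\SBS^\perp)^n$ are closed the property passes to the operator-norm limit, so $\BP$ is block-diagonal on $[T-\tau^*,T]$. Because \textup{(A1)} puts $\BP$ in $C_s([0,T];\mathcal{L}((L^2[0,1])^n))$, strong continuity and the Uniform Boundedness Principle give $\alpha:=\sup_{t\in[0,T]}\|\BP(t)\|_{\textup{op}}<\infty$, so the horizon $\tau^*$ of Lemma~\ref{lem:local-existence-Riccati} (which depends only on $\|\BP(t_1)\|_{\textup{op}}\le\alpha$ and $M_T$) is bounded below uniformly in the restart time $t_1$. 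Restarting at $t_1=T-\tau^*$, where $\BP(t_1)$ is already block-diagonal, and iterating finitely many times --- uniqueness of the mild solution being used to glue the pieces --- yields block-diagonality of $\BP(t)$ on all of $[0,T]$.

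\textbf{Step 2: the two restricted equations.} Once $\BP$ is block-diagonal, I restrict \eqref{eq:inf-Riccati} to $(\SBS^\perp)^n$. There $[D\FM]$ and $[BR^{-1}B^\TRANS\FM]$ are the zero operator, so both the term $\BP[D\FM]$ and the quadratic term vanish, and $\BP^\perp(t):=\BP(t)|_{(\SBS^\perp)^n}$ satisfies the \emph{linear} operator equation $-\dot{\BP^\perp}=[A_c(t)\BI_{\SBS^\perp}]^\TRANS\BP^\perp+\BP^\perp[A_c(t)\BI_{\SBS^\perp}]-[(QH-\Pi_t D)\BI_{\SBS^\perp}]$ with the terminal value inherited from $[Q_T H\BI]$. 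Using $[X\BI_{\SBS^\perp}][Y\BI_{\SBS^\perp}]=[XY\BI_{\SBS^\perp}]$, $[X\BI_{\SBS^\perp}]^\TRANS=[X^\TRANS\BI_{\SBS^\perp}]$ and linearity of $X\mapsto[X\BI_{\SBS^\perp}]$, one checks that $[P^\perp(t)\BI_{\SBS^\perp}]$, with $P^\perp$ the solution of the $n\times n$ matrix equation \eqref{eq:Riccati-Perp}, solves precisely this equation and has the prescribed terminal value; uniqueness for the linear operator ODE (the argument of Lemma~\ref{lem:classical-sln-strong-cnt-op}) then forces $\BP^\perp(t)=[P^\perp(t)\BI_{\SBS^\perp}]$. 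Restricting \eqref{eq:inf-Riccati} instead to $(\SBS)^n$ keeps every term and shows that $\bar{\BP}(t):=\BP(t)|_{(\SBS)^n}$ --- which is exactly the $(\SBS)^n$-equivalent operator of $\BP(t)$ --- satisfies \eqref{eq:P-bar}, its uniqueness being inherited from the uniqueness in \textup{(A1)}. Adding the two pieces, and noting that the $(\SBS)^n$-equivalent operator annihilates $(\SBS^\perp)^n$ while $[P^\perp(t)\BI_{\SBS^\perp}]$ annihilates $(\SBS)^n$, gives the decomposition \eqref{eq:decomp-riccati-sol}.

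\textbf{Main obstacle.} The delicate point is Step~1, namely transferring the invariant-subspace structure from the \emph{coefficients} of \eqref{eq:inf-Riccati} to its \emph{solution}. The Picard induction is clean on the small interval supplied by Lemma~\ref{lem:local-existence-Riccati}, but that lemma is purely local, so one must bootstrap it across $[0,T]$ using the global boundedness of $\BP$ furnished by \textup{(A1)} together with uniqueness of the mild solution; without (A1) there is no reason this propagation succeeds. Everything after Step~1 is routine manipulation of block operators and uniqueness for linear operator differential equations.
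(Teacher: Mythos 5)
Your proposal is correct and follows essentially the same route as the paper: decompose along the common invariant subspaces $(\SBS)^n$ and $(\SBS^\perp)^n$ of all coefficient operators, observe that $[D\FM]$ and $[BR^{-1}B^\TRANS\FM]$ annihilate $(\SBS^\perp)^n$ so the restricted equation there becomes the linear equation \eqref{eq:Riccati-Perp}, and identify the restriction to $(\SBS)^n$ with \eqref{eq:P-bar}. The one substantive difference is your Step~1: the paper simply asserts ("Clearly, both $(\SBS)^n$ and $(\SBS^\perp)^n$ are invariant subspaces of $[\Theta(\BP(t))]$") that the solution respects the splitting, whereas you supply a Picard-iteration plus uniqueness/bootstrapping argument that actually proves $\BP(t)$ is block-diagonal; this fills a gap the paper leaves implicit and is the right way to make the decomposition rigorous.
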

\begin{proof}%
Let  $[\Theta(t)]: (L^2[0,1])^n\to (L^2[0,1])^n$, with $t\in[0,T]$, denote the operator that corresponds to the right-hand side of the operator Riccati equation \eqref{eq:inf-Riccati}, that is, 
\[
\begin{aligned}
    [\Theta(\BP(t))]\triangleq & ~ \BA(t)^\TRANS \BP + \BP \BA(t)+ \BP [D \FM]  {-} \BP [BR^{-1}B^\TRANS \FM] \BP  \\
	& \qquad   {-} [( Q H - \Pi_t D )\BI], \quad t\in[0,T].
\end{aligned}
\]
Clearly, both $(\SBS)^n$ and $(\SBS^\perp)^n$ are invariant subspaces of $[\Theta(\BP(t))]$.
For any $\Fv \in (L^2[0,1])^n$, there exists a unique component-wise decomposition  $\Fv = \bar\Fv +  \Fv^\perp$ where $\bar \Fv \in  (\SBS)^n$ and $ \Fv^\perp \in (\SBS^\perp)^n$ (see Section \ref{subsec:invarint-subspace-intro}). Then
\begin{equation}
\begin{aligned}
    [\Theta(\BP(t))]\Fv & =[\Theta(\BP(t))]\bar\Fv +  [\Theta(\BP(t))]\Fv^\perp\\ &=[\Theta(\bar\BP(t))] \bar\Fv + [\Theta(\BP^\perp(t))] \Fv^\perp
\end{aligned}
\end{equation}
where $\bar\BP(t)$ (resp. $\BP^\perp(t)$) is the $(\SBS)^n$-equivalent operator (resp. ${(\SBS^\perp)}^n$ equivalent operator) of $\BP(t)$.  \textcolor{black}{By Proposition \ref{prop:strong-ricc-sln}, the mild solution is equivalent to the strongly differentiable solution, and hence} the Riccati equation \eqref{eq:inf-Riccati} leads to
\begin{equation}\label{eq:inter-Riccati}
\begin{aligned}
    -    \frac{d}{d t} \{(\bar \BP(t) + \BP^\perp(t)) \Fv\} & =  - \frac{d }{d t} \{(\bar \BP(t)\bar\Fv\}  -  \frac{d}{d t} \{\BP^\perp(t)) \Fv^\perp \} \\
        &=[\Theta(\bar\BP(t))] \bar\Fv + [\Theta(\BP^\perp(t))] \Fv^\perp.
\end{aligned}
\end{equation}
Therefore, based on the property \eqref{eq:equivalent-operator} of equivalent operators, we obtain the following decoupled equations:
\begin{equation}
\begin{aligned}
     &-\frac{d }{d t} \{(\bar \BP(t)\bar\Fv\}  = [\Theta(\bar\BP(t))] \bar\Fv, \\
     & -\frac{d }{d t} \{\BP^\perp(t)) \Fv^\perp \} = [\Theta(\BP^\perp(t))] \Fv^\perp
\end{aligned}
\end{equation}
with terminal conditions $\bar\BP(T)	=[\gamma  Q_T\BI]\bar \Fv$ and $\BP^\perp(T)	=[Q_T H\BI] \Fv^\perp.$
Since the choice of $\Fv \in (L^2[0,1])^n$ is arbitrary, 
  the solutions are equivalently given by the strongly differentiable solution to \eqref{eq:P-bar} and \eqref{eq:Riccati-Perp}  where {the solutions also lie in \textcolor{black}{$C_s\big([0,T]; \mathcal{L}(L^2[0,1])^n\big)$}}. 
Therefore the solution to the Riccati equation \eqref{eq:inf-Riccati} is given by \eqref{eq:decomp-riccati-sol},  \eqref{eq:P-bar} and \eqref{eq:Riccati-Perp}.
\end{proof}

\begin{remark}
The key property that allows the decomposition in Theorem \ref{thm:Riccati-Main} is  that the parameter operators  $[A_c(t) \BI]$, $[D \FM]$,   $[BR^{-1}B^\TRANS \FM],~ [(QH - \Pi_t D )\BI]$ and $[Q_T H\BI]$ in the Riccati equation \eqref{eq:inf-Riccati} share the same orthogonal invariant subspaces $(\SBS)^n$ and $(\SBS^\perp)^n$ (see \cite[Prop. 3]{ShuangPeterTCNS20}). 
Hence such decompositions can be generalized to Riccati equations with general parameter operators in $\mathcal{L}((L^2[0,1])^n)$  where the parameter operators are only required to share some common orthogonal invariant subspaces $(\SBS)^n$ and $(\SBS^\perp)^n$.
\end{remark}

Let $\{\Ff_\ell\}_{\ell \in \mathcal{I}_\lambda}$ be the orthonormal eigenfunctions of $\FM$ where $\mathcal{I}_\lambda$ denotes the index multiset for all the non-zero eigenvalues of $\FM$.  Let $\lambda_\ell$ be the eigenvalue of $\FM$ corresponding to $\Ff_\ell$.
\begin{corollary}[Riccati Equation Spectral Decomposition] \label{cor:riccati-spec}
If  Assumption \textup{(A1)} holds, then 
the solution to the operator Riccati equation  \eqref{eq:inf-Riccati} is equivalently given by
	\begin{equation}\label{eq:eigen-finite-graphon-riccati}
	\begin{aligned}
			\BP(t) &= [P^\perp(t) \BI_{S^\perp}] + \sum_{\ell\in \mathcal{I}_{\lambda}} [{\bar P}^\ell(t) \Ff_\ell \Ff_\ell^\TRANS]\\
				&= [P^\perp(t) \BI] + \sum_{\ell\in \mathcal{I}_{\lambda}}  [({\bar P}^\ell(t) - P^\perp(t)) \Ff_\ell \Ff_\ell^\TRANS],  
	\end{aligned}
	\end{equation}
	where $t\in [0,T]$, $\BP(t)\in\mathcal{L}((L^2[0,1])^n) $, $[P^\perp(t) \BI]\in \mathcal{L}((L^2[0,1])^n)$, $[({\bar P}^\ell(t) - P^\perp(t)) \Ff_\ell \Ff_\ell^\TRANS] \in \mathcal{L}((\SBS)^n)$, 
	 $P^\perp(t)\in \BR^{n\times n}$ is given by the $n\times n$-dimensional matrix differential equation \eqref{eq:Riccati-Perp}, and $\bar{P}^\ell(t) \in \BR^{n\times n}$ 	
	is given by the following $n\times n$-dimensional non-symmetric matrix Riccati equation
\begin{equation} \label{eq:Riccati-spectral}
\begin{aligned}
	-\dot {\bar P}^\ell  = &~ A_c(t)^\TRANS \bar{P}^\ell+  \bar{P}^\ell (A_c(t)+ \lambda_\ell D)    
	 {-} \lambda_\ell\bar P^\ell BR^{-1}B^\TRANS  \bar P^\ell \\&~    {-} ( QH - \Pi_t D ), \quad
	 \bar P^\ell(T)=Q_T H, \quad \ell\in \mathcal{I}_{\lambda}.
	 \end{aligned}
\end{equation}
\end{corollary}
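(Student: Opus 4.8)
The plan is to deduce this corollary from Theorem~\ref{thm:Riccati-Main} by a second, finer decomposition of the $(\SBS)^n$-equivalent part $\bar\BP(\cdot)$ along the individual eigendirections of $\FM$. First I would invoke Theorem~\ref{thm:Riccati-Main}, which under (A1) gives $\BP(t) = [P^\perp(t)\BI_{\SBS^\perp}] + \bar\BP(t)$ with $P^\perp$ solving the linear matrix equation \eqref{eq:Riccati-Perp} and $\bar\BP(t)\in\mathcal{L}((\SBS)^n)$ the strongly differentiable solution of \eqref{eq:P-bar}; it then remains only to analyze \eqref{eq:P-bar} on $(\SBS)^n$.

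Next I would observe that within $(\SBS)^n$ the subspaces $(\mathrm{span}(\Ff_\ell))^n$, $\ell\in\mathcal{I}_\lambda$, are mutually orthogonal, their (closed) direct sum is $(\SBS)^n$, and each of them is an invariant subspace of every parameter operator on the right-hand side of \eqref{eq:P-bar}: the operators $[A_c(t)\BI_\SBS]$, $[(QH-\Pi_t D)\BI_\SBS]$ and $[Q_T H\BI_\SBS]$ act block-diagonally in the eigenbasis of $\FM$, while $[D\FM]$ and $[BR^{-1}B^\TRANS\FM]$ act on $(\mathrm{span}(\Ff_\ell))^n$ as $\lambda_\ell[D\BI]$ and $\lambda_\ell[BR^{-1}B^\TRANS\BI]$ respectively, exactly as in the computation in the proof of Proposition~\ref{prop:fixepoint-eigenprojection} and by \cite[Prop.~3]{ShuangPeterTCNS20}. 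Repeating the equivalent-operator argument from the proof of Theorem~\ref{thm:Riccati-Main} — now with the orthogonal family $\{(\mathrm{span}(\Ff_\ell))^n\}_{\ell\in\mathcal{I}_\lambda}$ in place of the single pair $(\SBS)^n,(\SBS^\perp)^n$ — shows that $\bar\BP(t)$ leaves each $(\mathrm{span}(\Ff_\ell))^n$ invariant, hence admits the representation $\bar\BP(t) = \sum_{\ell\in\mathcal{I}_\lambda}[\bar P^\ell(t)\,\Ff_\ell\Ff_\ell^\TRANS]$, where $\bar P^\ell(t)\in\BR^{n\times n}$ is the matrix of the restriction of $\bar\BP(t)$ to $(\mathrm{span}(\Ff_\ell))^n$ in the basis induced by $\Ff_\ell$, and $\Ff_\ell\Ff_\ell^\TRANS$ denotes the rank-one orthogonal projection $\Fv\mapsto\langle\Ff_\ell,\Fv\rangle\Ff_\ell$ on $L^2[0,1]$ composed with a matrix in the sense of \eqref{eq:bound-linear-operator}.

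Then I would substitute this ansatz into \eqref{eq:P-bar}, use the identity $[F\FM]\,[\bar P^\ell(t)\Ff_\ell\Ff_\ell^\TRANS] = \lambda_\ell[F\BI]\,[\bar P^\ell(t)\Ff_\ell\Ff_\ell^\TRANS] = \lambda_\ell[F\bar P^\ell(t)\,\Ff_\ell\Ff_\ell^\TRANS]$ valid for any $F\in\BR^{n\times n}$, and project the resulting operator identity onto the $\ell$-th eigendirection. Cross terms between distinct indices vanish by orthogonality, and what survives on each block is precisely the $n\times n$ non-symmetric matrix Riccati equation \eqref{eq:Riccati-spectral} with terminal value $Q_T H$, read off from $\bar\BP(T)=[Q_T H\BI_\SBS]$; strong differentiability of $\bar\BP(\cdot)$ (hence termwise differentiability of $\bar P^\ell(\cdot)$) is supplied by Proposition~\ref{prop:strong-ricc-sln}. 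The first line of \eqref{eq:eigen-finite-graphon-riccati} is then the reassembled expression, and the second line follows by writing $\BI_{\SBS^\perp} = \BI - \sum_{\ell\in\mathcal{I}_\lambda}\Ff_\ell\Ff_\ell^\TRANS$, so that $[P^\perp(t)\BI_{\SBS^\perp}] = [P^\perp(t)\BI] - \sum_{\ell}[P^\perp(t)\Ff_\ell\Ff_\ell^\TRANS]$, and regrouping with the $\bar P^\ell$ series.

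I expect the main obstacle to be the bookkeeping when $\mathcal{I}_\lambda$ is countably infinite: one must justify that the strongly convergent series $\sum_{\ell}[\bar P^\ell(t)\Ff_\ell\Ff_\ell^\TRANS]$ represents a bounded operator — which it does, being a strongly convergent orthogonal sum of uniformly bounded blocks, with $\sup_{t}\|\bar\BP(t)\|_{\mathrm{op}}<\infty$ obtained from (A1) via the Uniform Boundedness Principle as in the framework set up before Proposition~\ref{prop:strong-ricc-sln} — and that projection onto a fixed eigendirection commutes with the strong time derivative, so that projecting \eqref{eq:P-bar} is legitimate. Both points follow from the strong-continuity setting already established, but should be stated explicitly; the remainder of the argument is a routine transcription of the proof of Theorem~\ref{thm:Riccati-Main}.
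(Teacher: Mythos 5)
Your proposal is correct and follows the route the paper intends: the corollary is left without an explicit proof precisely because it is obtained from Theorem~\ref{thm:Riccati-Main} by refining the $(\SBS)^n$-block of the decomposition along the orthonormal eigendirections of $\FM$, using the same equivalent-operator/invariant-subspace argument together with the identity $[F\FM]=\lambda_\ell[F\BI]$ on $(\textup{span}(\Ff_\ell))^n$, exactly as you do. Your added remarks on the strongly convergent orthogonal sum when $\mathcal{I}_\lambda$ is infinite and on $\BI_{\SBS^\perp}=\BI-\sum_\ell \Ff_\ell\Ff_\ell^\TRANS$ for the second line of \eqref{eq:eigen-finite-graphon-riccati} are the right points to make explicit.
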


\begin{remark}

 Each agent only needs to solve  $d_\textup{dist}$ number of $n\times n$-dimensional Riccati equations as 
\eqref{eq:Riccati-spectral} and one $n\times n$-dimensional matrix differential equation as \eqref{eq:Riccati-Perp}, where  $d_\textup{dist}$ denotes the number of distinct non-zero eigenvalues of $\FM$. We note that $d_\textup{dist}\leq \textup{rank}(\FM)$. If $d_\textup{dist}$ is infinite, one may rely on approximations via a finite number of eigendirections. 
\end{remark}
\begin{remark}
The decomposition of $\BP$ in Corollary \ref{cor:riccati-spec} and the dynamics of $\Fe$ and $\Fz$ in \eqref{eq:e-process} and \eqref{eq:ze-process}  allow us to project 
the processes $\Fe$ and $\Fz$ into different eigen directions (similar to those projections in Proposition \ref{prop:fixepoint-eigenprojection}); 
furthermore, the relation $ \Fs(t)= \BP(t)\Fz(t)+\Fe(t)$ for all $t\in [0,T]$ allows the projections of $\Fs$ into different eigen directions as well.  
\end{remark}
Consider the following finite-rank assumption: 
\begin{description}
	\item[\bf (A2)]
	The {characterizing} graphon invariant subspace $\SBS$ of the limit graphon $\FM\in \ESC$  is finite dimensional with dimension  $d$. 
\end{description}
Under Assumption (A2), let $\{\Ff_1,..., \Ff_d\}$ be the orthonormal basis functions for the {characterizing} graphon invariant subspace $\SBS$ in $L^2[0,1]$ (which are not necessarily eigenfunctions of $\FM$). 
For a matrix $Q=[q_{_{\ell h}}] \in \BR^{nd\times nd}$ with $q_{_{\ell h}}\in \BR^{n\times n}$ for $\ell, h \in\{1,...,d\}$,
let $[Q \Ff \Ff^\TRANS ]\triangleq \sum_{\ell=1}^d \sum_{h=1}^d [q_{_{\ell h}} \Ff_\ell \Ff_h^\TRANS]$, that is,  for almost all $(x,y) \in [0,1]^2$,
$[Q \Ff \Ff^\TRANS](x,y)\triangleq \sum_{\ell=1}^d \sum_{h=1}^d q_{_{\ell h}} \Ff_\ell(x) \Ff_h(y)  .$
Let the elements of
$M_\Ff \in \BR^{d\times d}$ be given by 
$
	{M_\Ff}_{\ell h} = \langle \Ff_\ell, \FM \Ff_h\rangle$, for all $\ell, h \in \{1,\ldots,d\}.
$
\begin{corollary}[Finite-Rank Spectral Decomposition]\label{eq:corollary1}
Assume   \textup{(A1)} and \textup{(A2)} hold.
  Let $\{\Ff_1,\ldots,\Ff_d\}$  be an orthonormal basis  of the {characterizing} graphon invariant subspace $\mathcal{S} \subset L^2[0,1]$ of $\FM \in \ESC$. Then 
the solution to the non-symmetric operator Riccati equation  \eqref{eq:inf-Riccati} is given by 
	\begin{equation*}
	\begin{aligned}
			\BP(t) & = [P^\perp(t) \BI_{\SBS^\perp}] +  [{\bar P}(t)  \Ff\Ff^\TRANS] \\
		 &= [P^\perp(t) \BI] +  [({\bar P}(t)-I_d \otimes P^\perp(t))  \Ff\Ff^\TRANS], \quad t\in [0,T],
	\end{aligned}
	\end{equation*}
	where $[P^\perp(t) \BI_{\SBS^\perp}]\in \mathcal{L}((\SBS^\perp)^n)$,  $[{\bar P}(t)  \Ff\Ff^\TRANS] \in \mathcal{L}(\SBS^n)$, $ [({\bar P}(t)-I_d \otimes P^\perp(t))  \Ff\Ff^\TRANS]\in \mathcal{L}(\SBS^n)$, $[P^\perp(t) \BI] \in \mathcal{L}((L^2[0,1])^n)$, 
	 $P^\perp(t)\in \BR^{n\times n}$ is  given by \eqref{eq:Riccati-Perp}, and $\bar{P}(t) \in \BR^{nd\times nd}$  is given by the following  $dn\times dn$-dimensional non-symmetric matrix Riccati equation 
\begin{equation}
\begin{aligned}
	- \dot {\bar P}  =& (I_d\otimes A_c(t)^\TRANS) \bar{P} +  \bar{P}( I_d\otimes A_c(t) + M_\Ff \otimes D)\\
	&- \bar P (M_\Ff\otimes BR^{-1}B^\TRANS ) \bar P  
	   - I_d\otimes ( Q H - \Pi_t D ),\\ \bar P(T)&=I_d\otimes Q_T H, \quad t \in [0,T].
\end{aligned}
\end{equation}
\end{corollary}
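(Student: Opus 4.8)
The plan is to reduce the statement to Theorem~\ref{thm:Riccati-Main} and then to coordinatize the finite-dimensional part. Since \textup{(A1)} holds, Theorem~\ref{thm:Riccati-Main} applies and already gives
\[
\BP(t) = [P^\perp(t)\BI_{\SBS^\perp}] + \bar\BP(t), \qquad t\in[0,T],
\]
where $P^\perp(t)\in\BR^{n\times n}$ solves the linear matrix differential equation \eqref{eq:Riccati-Perp} and $\bar\BP(t)\in\mathcal{L}((\SBS)^n)$ is the strongly differentiable solution of the operator Riccati equation \eqref{eq:P-bar} (recall by Proposition~\ref{prop:strong-ricc-sln} the mild solution is strongly differentiable). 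Thus the only thing left to prove is that, under \textup{(A2)}, equation \eqref{eq:P-bar} posed on the finite-dimensional space $(\SBS)^n$ is equivalent to the $dn\times dn$ matrix Riccati equation in the statement, and then to rewrite $[P^\perp(t)\BI_{\SBS^\perp}]+\bar\BP(t)$ in the two displayed forms.

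First I would set up the coordinate isomorphism. Under \textup{(A2)}, every $\Fv\in(\SBS)^n$ has a unique expansion $\Fv=\sum_{\ell=1}^d v^\ell\Ff_\ell$ with coefficient vectors $v^\ell\in\BR^n$, and stacking the $v^\ell$ gives a linear isomorphism $(\SBS)^n\cong\BR^{dn}$. Using orthonormality of $\{\Ff_1,\ldots,\Ff_d\}$ one checks that under this isomorphism the operator $[X\Ff\Ff^\TRANS]$ restricted to $(\SBS)^n$ (for a block matrix $X=[x_{\ell h}]\in\BR^{dn\times dn}$, in the notation introduced above) is represented by the matrix $X$ itself, since its $\Ff_\ell$-coefficient applied to $\Fv$ equals $\sum_h x_{\ell h}v^h$. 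In particular, the $(\SBS)^n$-equivalent operators of the parameter operators are represented as follows: $[A_c(t)\BI_\SBS]\leftrightarrow I_d\otimes A_c(t)$, its adjoint $[A_c(t)\BI_\SBS]^\TRANS\leftrightarrow I_d\otimes A_c(t)^\TRANS$, $[(QH-\Pi_tD)\BI_\SBS]\leftrightarrow I_d\otimes(QH-\Pi_tD)$, $[Q_TH\BI_\SBS]\leftrightarrow I_d\otimes Q_TH$; and, using that $\FM$ maps $\SBS$ into $\SBS$ (so $\FM\Ff_h=\sum_\ell {M_\Ff}_{\ell h}\Ff_\ell$ with ${M_\Ff}_{\ell h}=\langle\Ff_\ell,\FM\Ff_h\rangle$), one gets $[D\FM]\leftrightarrow M_\Ff\otimes D$ and $[BR^{-1}B^\TRANS\FM]\leftrightarrow M_\Ff\otimes BR^{-1}B^\TRANS$ on $(\SBS)^n$. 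Composition of operators corresponds to ordinary matrix multiplication in the same order, so transporting \eqref{eq:P-bar} through the isomorphism, writing $\bar\BP(t)\leftrightarrow\bar P(t)\in\BR^{dn\times dn}$, produces exactly the asserted $dn\times dn$ non-symmetric matrix Riccati equation with terminal condition $\bar P(T)=I_d\otimes Q_TH$; existence and uniqueness of $\bar P$ on $[0,T]$ are inherited from those of $\bar\BP$, hence from \textup{(A1)} via Theorem~\ref{thm:Riccati-Main}. This also identifies $\bar\BP(t)=[\bar P(t)\Ff\Ff^\TRANS]$, giving the first displayed formula for $\BP(t)$.

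For the second form I would use $\BI_{\SBS^\perp}=\BI-\BI_\SBS$ on $L^2[0,1]$, hence $[P^\perp(t)\BI_{\SBS^\perp}]=[P^\perp(t)\BI]-[P^\perp(t)\BI_\SBS]$ with $[P^\perp(t)\BI_\SBS]=[(I_d\otimes P^\perp(t))\Ff\Ff^\TRANS]$; adding $\bar\BP(t)=[\bar P(t)\Ff\Ff^\TRANS]$ and invoking linearity of $X\mapsto[X\Ff\Ff^\TRANS]$ yields $\BP(t)=[P^\perp(t)\BI]+[(\bar P(t)-I_d\otimes P^\perp(t))\Ff\Ff^\TRANS]$, and the membership claims for the three pieces follow from $(\SBS)^n$ and $(\SBS^\perp)^n$ being the relevant (orthogonal) invariant subspaces. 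The step I expect to require the most care is the bookkeeping in the third paragraph: verifying that the restriction of each parameter operator to the finite-dimensional invariant subspace $(\SBS)^n$, expressed in the (not necessarily eigen) basis $\{\Ff_\ell\}$, is precisely the stated Kronecker product — in particular pinning down $M_\Ff$ as $[\langle\Ff_\ell,\FM\Ff_h\rangle]$ rather than its transpose, and checking that the adjoint term and the two bilinear (quadratic) terms keep the correct left/right multiplication order under the isomorphism. Everything else is a direct application of Theorem~\ref{thm:Riccati-Main}.
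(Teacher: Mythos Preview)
Your proposal is correct and follows exactly the approach the paper implicitly intends: the corollary is stated without proof and is meant to follow directly from Theorem~\ref{thm:Riccati-Main} by coordinatizing the finite-dimensional piece $\bar\BP(t)\in\mathcal{L}((\SBS)^n)$ in the orthonormal basis $\{\Ff_1,\ldots,\Ff_d\}$. Your bookkeeping for the Kronecker-product representations of $[A_c(t)\BI_\SBS]$, $[D\FM]$, and $[BR^{-1}B^\TRANS\FM]$ is correct (and note that since $\FM$ is self-adjoint, $M_\Ff$ is symmetric, so the $M_\Ff$-versus-$M_\Ff^\TRANS$ worry is in fact moot here), as is the derivation of the second displayed form via $\BI_{\SBS^\perp}=\BI-\BI_\SBS$.
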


\section{Discussion}

{
\section{Examples}
\subsection{Example 1: Uniform Attachment (UA) Graphs}
\subsubsection{Uniform Attachment Procedure and the Graphon Limit} \label{subsec:uniform-attachment-procedure}
Uniform attachment  graphs  are generated as follows:   
 (S1)  Start with stage $k=2$ and repeat the following steps (S2)-(S3);
 (S2) Add an edge  with probability $\frac1k$ to each node pair that is not connected; 
    (S3) Increase the stage number $k$ by $1$.

 The sequence of  random graphs  generated based on the uniform attachment procedure converges to the limit graphon 
$
\FM (x,y) = 1-\max{(x,y)}, ~ x, y \in [0,1],
$
under the cut metric with probability $1$ (see \cite[Prop. 11.40]{lovasz2012large}).
%
\begin{proposition}[Spectral Decomposition of UA Graphon]\label{prop:uniform-attachment-spectral}
All the eigen pairs for the uniform attachment graphon limit
$
\FM (x,y) = 1-\max{(x,y)},  x, y \in [0,1]
$
are given by 
\begin{equation}\label{eq:eigen-pairs-UA}
\left(\sqrt{2}\cos\left(\frac{k\pi (\cdot) }{2}\right),  \frac{4}{k^2\pi^2}\right), \quad k \in \{1,3,5,..\},
\end{equation}
that is,  the spectral decomposition of $\FM$ is given by 
\begin{equation}\label{eq:spectral-decomp-UA}
\FM(x,y) = \sum_{k=1,3,...} \frac{4}{k^2\pi^2} \sqrt{2}\cos\left(\frac{\pi k x }{2}\right)
 \sqrt{2} \cos\left(\frac{\pi k y }{2}\right),
\end{equation}
with $x,y \in [0,1]$.
\end{proposition}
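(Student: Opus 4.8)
The plan is to recast the eigenvalue equation $\mathbf{M}\mathbf{f}=\lambda\mathbf{f}$ as a regular Sturm--Liouville boundary value problem, solve it explicitly, verify the candidate solutions by direct substitution, and then invoke completeness of the Sturm--Liouville eigenbasis together with the spectral theorem for compact self-adjoint operators to conclude the list is exhaustive and to obtain the Mercer expansion \eqref{eq:spectral-decomp-UA}.

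First I would split the eigenvalue equation using $1-\max(x,y)=1-x$ for $y\le x$ and $1-y$ for $y>x$, so that $\mathbf{M}\mathbf{f}=\lambda\mathbf{f}$ reads $(1-x)\int_0^x\mathbf{f}(y)\,dy+\int_x^1(1-y)\mathbf{f}(y)\,dy=\lambda\mathbf{f}(x)$ for a.e. $x\in[0,1]$. Evaluating at $x=1$ forces $\lambda\mathbf{f}(1)=0$; differentiating once in $x$ (the $(1-x)\mathbf{f}(x)$ terms cancel) gives $-\int_0^x\mathbf{f}(y)\,dy=\lambda\mathbf{f}'(x)$, and evaluating this at $x=0$ forces $\lambda\mathbf{f}'(0)=0$; differentiating once more gives $-\mathbf{f}(x)=\lambda\mathbf{f}''(x)$. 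Hence for $\lambda\neq 0$ any eigenfunction solves $\lambda\mathbf{f}''+\mathbf{f}=0$ with the separated boundary conditions $\mathbf{f}'(0)=0$ and $\mathbf{f}(1)=0$. The general solution of the ODE is $\mathbf{f}(x)=A\cos(\omega x)+B\sin(\omega x)$ with $\omega=1/\sqrt{\lambda}$; the condition $\mathbf{f}'(0)=0$ kills $B$, and $\mathbf{f}(1)=0$ forces $\cos\omega=0$, i.e. $\omega=k\pi/2$ for odd $k$, whence $\lambda=1/\omega^2=4/(k^2\pi^2)$; normalizing with $\int_0^1\cos^2(k\pi x/2)\,dx=\tfrac12$ gives $A=\sqrt{2}$.

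Next I would check sufficiency, since passing to the ODE only yields necessity: a short integration by parts shows that $\mathbf{f}_k(x)=\sqrt{2}\cos(k\pi x/2)$ satisfies $(1-x)\int_0^x\mathbf{f}_k(y)\,dy+\int_x^1(1-y)\mathbf{f}_k(y)\,dy=\mathbf{f}_k(x)/\omega^2$, the boundary terms vanishing because $\cos\omega=0$, so each pair in \eqref{eq:eigen-pairs-UA} is genuinely an eigenpair. To see there are no others, recall that $\mathbf{M}$ induces a self-adjoint Hilbert--Schmidt operator on $L^2[0,1]$; its kernel is trivial (setting $\lambda=0$ in the integral equation and differentiating twice forces $\mathbf{f}\equiv0$), and $\{\mathbf{f}_k:k\text{ odd}\}$ is a complete orthonormal system in $L^2[0,1]$ because it is precisely the eigenbasis of the regular Sturm--Liouville operator $-d^2/dx^2$ with boundary conditions $\mathbf{f}'(0)=0$, $\mathbf{f}(1)=0$. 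By the spectral theorem for compact self-adjoint operators the eigenpairs \eqref{eq:eigen-pairs-UA} therefore exhaust the point spectrum, and the expansion \eqref{eq:spectral-decomp-UA} is the associated spectral (Mercer) decomposition, converging in $L^2$ and, since all eigenvalues are positive and $\mathbf{M}$ is continuous, uniformly by Mercer's theorem.

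The main obstacle is the logical discipline rather than any hard estimate: one must extract the boundary conditions $\mathbf{f}'(0)=0$ and $\mathbf{f}(1)=0$ from the integral equation itself rather than postulate them, and one must confirm sufficiency by direct substitution because differentiation can in principle introduce spurious solutions; the completeness of the Sturm--Liouville system, though classical, is the step that actually guarantees that no eigenvalue is missed and hence that \eqref{eq:eigen-pairs-UA} lists \emph{all} eigenpairs.
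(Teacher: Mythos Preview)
Your argument is correct but follows a different route from the paper. The paper first verifies by direct integration that each $\cos(k\pi x/2)$, $k$ odd, satisfies $\FM\Ff_k=(4/k^2\pi^2)\Ff_k$, and then argues exhaustiveness via the Hilbert--Schmidt identity $\|\FM\|_2^2=\sum_\ell\lambda_\ell^2$: computing $\sum_{k\text{ odd}}(4/k^2\pi^2)^2=(16/\pi^4)(\pi^4/96)=1/6$ and $\int_{[0,1]^2}(1-\max(x,y))^2\,dx\,dy=1/6$, the equality rules out any missing nonzero eigenvalue. Your Sturm--Liouville reduction has the advantage of \emph{deriving} the cosine form rather than guessing it, and of handling completeness structurally via the S--L expansion theorem; the paper's route is more elementary---no ODE theory, only a standard series sum and an easy double integral---but presupposes the answer. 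Two small points worth tightening in your version: the bootstrap from $L^2$ to $C^2$ regularity of an eigenfunction for $\lambda\neq0$ (so that the two differentiations are justified) deserves one line, and the case $\lambda<0$ should be dismissed explicitly (the ODE then has hyperbolic solutions, and $\cosh\omega\neq0$ kills them).
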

\vspace{-3pt}
See Appendix \ref{subsec:proof-prop-UA} for the proof.
\vspace{5pt}

Thus, the uniform attachment graphon limit is a particular case of the sinusoidal graphons studied in \cite{ShuangPeterCDC19W1,shuangPhDthesis2018}.

\vspace{3pt}
\subsubsection{Simulations on Uniform Attachment Graphs}
\begin{figure}[htb]
    \centering
     \includegraphics[width=8.5cm,trim = {4.2cm 0 3.2cm 0}, clip]{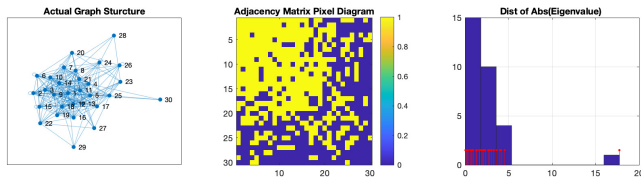}
    \caption{A random graph instance with 30 nodes generated following the uniform attachment procedure, its pixel representation and the distribution of modulus of the eigenvalues.}
    \label{fig:UA-approximate-connection-among-nodes}
\end{figure}
\begin{figure}
    \centering
    \includegraphics[width=8.5cm,trim = {1.5cm 0 1.5cm 0}, clip]{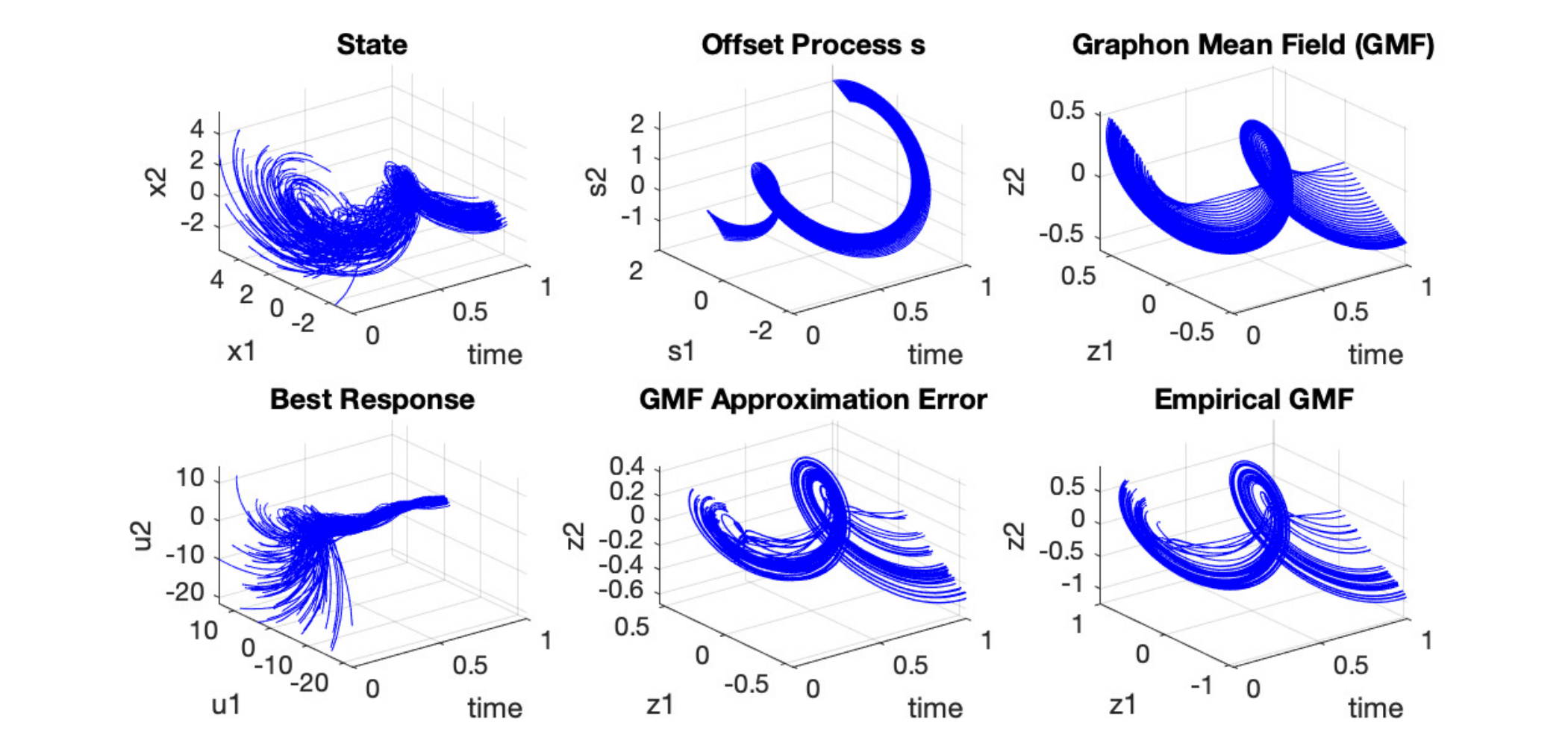}
    \caption{Simulations on the uniform attachment graph example in Fig. \ref{fig:UA-approximate-connection-among-nodes}  with 30 nodes where each node contains $4$ agents and each agent has $2$ states. }
    \label{fig:UA-Approximate}
\end{figure}
\begin{figure}[htb]
    \centering
    \includegraphics[width=8.5cm]{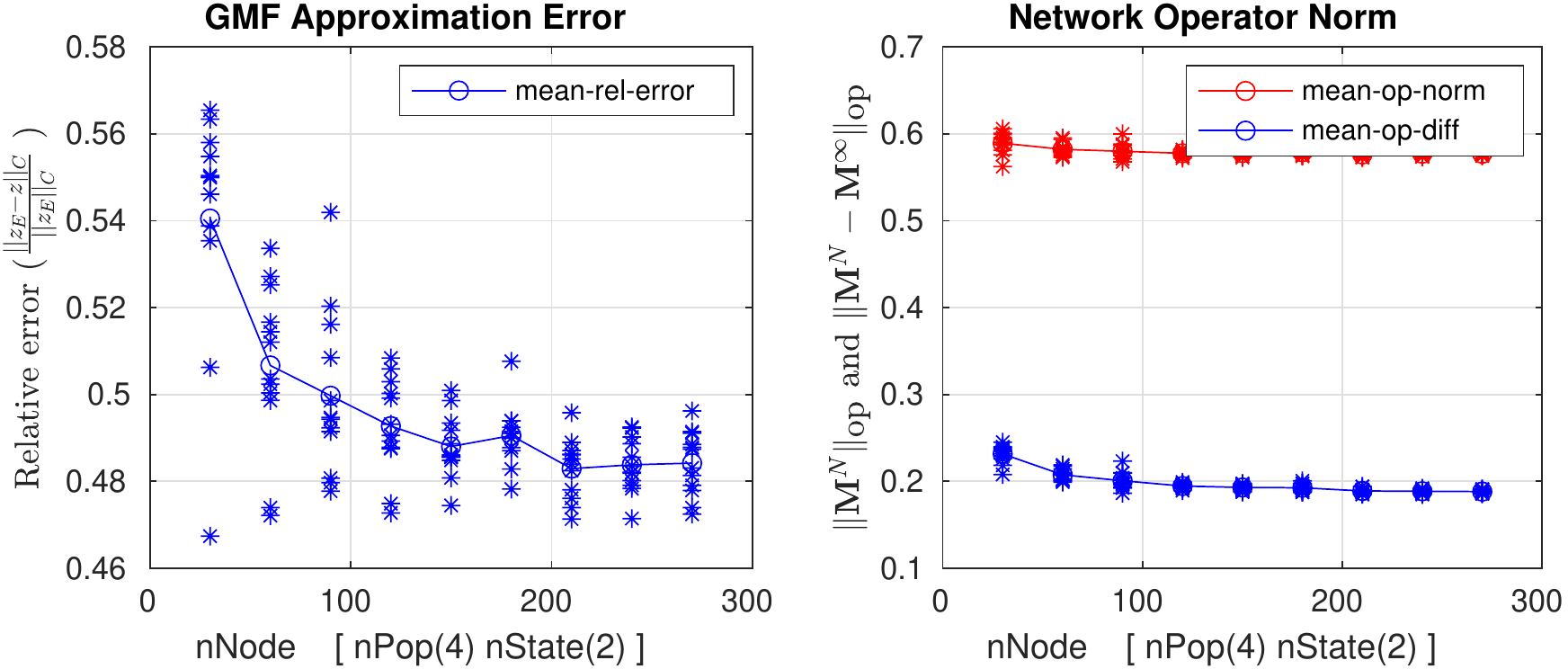}
    \caption{The relative error in the graphon mean field decreases as graph sizes increase.  12 simulation independent experiments are carried out for each size. The nodal population size denoted by nPop is $4$, the local state dimension denoted by nState is 2. In the figure on the right, black dots represent the values for $\|\SM-\FM\|_{\textup{op}}$ in different simulation experiments. }     \label{fig:UAApproxSpectral}
\end{figure}
The parameters in the simulations are: 
\begin{equation} \label{eq:parameters}
    \begin{aligned}
    &A =\MATRIX{0&10\\-10&0}, ~ Q = \MATRIX{0.5&0\\0&0.5}, ~ \Sigma= \MATRIX{0.1&0\\0&0.1}, \\
    &B=D=R=Q_T = \MATRIX{1&0\\0&1},~\eta =\MATRIX{2\\2}, ~H =\MATRIX{1&0\\0&1},\\
    &~T=1,n =2, ~\N=30, ~|C_\ell| = 4, ~ 1\leq \ell\leq \N.
    \end{aligned}
\end{equation}
The graphon limit is approximated by the 5 most signification eigen directions. We observe that the approximation by the 5 most significant eigen directions of $\FM$ has less than $1\%$ relative error in terms of the operator norm. %
The initial conditions are independently generated from Gaussian distributions with variance $1$ and means that are sampled from a uniform distribution in $[-3, 3]$. These means are used in computing the approximate graphon mean field game solutions.
In the example in Fig.~\ref{fig:UA-approximate-connection-among-nodes} and Fig.~\ref{fig:UA-Approximate},  %
 the graphon mean field approximation relative error $\frac{\|\Fz_E-\Fz\|_C}{\|\Fz_E\|_C}$ is $52.569\%$   where $\Fz_E$ is the actual network mean field and $\Fz$ is the graphon mean field computed based on the Global LQG-GMFG Forward-Backward Equations. The error between the graphon limit $\FM$ and the step function graphon $\SM$ (associated with the 30-node graph in Fig.~\ref{fig:UA-approximate-connection-among-nodes})  is $\|\FM-\SM\|_{\textup{op}}=0.238$ and the graphon limit operator norm is $\|\FM\|_{\textup{op}}=0.386$.
The  relative approximation errors decrease as the sizes of the graphs increase, which is numerically illustrated by a set of examples for graphs with different sizes in Fig. \ref{fig:UAApproxSpectral}. 

\subsection{Example 2: Stochastic Block Models (SBM)} \label{sec:SBM}
\subsubsection{Random Graphs Generated from SBM and Properties}
Following \cite[p.157]{lovasz2012large},  random simple graphs with $N$ nodes can be generated from a graphon $\FM$ by  first  
               sampling data points $x_1,...,x_N$ from the uniform distribution on $[0,1]$ and then connecting node $i$ and node $j$ with probability $\FM(x_i,x_j)$, for all $i, j \in \{1,...,N\}$ and $i\neq j$.
Stochastic block models can be approximately considered as models of generating $\FM$-random graphs where the graphon $\FM$ is a step function graphon (see \cite{airoldi2013stochastic}). 
Consider the stochastic block model matrix $W =[w_{ij}]\in \BR^{d\times d}$.
The associated graphon limit is given by 
$
\FM(x,y) = \sum_{i=1}^d\sum_{j=1}^d w_{ij} \mathds{1}_{P_i}(x) \mathds{1}_{P_j}(y), ~ (x,y)\in[0,1]^2
$
with the uniform partition $\{P_1,...., P_d\}$ of $[0,1]$. 
Denote the eigen decomposition $W= \sum_{\ell=1}^d \lambda_\ell v_\ell v_\ell^\TRANS$  where $\{\lambda_\ell\}$ are the eigenvalues (allowing repeated eigenvalues) and $\{v_\ell\}$ are the associated normalized eigenvectors.  Then
the spectral decomposition of the associated graphon is given by 
$
\FM(x,y) = \sum_{\ell=1}^d \frac{\lambda_\ell}{d} \Fv_\ell(x)\Fv_\ell(y),~ (x,y)\in[0,1]^2
$
where $\Fv^\ell(x) = \sum_{i=1}^d \mathds{1}_{P_i}(x) v_\ell (i)$ (see also \cite{ShuangPeterCDC19W1}). The graphon is a step function and hence obviously a low-rank graphon with the same number of non-zero eigenvalues as that of the block matrix $W$. 
\begin{figure}[htb]
    \centering
        \includegraphics[width=8.5cm,trim = {4.2cm 0 3.2cm 0}, clip]{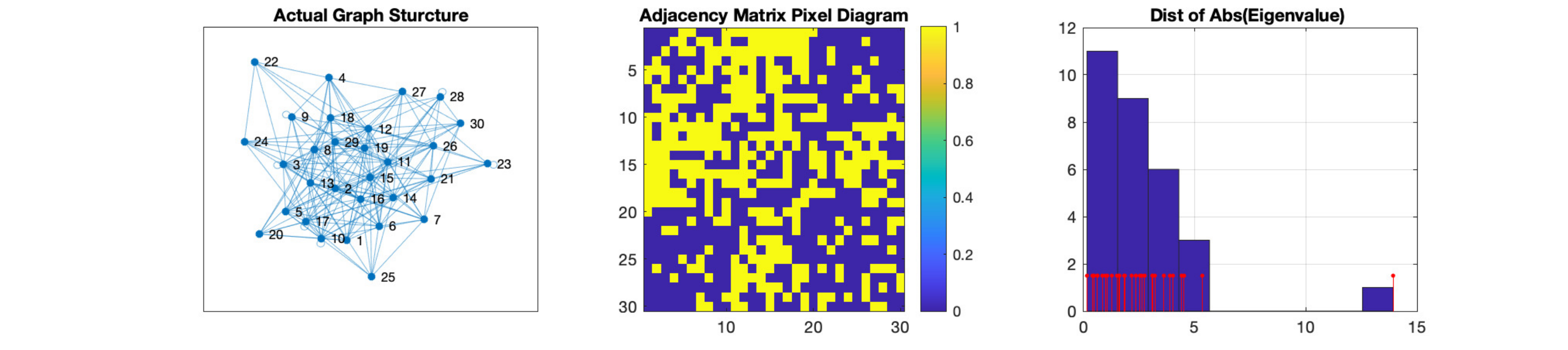}\\
    \caption{A graph generated from SBM, its pixel diagram and the distribution of the modulus of eigenvalues. } \label{fig:SBM-graph} 
    \end{figure}
    \begin{figure}[htb]
    \centering
    \includegraphics[width=8.5cm,trim = {1.5cm 0 1.5cm 0}, clip]{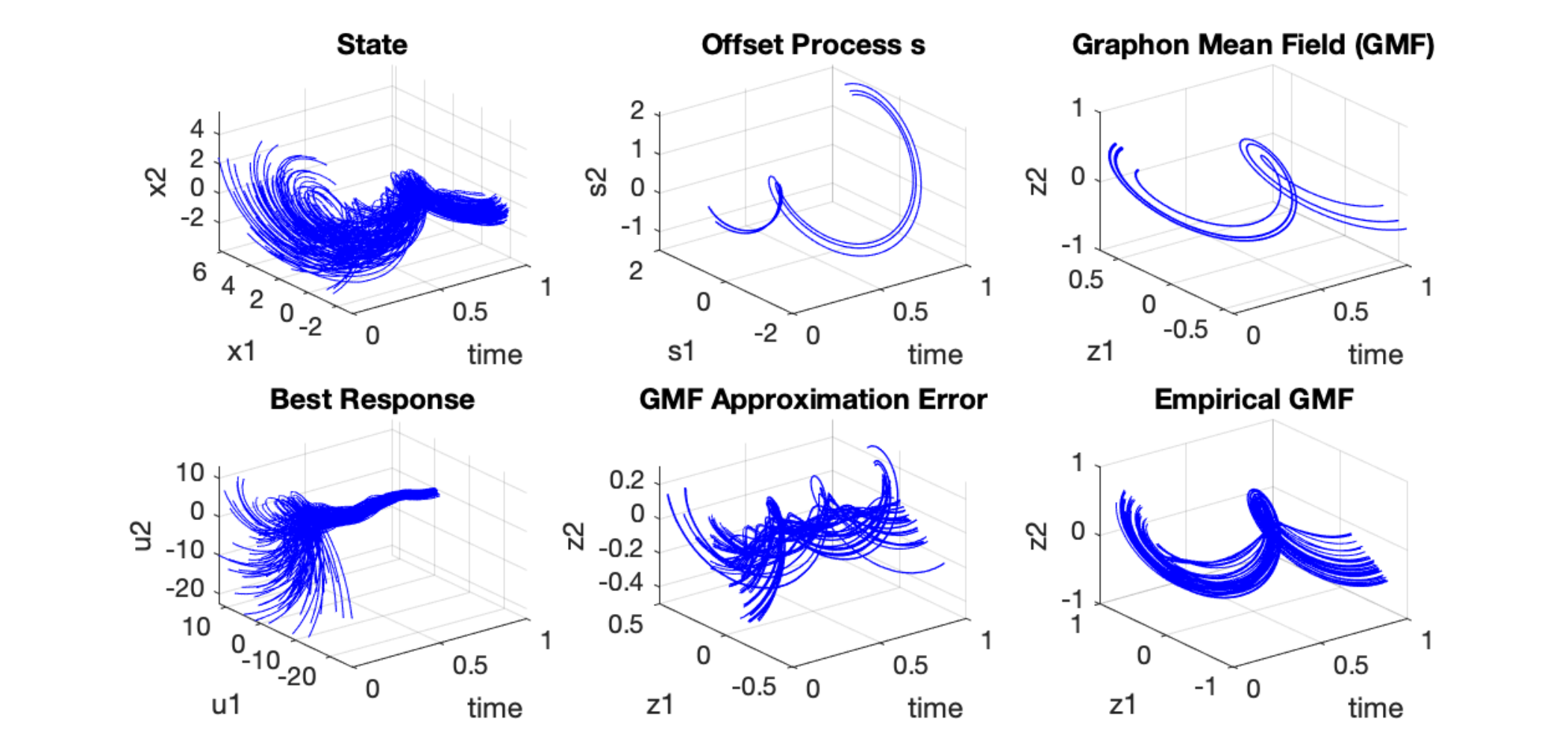}
    \caption{Simulation on a network generated from SBM with 30 nodes where each node contains 4 agents and each agent has 2 states.  }
    \label{fig:SBM-Simulation}
        \end{figure}
 \begin{figure}[h]
    \centering
    \includegraphics[width=8.5cm,trim = {0cm 0 0cm 0}, clip]{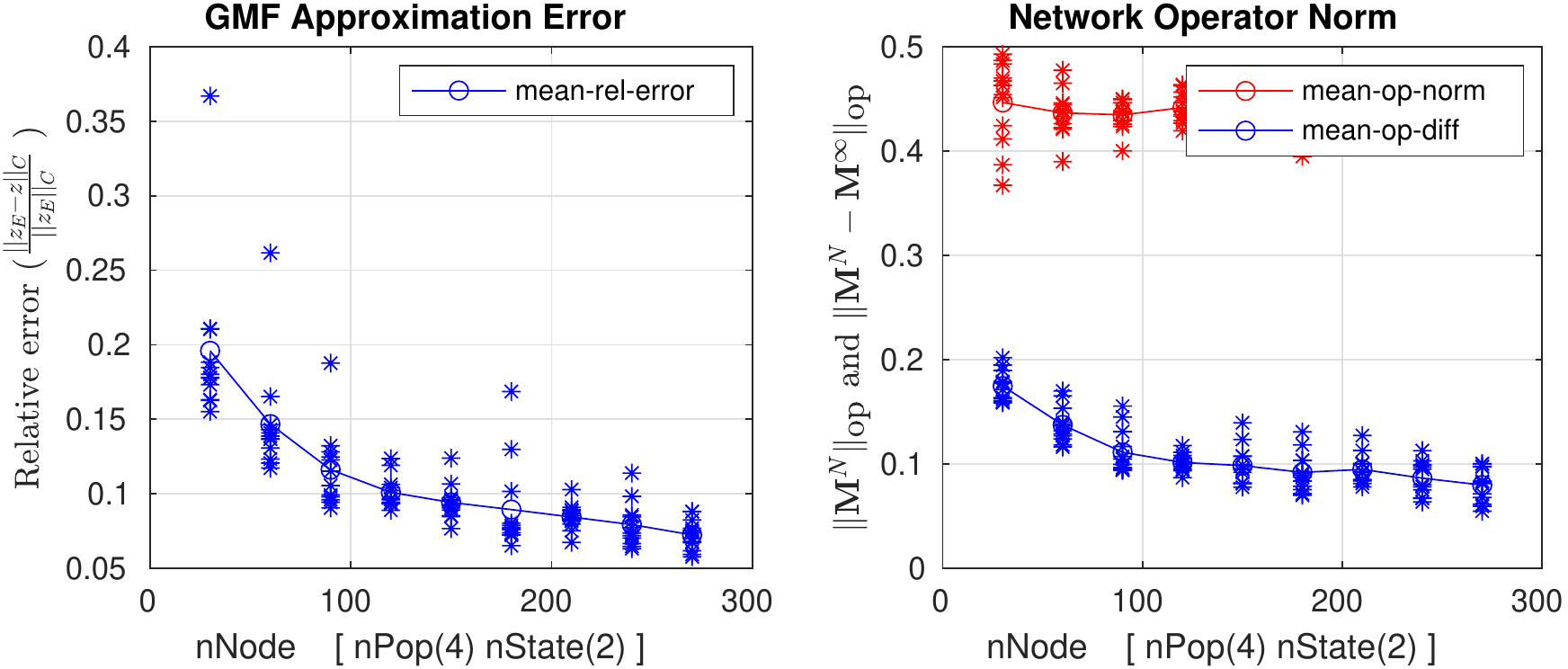}
   \caption{Graphon mean field game approximation errors on networks of different sizes.  $12$ simulations are carried out for each size. The nodal population size denoted by nPop is $4$, and the local state dimension denoted by nState is $2$. In the figure on the right, black dots represent values for $\|\SM-\FM\|_{\textup{op}}$ in different simulation experiments.}   
   \label{fig:different-sizes-sim-SBM}
\end{figure}

\subsubsection{Simulations on Random Graphs Generated from SBM}
The parameters in the simulation are the same as those in \eqref{eq:parameters}. The initial conditions are independently drawn from Gaussian distributions with variance $1$ and mean values that are generated randomly from $[-3, 3]$. These mean values  are used in computing the approximate graphon mean field game solutions. 
The block matrix of SBM is given by
 \begin{equation}\label{eq:block-matrix}
   W=  \MATRIX{0.25& 0.5& 0.2\\ 0.5& 0.35& 0.7\\ 0.2 &  0.7 & 0.4}.
 \end{equation}
 The simulation result on the graph instance in Fig.~\ref{fig:SBM-graph}} which is generated from the SBM with matrix \eqref{eq:block-matrix} is illustrated in Fig.~\ref{fig:SBM-Simulation}. 
 For this particular example, the graphon mean field  relative approximation error $\frac{\|\Fz_E-\Fz\|_C}{\|\Fz_E\|_C}$ is $29.256\%$ where $\Fz_E$ is the actual network mean field and $\Fz$ is the graphon mean field computed based on the Global LQG-GMFG Forward-Backward Equations. 
    The error between the graphon limit $\FM$ and the step function graphon $\SM$ (associated with the graph)  is $\|\FM-\SM\|_{\textup{op}}=0.178$ and the graphon limit operator norm is $\|\FM\|=0.434$.
 The  relative approximation error $\frac{\|\Fz_E-\Fz\|_C}{\|\Fz_E\|_C}$ decreases as the size of the network increases as illustrated by results on graphs of different sizes in Fig.~\ref{fig:different-sizes-sim-SBM}.
 
\section{Conclusion}
This work studied solution methods for LQG graphon mean field game problems based on subspace and spectral decompositions.
Future work should focus on cases with heterogeneous parameters in dynamics,
 computational procedures for nonlinear graphon mean field games, graphon control for nonlinear systems, and the counterpart theory for sparse graphs.

\appendices
\section{Proof of Proposition \ref{prop:uniform-attachment-spectral}} \label{subsec:proof-prop-UA}
\begin{proof}
To explicitly verify the eigen pairs of $\FM$, we have the following computation. 
Let $\Fv_k = \cos\left(\frac{\pi k \beta }{2}\right), k\in \{1,3,5,...\}$.  Then for any $\alpha \in [0,1]$ and any $k\in \{1,3,5,...\}$, the following holds
\[
\begin{aligned}
	&[\FM \Fv_k](\alpha) = \int_0^1 (1-\max(\alpha, \beta)) \cos\big(\frac{\pi k \beta }{2}\big) d\beta \\
	& = \int_0^1  \cos\big(\frac{\pi k \beta }{2}\big) d\beta -\int_0^\alpha \alpha \cos\big(\frac{\pi k \beta }{2}\big) d\beta  \\
	&~ - \int_\alpha^1 \beta \cos\big(\frac{\pi k \beta }{2}\big) d\beta 
	 = \frac{4}{k^2 \pi^2 }\cos\big(\frac{\pi k \alpha }{2}\big) = \frac{4}{k^2 \pi^2 }\Fv_k(\alpha) .
\end{aligned}
\]
To verify that all the eigen pairs are listed, one can simply check the relation between  the sum of squares of the eigenvalues and the 2-norm. First, 
we compute
$
 \sum_{\ell=1}\lambda_\ell^2 =  \left(\frac{4}{\pi^2}\right)^2 \sum_{k=0}^\infty \frac{1}{(2k+1)^4} 
 = \frac{1}{6}.
$
Second, we  compute 
$
\|\FM\|_2^2 = \int_{[0,1]}\int_{[0,1]} (1-\max(x,y))^2dxdy = \frac16.
$
Therefore, the equality  $\|\FM\|_2^2 = \sum_{\ell=1}\lambda_\ell^2$ is satisfied, which implies we have listed all the eigenpairs in \eqref{eq:eigen-pairs-UA}. Hence the spectral decomposition of the uniform attachement graphon limit  is then given by \eqref{eq:spectral-decomp-UA}.
\end{proof}

{
\section{Computing Solutions to Joint Equations}\label{sec:appendix-algorithms}
This section contains two numerical algorithms to solve the joint forward backward equations  \eqref{eq:compact-z-evo} and \eqref{eq:compact-s-evo}. With spectral approximations of the graphon limit, these algorithms can provide approximate numerical solutions to the Global LQG-GMFG Forward-Backward Equations  \eqref{eq:z-evo} and \eqref{eq:s-evo}. If, furthermore, the underlying graphon is of finite rank, then these algorithms provide exact numerical solutions to the Global LQG-GMFG Forward-Backward Equations  \eqref{eq:z-evo} and \eqref{eq:s-evo}  given an appropriate choice of basis functions for the characterizing finite dimensional subspace.

 \begin{algorithm}[h]
 \caption{Solving joint forward backward equations based on fixed point iterations.}
 \begin{algorithmic}[1]
 \renewcommand{\algorithmicrequire}{\textbf{Input:}}
 \renewcommand{\algorithmicensure}{\textbf{Output:}}
 \REQUIRE  Initial condition $z_0$, equation parameters $A$, $B$, $D$, $M$, $Q$, $Q_T$, $R$, $n$, $N$, $H$ and $\eta$, time horizon $T$, sampling period $dt$, maximum number of iterations ${nIteration}$, error tolerance $tol$.
 \ENSURE  z, s, $\Delta$
 \\ \textit{Initialisation}: $z$ process with $z(0)=z_0$ \\
  \STATE  $\Pi$ $\leftarrow$ $\Pi$-Dynamics($\Pi_T$, $T$, $dt$)
  \FOR  {iteration $i = 1,2...$ to $nIteration$}
  \STATE $s$  $\leftarrow$ s-DynamicsIntegrateBackward($z$, $\Pi$, $T$, $dt$)
  \STATE $z^{+}$ $\leftarrow$ z-DynamicsIntegrateForward($s$, $\Pi$, $T$, $dt$)\\
  \STATE $\Delta$ = $\|z^+-z\|$  \\
  \STATE Update $z \leftarrow z^+$  
    \IF {($\Delta\leq tol$)}
    \STATE break
    \ENDIF
  \ENDFOR
 \RETURN $z$, $s$, $\Delta$. 
 \end{algorithmic} 
 \end{algorithm}
 \begin{algorithm}[h]
 \caption{Solving joint forward backward equations based on a decoupling Riccati equation.}
 \begin{algorithmic}[1]
 \renewcommand{\algorithmicrequire}{\textbf{Input:}}
 \renewcommand{\algorithmicensure}{\textbf{Output:}}
 \REQUIRE Initial condition $z(0)=z_0$, equation parameters $A$, $B$, $D$, $M$, $Q$, $Q_T$, $R$,  $n$, $N$, $H$ and $\eta$, time horizon $T$, sampling period $dt$.
 \ENSURE  $z$, $s$
 \\ \textit{Initialisation}: $e(T)=(H \otimes Q_T)( \mathbf{1_{n}}\otimes \eta)$,\\ $P(T)= H\otimes Q_T$,  \\
   \STATE  $\Pi$ $\leftarrow$ $\Pi$-Dynamics($\Pi_T$, $T$, $dt$)
   \STATE $P$   $\leftarrow$  P-DynamicsIntegrateBackward($P(T)$, $\Pi$, $T$, $dt$)
   \STATE $e$ $\leftarrow$  e-DynamicsIntegrateBackward($e(T)$, $\Pi$, $T$, $dt$)
  \STATE $z$ $\leftarrow$  z-DynamicsIntegrateForward($z(0)$, $e$, $P$, $\Pi$, $T$, $dt$) 
  \STATE $s(T)=H z(T)+\mathbf{1}\otimes\eta$
  \STATE $s$ $\leftarrow$  s-DynamicsIntegrateBackward($s(T)$, $z$, $\Pi$, $T$, $dt$)  
 \RETURN $z$ and $s$. 
 \end{algorithmic} 
 \end{algorithm}
For convenience, we list here the equations used in the algorithms. 
Let  $A_c(t)\triangleq (A-BR^{-1}B^\TRANS \Pi_t)$ and $\mathbf{1}_n$ denote the $n$-dimensional vector of ones. $\Pi$-Dynamics refers to Eqn.~\eqref{eq:Finite-Riccati-Prop1}. 
\\
P-Dynamics:
\begin{equation*}
\begin{aligned}
	- \dot {\bar P}  =& (I_{\N}\otimes A_c(t)^\TRANS ) \bar{P}+  \bar{P}( I_{\N}\otimes A_c(t) )\\
	&- \bar P \big(M_\Ff\otimes BR^{-1}B^\TRANS \big) \bar P  
	   - (I_{\N}\otimes ( Q H - \Pi_t D )),\\ \bar P(T)&=I_{\N}\otimes Q_T H.
\end{aligned}
\end{equation*}
%
s-Dynamics:
\begin{equation*}
    \begin{aligned}
    - \dot{\bar{s}}(t) =& I_{_{\N}}\otimes A_c(t)^\TRANS \bar{s}(t) - I_{_{\N}}\otimes (Q H-\Pi_tD )\bar{z}(t) \\
    & ~~-(I_{_{\N}}\otimes  {H}Q )  (\mathbf{1}_n\otimes\eta) \\
    \bar{s}(T) =&(I_N\otimes Q_TH) (\bar{z}(T)+\mathbf{1}_n\otimes\eta).
    \end{aligned}
\end{equation*}
z-Dynamics based on $s$:\\
\begin{equation*}
    \begin{aligned}
    \dot{\bar{z}}(t) =& I_{_{\N}}\otimes A_c(t) \bar{z}(t)  + \frac{1}{\N} M \otimes D \bar{z}(t)\\
    &  - \frac{1}{\N} M \otimes BR^{-1}B^\TRANS \bar{s}(t),\quad
    \bar{z}(0)= \textcolor{black}{\frac1N{(M\otimes I_n)}}  \bar{x}(0).
    \end{aligned}
\end{equation*}
e-Dynamics:
	\begin{equation*}
		\begin{aligned}
			\dot e(t) 	& = \Big(-I_{_{\N}}\otimes A_c(t)^\TRANS + P(t) M\otimes BR^{-1}B^\TRANS  \Big) e(t)\\
			&\quad + (I_{\N}  \otimes Q H) (\mathbf{1}_n\otimes \eta), ~ e(T)=I_{\N}  \otimes Q_T H (\mathbf{1}_n\otimes \eta).
		\end{aligned}
	\end{equation*}
z-Dynamics based on $P$ and $e$:
\begin{equation*}
	\begin{aligned}
	\dot{\bar{z}}(t) &= \Big(I_{\N}\otimes A_c(t)  - [M \otimes BR^{-1}B^\TRANS] P(t) + M\otimes D \Big) \bar z(t) \\
	&~ ~ \quad{-} \big(M\otimes BR^{-1}B^\TRANS \big)e(t),\quad 
	 \bar z(0)  =   \textcolor{black}{\frac1N}{(M\otimes I_n)} \bar{x}(0).	
	 \end{aligned}
\end{equation*}

These equations are solved using ode45 in MATLAB. Then the trajectories of the solutions are sampled with sampling period $dt$. Such sampled trajectories are then interpolated using piecewise cubic Hermite interpolating polynomials (\texttt{pchip}) and used in  fixed point iterations and in the computation of time-dependent differential equations. 
}	
\section*{Acknowledgment}
The authors would like to thank Dr. Rinel Foguen Tchuendom and Prof. Shujun Liu for  helpful discussions.  %
\bibliographystyle{IEEEtran}
\bibliography{mybib}
\begin{IEEEbiography}
{Shuang Gao} (S'14-M'19) received the B.E. degree in automation  and M.S. in control science and engineering, from Harbin Institute of Technology, Harbin, China, in 2011 and 2013. He received the Ph.D. degree in electrical engineering from McGill University, Montreal, QC, Canada, in February 2019, under the supervision of Prof. Peter. E. Caines. 
He is currently a Postdoctoral Researcher at the Department of Electrical and Computer Engineering at McGill University. 
He is a member of McGill Centre for Intelligent Machines and Groupe d’\'Etudes et de Recherche en Analyse des D\'ecisions. 
His research interest includes control of network systems, optimization on networks, network modelling, mean field games.
\end{IEEEbiography}
\begin{IEEEbiography}
{Peter E. Caines} (LF'11)
received the BA in mathematics from Oxford University in 1967 and the PhD in systems and control theory in 1970 from Imperial College, University of London, under the supervision of David Q. Mayne, FRS. After periods as a postdoctoral researcher and faculty member at UMIST, Stanford, UC Berkeley, Toronto and Harvard, he joined McGill University, Montreal, in 1980, where he is Distinguished James McGill Professor and Macdonald Chair in the Department of Electrical and Computer Engineering. In 2000 the adaptive control paper he coauthored with G. C. Goodwin and P. J. Ramadge (IEEE Transactions on Automatic Control, 1980) was recognized by the IEEE Control Systems Society as one of the 25 seminal control theory papers of the 20th century. In 2009 Peter Caines received the IEEE Control Systems Society Bode Lecture Prize. He is a Life Fellow of the IEEE, and a Fellow of SIAM, IFAC, the Institute of Mathematics and its Applications (UK) and the Canadian Institute for Advanced Research and is a member of Professional Engineers Ontario. He was elected to the Royal Society of Canada in 2003. Peter Caines is the author of Linear Stochastic Systems, John Wiley, 1988, republished as a SIAM Classic in 2018, and is a Senior Editor of Nonlinear Analysis-Hybrid Systems; his research interests include stochastic, mean field game, decentralized and hybrid systems theory,  together with their applications in a range of fields.
\end{IEEEbiography}
\begin{IEEEbiography}
	{Minyi Huang} (S’01-M’04) received the B.Sc. degree from Shandong University, Jinan, Shandong, China, in 1995, the M.Sc. degree from the Institute of Systems Science, Chinese Academy of Sciences, Beijing, in 1998, and the Ph.D. degree from the Department of Electrical and Computer Engineering, McGill University, Montreal, QC, Canada, in 2003, all in systems and control.
He was a Research Fellow first in the Department of Electrical and Electronic Engineering, the University of Melbourne, Melbourne, Australia, from
February 2004 to March 2006, and then in the Department of Information Engineering, Research School of Information Sciences and Engineering, the Australian National University, Canberra, from April 2006 to June 2007. He joined the School of Mathematics and Statistics, Carleton University, Ottawa, ON, Canada as an Assistant Professor in July 2007, where he is now a Professor. His research interests include mean field stochastic control and dynamic games, multi-agent control and computation in distributed networks with applications.
\end{IEEEbiography}

\section{Convergence Analysis} \label{sec:convergence-analysis}

Let $(\Sz,\Ss)$ denote the solution pair to \eqref{eq:z-evo-stepfunction} and \eqref{eq:s-evo-stepfunction} and let $(\Fz,\Fs)$ denote the solution pair to  \eqref{eq:z-evo} and \eqref{eq:s-evo}. 
Let $z_{_E}\triangleq\{z_{_E}(t)\in \BR^{nN}: t\in [0,T]\}$ denote the actual \emph{network empirical average vector} on an $N$-node graph with nodal population sizes $\{|\mathcal{C}_q|: q\in \mathcal{V}_c \}$ when the graphon mean field game solution \eqref{equ:lqmfg-BR} is implemented by all agents. Let $\Fz_E^\FN$ denote the piece-wise constant function (in the space variable) in $C([0,T];(L^2_{pwc}[0,1])^n)$ associated with $z_{_E}$. 
\subsection{Network Mean Field to Graphon Mean Field}
\begin{theorem}[{Finite Network MF to Graphon MF}] \label{thm:main}~\\
	If there exists a constant $c_0$ ($0\leq c_0 <1$) such that
	\vspace{0.5pt}
	\begin{equation}\label{eq:contraction-cond-all}
		\textup{L}_0(\FM)\leq c_0 \quad \text{ and } \quad  \textup{L}_0(\SM)\leq c_0~\text{for all $\N$},
	\end{equation}
	then there exists a unique \textcolor{black}{classical} solution pair $(\Sz,\Ss)$ to the joint equations \eqref{eq:z-evo-stepfunction} and \eqref{eq:s-evo-stepfunction} for each $\N$ and a unique \textcolor{black}{classical} solution pair $(\Fz,\Fs)$ to the joint equations  \eqref{eq:z-evo} and \eqref{eq:s-evo}. %
	If, furthermore,
	\begin{equation}\label{eq:convergence-init-op}
		\lim_{\N\to \infty}\|\FM -\SM\|_\textup{op} =0,~ \textup{and}~\lim_{\N\to \infty}\|\Fz(0)-\Sz(0)\|_2 =0,
	\end{equation}
	then
	 	\begin{equation}\label{eq:NetworkMF2graphonMF}
		\lim_{\N \to \infty} \|\Fs-\Ss\|_C = 0 ~~\textup{and}~~\lim_{\N \to \infty} \|\Fz-\Sz\|_C = 0,
	\end{equation}
	and  the asymptotic error for $\|\Fz-\Sz\|_{_C}$ and that for $\|\Fs-\Ss\|_{_C}$ are given by
	\begin{equation}\label{eq:asymptotic-convergence-rate}
	\begin{aligned}
	&   O \left\{\max(\|\FM -\SM\|_\textup{op}, ~\|\Fz(0)-\Sz(0)\|_2)\right\}.
	\end{aligned}
	\end{equation}
\end{theorem}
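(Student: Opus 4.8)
The plan is to treat both the finite-network system and the limit system as fixed-point problems and to compare their unique solutions via the stability of the contraction map from Lemma~\ref{eq:lem-fixed-point}, combined with a perturbation bound for the associated evolution operators. First, since $\textup{L}_0(\FM)\le c_0<1$ and $\textup{L}_0(\SM)\le c_0<1$ for every $\N$, Lemma~\ref{eq:lem-fixed-point} applies verbatim and yields the unique classical solution pairs $(\Fz,\Fs)$ and $(\Sz,\Ss)$. I write $\Fz=\Gamma(\Fz)$ with $\Gamma$ the operator in \eqref{eq:Gamma-operation}, and $\Sz=\Gamma^{[\N]}(\Sz)$ with $\Gamma^{[\N]}$ the analogous operator obtained by replacing $\FM$ by $\SM$, $\Fz(0)$ by $\Sz(0)$, and $\phi_1^\FM$ by $\phi_1^\SM$. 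Applying the contraction estimate to $\Gamma^{[\N]}$ and the zero function gives $\|\Sz\|_C\le(1-c_0)^{-1}\|\Gamma^{[\N]}(\mathbf{0})\|_C$, and the right-hand side is bounded uniformly in $\N$ because $\SM\in\ESC$ forces $\|\SM\|_{\textup{op}}\le c$, the parameter matrices are continuous on $[0,T]$, and $\Sz(0)\to\Fz(0)$; hence $\sup_\N\|\Sz\|_C<\infty$, which I will use below.

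Next I establish the perturbation bound for the evolution operators. The generators $[\BA(t)+D\FM]$ and $[\BA(t)+D\SM]$ differ by $[D(\FM-\SM)]$, whose operator norm is $O(\|\FM-\SM\|_{\textup{op}})$, and both families are uniformly bounded in operator norm over $[0,T]$ and over $\N$. Writing the Volterra integral equation for $\phi_1^\FM(t,\tau)-\phi_1^\SM(t,\tau)$ implied by \eqref{eq:evo-op} and closing it with Gr\"onwall's inequality gives
\[
\sup_{0\le\tau\le t\le T}\bigl\|\phi_1^\FM(t,\tau)-\phi_1^\SM(t,\tau)\bigr\|_{\textup{op}}=O\bigl(\|\FM-\SM\|_{\textup{op}}\bigr),
\]
uniformly in $\N$. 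The operator $\phi_2$, associated with $[-\BA(\cdot)^\TRANS]$, does not involve the graphon and is identical in both problems.

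Then I compare the two fixed points. By the triangle inequality and the $c_0$-contraction property of $\Gamma$,
\[
\|\Fz-\Sz\|_C=\|\Gamma(\Fz)-\Gamma^{[\N]}(\Sz)\|_C\le c_0\|\Fz-\Sz\|_C+\|\Gamma(\Sz)-\Gamma^{[\N]}(\Sz)\|_C.
\]
Expanding $\Gamma$ and $\Gamma^{[\N]}$ term by term via \eqref{eq:Gamma-operation}, the difference $\Gamma(\Sz)-\Gamma^{[\N]}(\Sz)$ is a finite sum in which every summand carries at least one of the factors $\phi_1^\FM-\phi_1^\SM$, $[F(\FM-\SM)]$ for a bounded matrix $F$, or $\Fz(0)-\Sz(0)$, while the remaining factors (the evolution operators, $\Pi_\tau$, $Q$, $Q_T$, $H$, $\eta$, and $\Sz$ itself via the uniform bound above) are bounded uniformly in $\N$ and integrated over the compact interval $[0,T]$. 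Hence $\|\Gamma(\Sz)-\Gamma^{[\N]}(\Sz)\|_C\le C\max(\|\FM-\SM\|_{\textup{op}},\|\Fz(0)-\Sz(0)\|_2)$ with $C$ independent of $\N$, and rearranging gives $\|\Fz-\Sz\|_C\le\frac{C}{1-c_0}\max(\|\FM-\SM\|_{\textup{op}},\|\Fz(0)-\Sz(0)\|_2)$. For the offset processes, $\Fs$ and $\Ss$ solve the same backward equation \eqref{eq:s-evo} driven by $\Fz$ and $\Sz$ respectively, with operators and evolution operator $\phi_2$ not depending on the graphon; subtracting the integral representations \eqref{eq:s-integral} yields $\|\Fs-\Ss\|_C\le C'\|\Fz-\Sz\|_C$, so $\|\Fs-\Ss\|_C$ obeys the same bound. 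Combining with the hypotheses \eqref{eq:convergence-init-op} establishes \eqref{eq:NetworkMF2graphonMF} and the rate \eqref{eq:asymptotic-convergence-rate}.

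The only step that is not routine bookkeeping is the uniform perturbation bound for $\phi_1^\FM-\phi_1^\SM$ in the second paragraph: the time dependence of the generators forces the Volterra/Gr\"onwall route rather than a direct exponential estimate, and the uniformity in $\N$ rests essentially on every $\SM$ lying in $\ESC$ and thus having operator norm at most $c$.
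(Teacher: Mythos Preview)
Your proof is correct and follows essentially the same route as the paper: existence/uniqueness from Lemma~\ref{eq:lem-fixed-point}, a Gr\"onwall perturbation bound for $\phi_1^\FM-\phi_1^\SM$ (the paper isolates this as Lemma~\ref{lem:lem-phi1}), and then the standard fixed-point stability estimate combining the contraction constant with the operator difference, followed by reading off $\|\Fs-\Ss\|_C$ from \eqref{eq:s-integral}. The only cosmetic difference is that the paper evaluates $\Gamma-\Gamma^{[\N]}$ at the \emph{limit} fixed point $\Fz$ and invokes the contraction of $\Gamma^{[\N]}$, whereas you evaluate it at $\Sz$ and invoke the contraction of $\Gamma$; this is why you need the auxiliary uniform bound $\sup_\N\|\Sz\|_C<\infty$, which the paper's choice sidesteps.
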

\subsection{Proof for Theorem \ref{thm:main}}\label{subsec:proof-main-thm}
To prove Theorem \ref{thm:main} we introduce the following lemma. 
\begin{lemma}\label{lem:lem-phi1}
	The following holds for all $\tau, t\in [0,T]$ and for all $\Fv \in (L^2[0,1])^n$ :
	\begin{equation}\label{eq:lem-phi1}
		\|(\phi_1^\FM(t,\tau) - \phi_1^\SM(t,\tau)) \Fv\|_{2} \leq  c_1(\N)  \|\FM- \SM\|_{\textup{op}} \|\Fv\|_2
	\end{equation}
	where $\phi_1^\FM(\cdot,\cdot)$ and $\phi_1^\SM(\cdot,\cdot)$  denote the evolution operators respectively associated with $[\BA(\cdot)+ D \FM] $ and $[\BA(\cdot)+ D \SM]$,
	\begin{equation}\label{eq:C1N}
	\begin{aligned}
		&c_1(N) \triangleq  \|D\|_2   \sup_{t, \tau \in [0,T]}  \\
		& \Big\{\exp\Big(\int_\tau^t\big\|[\BA(q)+ D \SM]\big\|_{\textup{op}} dq \Big)    \int_\tau^t\|\phi_1^\FM(q,\tau)\|_{\textup{op}} dq\Big\}
	\end{aligned}
\end{equation}
and $\|D\|_2$ denotes the matrix 2-norm (i.e., the maximum singular value of $D$).
Furthermore,  if there exists $\FM \in \ESC$ such that
$
\lim_{\N\to \infty}\|\FM -\SM\|_\textup{op} =0,
$
then there exists a constant $c_1>0$ such that 
$
	 c_1(\N) \leq c_1 
$
holds uniformly in $\N$.\NoEndMark 
\end{lemma}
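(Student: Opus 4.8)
The plan is to derive a variation-of-parameters identity relating $\phi_1^\FM$ and $\phi_1^\SM$ and then bound each factor with a Gr\"onwall-type estimate. First I would observe that the two generators differ by a single graphon term, namely $[\BA(t)+D\FM] - [\BA(t)+D\SM] = [D(\FM-\SM)]$, by linearity of $G\mapsto[DG]$. Since $\BA(\cdot)$ is bounded and continuous in the uniform operator topology, so are $[\BA(\cdot)+D\FM]$ and $[\BA(\cdot)+D\SM]$; by \eqref{eq:evo-op} and the co-cycle property $\phi_1^\SM(t,\tau)=\phi_1^\SM(t,q)\phi_1^\SM(q,\tau)$, the evolution operator $\phi_1^\SM(t,\cdot)$ is norm-differentiable in its second argument with $\partial_q\phi_1^\SM(t,q) = -\phi_1^\SM(t,q)[\BA(q)+D\SM]$. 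Differentiating $q\mapsto\phi_1^\SM(t,q)\phi_1^\FM(q,\tau)$ on $[\tau,t]$ (a legitimate Leibniz rule in $\mathcal{L}_u((L^2[0,1])^n)$, since both factors are norm-differentiable) and integrating yields
\begin{equation*}
\phi_1^\FM(t,\tau) - \phi_1^\SM(t,\tau) = \int_\tau^t \phi_1^\SM(t,q)\,[D(\FM-\SM)]\,\phi_1^\FM(q,\tau)\,dq .
\end{equation*}

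Next I would apply both sides to an arbitrary $\Fv\in(L^2[0,1])^n$ and estimate. From the definition \eqref{eq:operation-vec} one checks $\|[D(\FM-\SM)]\|_{\textup{op}} \le \|D\|_2\,\|\FM-\SM\|_{\textup{op}}$, so
\begin{equation*}
\|(\phi_1^\FM(t,\tau) - \phi_1^\SM(t,\tau))\Fv\|_2 \le \|D\|_2\,\|\FM-\SM\|_{\textup{op}}\,\|\Fv\|_2 \sup_{q\in[\tau,t]}\|\phi_1^\SM(t,q)\|_{\textup{op}} \int_\tau^t\|\phi_1^\FM(q,\tau)\|_{\textup{op}}\,dq .
\end{equation*}
Applying Gr\"onwall's inequality to the integral equation $\phi_1^\SM(t,q)=\BI+\int_q^t[\BA(r)+D\SM]\phi_1^\SM(r,q)\,dr$ gives $\|\phi_1^\SM(t,q)\|_{\textup{op}} \le \exp\!\big(\int_q^t\|[\BA(r)+D\SM]\|_{\textup{op}}dr\big) \le \exp\!\big(\int_\tau^t\|[\BA(r)+D\SM]\|_{\textup{op}}dr\big)$ for $\tau\le q\le t$; substituting this and taking the supremum over $t,\tau\in[0,T]$ produces precisely the constant $c_1(\N)$ of \eqref{eq:C1N}, which is \eqref{eq:lem-phi1}.

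For the uniform bound, I would note that $\BA(\cdot)=[(A-BR^{-1}B^\TRANS\Pi_\cdot)\BI]$ is continuous on the compact interval $[0,T]$, hence $\sup_{r\in[0,T]}\|\BA(r)\|_{\textup{op}}<\infty$, while $\|\SM\|_{\textup{op}}\le\|\FM\|_{\textup{op}}+\|\FM-\SM\|_{\textup{op}}$ is bounded uniformly in $\N$ under the hypothesis $\|\FM-\SM\|_{\textup{op}}\to0$ (equivalently, $\SM\in\ESC$ already gives $\|\SM\|_{\textup{op}}\le c$). Therefore both $M_1:=\sup_\N\sup_{r\in[0,T]}\|[\BA(r)+D\SM]\|_{\textup{op}}$ and $M_2:=\sup_{r\in[0,T]}\|[\BA(r)+D\FM]\|_{\textup{op}}$ are finite, so $\|\phi_1^\FM(q,\tau)\|_{\textup{op}}\le e^{M_2T}$ and $\exp\!\big(\int_\tau^t\|[\BA(r)+D\SM]\|_{\textup{op}}dr\big)\le e^{M_1T}$, whence $c_1(\N)\le\|D\|_2\,T\,e^{(M_1+M_2)T}=:c_1$ for all $\N$. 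The only genuinely delicate point is the first step: one must be sure the evolution operator is norm-differentiable in its backward argument with the stated derivative, which can fail for general strongly continuous evolution families but holds here because $\BA(\cdot)$ is bounded and norm-continuous, so $\phi_1^\SM$ is jointly norm-continuous and continuously norm-differentiable in both variables in the sense of \cite[Ch.~5]{pazy1983semigroups}; the remaining estimates are routine Gr\"onwall bookkeeping.
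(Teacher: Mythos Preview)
Your proof is correct and follows essentially the same route as the paper. The paper writes the ODE $\partial_t\Delta(t,\tau)=[\BA(t)+D\SM]\Delta(t,\tau)+[D(\SM-\FM)]\phi_1^\FM(t,\tau)$ for $\Delta=\phi_1^\SM-\phi_1^\FM$, passes to its integral form, and applies the Gr\"onwall--Bellman inequality directly to $\|\Delta(t,\tau)\Fv\|_2$; your Duhamel identity is precisely the mild solution of that same ODE (expressed through $\phi_1^\SM$), and bounding $\sup_q\|\phi_1^\SM(t,q)\|_{\textup{op}}$ by Gr\"onwall reproduces exactly the exponential factor the paper obtains, yielding the identical constant $c_1(\N)$ and the same uniform-in-$\N$ bound.
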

\begin{proof}
Recall that	 $\Phi_1^\FM(\cdot,\cdot)$ and $\Phi_1^\SM(\cdot, \cdot)$ satisfy 
\begin{equation*}
	\begin{aligned}
		\frac{\partial \phi_1^\FM(t,\tau)}{\partial t} &= [\BA(t)+ D \FM] \phi_1^\FM(t,\tau),  ~
		\phi_1^\FM(\tau,\tau) = \BI,
		\end{aligned}
\end{equation*}
\begin{equation*}
	\begin{aligned}
		\frac{\partial \phi_1^\SM(t,\tau) }{\partial t} &= [\BA(t)+ D \SM] \phi_1^\SM(t,\tau), \\
		 ~
		\phi_1^\SM(\tau,\tau) &= \BI,  \quad \forall \tau, t \in [0,T],
		\end{aligned}
\end{equation*}
where  $\BA (t) \triangleq [(A - BR^{-1}B^\TRANS \Pi_t) \BI]$. Let $\Delta (t,\tau)\triangleq \phi_1^\SM(t,\tau)-\phi_1^\FM(t,\tau)  \in \mathcal{L}\big((L^2[0,1])^n\big)$ for $t,\tau \in [0,T]$.
The evolution of  $\Delta(\cdot, \cdot)$ satisfies  
\begin{equation}
	\begin{aligned}
		\frac{\partial \Delta(t,\tau) }{\partial t}=& [\BA(t)+ D \SM]  \Delta(t,\tau)\\
		&  + [D (\SM-\FM)] \phi_1^\FM(t,\tau), 
	\end{aligned}
\end{equation}
with initial conditions $\Delta(\tau,\tau)  = 0 \in \mathcal{L}\big((L^2[0,1])^n\big)$ for all $t, \tau \in [0,T]$. 
Therefore, for all $\Fv \in (L^2[0,1])^n$,
\begin{equation}
\begin{aligned}
		\Delta(t,\tau)\Fv =& \int_\tau^t [\BA(q)+ D \SM]  \Delta(q,\tau)\Fv dq  \\
	& \quad + [D (\SM-\FM)]  \int_\tau^t\phi_1^\FM(q,\tau)\Fv dq. 
\end{aligned}
\end{equation}
Hence the following inequality holds:  for all $\Fv \in (L^2[0,1])^n$,
\begin{equation*}
\begin{aligned}
		\|\Delta(t,\tau)  &\Fv\|_{2}\leq  \int_\tau^t \big\|[\BA(q)+ D \SM]\big\|_{\textup{op}} \| \Delta(q,\tau)\Fv\|_{2} dq \\
	& + \big\|[D (\SM-\FM)] \big\|_{\textup{op}} \int_\tau^t\|\phi_1^\FM(q,\tau)\Fv\|_{2} dq. 
\end{aligned}
\end{equation*}
Then by the Gr\"onwall-Bellman inequality, we obtain 
\begin{equation*}
	\begin{aligned}
		\|\Delta(t,\tau)& \Fv\|_{2} \leq \exp\Big(\int_\tau^t\big\|[\BA(q)+ D \SM]\big\|_{\textup{op}} dq \Big) \\
		&   \cdot  \int_\tau^t\big\|[D (\SM-\FM)] \big\|_{\textup{op}} \|\phi_1^\FM(q,\tau)\|_{\textup{op}} \|\Fv\|_{2} dq,
	\end{aligned}
\end{equation*} for all $\Fv \in (L^2[0,1])^n$, which implies \eqref{eq:lem-phi1}.
%
%

Clearly, for any $\FM \in \ESC$,  $\|\FM\|_{\textup{op}}\leq \|\FM\|_2$ is finite. 
If there exists $\FM \in \ESC $ such that 
$
\lim_{\N\to \infty}\|\FM -\SM\|_\textup{op} =0,
$
then one can verify that $\{\|\SM\|_{\textup{op}}\}$ is uniformly bounded in $N$, and  based on the definition of $c_1(N)$ in \eqref{eq:C1N}, this implies $c_1(N)$ is uniformly bounded in $N$.
\end{proof}

We proceed to prove Theorem \ref{thm:main} in the following. 
\begin{proof}
An application of Lemma \ref{eq:lem-fixed-point} yields the existence of a unique classical solution pair $(\Fz,\Fs)$ and that of $(\Sz,\Ss)$.

Let the operation $\Gamma(\cdot): (L^2[0,1])^n \to (L^2[0,1])^n$ in \eqref{eq:Gamma-operation} be associated with $\FM$ and initial condition $\Fz(0)$; similarly let $\Gamma_\FN(\cdot): (L^2[0,1])^n \to (L^2[0,1])^n$ denote the operator  in \eqref{eq:Gamma-operation} with $\FM$ replaced by $\SM$ and with $\Fz(0)$ replaced by $\Sz(0)$, that is 
\begin{equation}
\begin{aligned}
	(&\Gamma_\FN(\Fv))(t) \\
	&\triangleq \phi_1^\SM(t,0)\Fz^\FN(0) - \int_0^t \phi_1^\SM(t,\tau) [BR^{-1}B^\TRANS \SM]  \\
	&\quad \Big\{\phi_2(\tau, T)[Q_T H  \BI] (\Fv(T)+\eta\mathbf{1}) - \\
	&\int^T_\tau \phi_2(\tau, q)\big([(Q H- \Pi_q D)\BI]\Fv(q)+ [Q H\BI]\eta \mathbf{1}\big) dq  \Big\}d\tau.
\end{aligned}
\end{equation} 
By Lemma \ref{eq:lem-fixed-point}  we obtain that under the assumptions in \eqref{eq:contraction-cond-all} there exists a unique fixed point $\Fv^*$  (resp. $\Fv_\FN^{*}$) for $\Gamma(\cdot)$ (resp. $\Gamma_\FN(\cdot)$),
that is,
$
	\Gamma(\Fv^{*}) = \Fv^{*} ~\text{and}~\Gamma_\FN(\Fv^{*}_\FN) = \Fv^{*}_\FN.   
$ 
Firstly, by the triangle inequality,
\begin{equation}\label{eq:triagnle-ineq-gamma}
	\begin{aligned}
		\|\Gamma(\Fv^{*})& - \Gamma_\FN(\Fv^{*}_\FN)\|_C
		\\
		& \leq  \|\Gamma(\Fv^{*}) - \Gamma_\FN(\Fv^{*})\|_C + \|\Gamma_\FN(\Fv^{*}) - \Gamma_\FN(\Fv^{*}_\FN)\|_C.
	\end{aligned}
\end{equation} 
Following the definitions of $\Gamma(\cdot)$ and $\Gamma_\FN(\cdot)$, we know 
\begin{equation}\label{eq:gamma-gammaN-relax}
	\begin{aligned}
		&\|\Gamma(\Fv^{*})(t) - \Gamma_\FN(\Fv^{*})(t)\|_2
	\\&\leq \left\|\phi_1^\FM(t,0)\Fz(0)- \phi_1^\SM\Sz(0))\right\|_2 + \int_0^t 
	\\&\Big\|\big\{\phi_1^\FM (t,\tau) [BR^{-1}B^\TRANS \FM] - \phi_1^\SM (t,\tau) [BR^{-1}B^\TRANS \SM] \big\}\\
	&\qquad \mathbf{y}(\tau, \Fv^*)\Big\|_2 d\tau \triangleq I_1(t)+ \int_0^t I_2(t,\tau) d\tau, 
\end{aligned}
\end{equation}
for any $t\in[0,T]$, where
\begin{equation}
	\begin{aligned}
	&\mathbf{y}(\tau, \Fv^*) \triangleq\Big\{\phi_2(\tau, T)[Q_T H  \BI] (\Fv^*(T)+\eta\mathbf{1}) - \\
	&~\int^T_\tau \phi_2(\tau, q)\big([(Q H- \Pi_q D)\BI]\Fv^*(q)+ [Q H\BI]\eta \mathbf{1}\big) dq \Big\}.
\end{aligned}
\end{equation}
By Lemma \ref{lem:lem-phi1} and the triangle inequality we obtain 
\begin{equation}\label{eq: MandMNphi1-relax}
	\begin{aligned}
	I_1(t)	&  \leq \left\|\phi_1^\FM(t,0)\right\|_{\textup{op}}\left\|(\Fz(0)-\Sz(0))\right\|_2 \\
& \quad  + c_1(\N) \|(\SM-\FM)\|_{\textup{op}}\|\Sz(0)\|_2 .
	\end{aligned}
\end{equation}
Moreover, the second part of the right-hand side of equation \eqref{eq:gamma-gammaN-relax} satisfies the following inequalities 
\begin{equation}\label{eq: MandBRBMNphi1-relax}
	\begin{aligned}
		&I_2(t,\tau)\leq \left\|\big\{\phi_1^\FM (t,\tau)  [BR^{-1}B^\TRANS (\FM- \SM)] \right\|_\textup{op} \|\mathbf{y}(\tau, \Fv^*)\|_2 \\
		&  +  \Big\|\phi_1^\FM (t,\tau)- \phi_1^\SM (t,\tau)  \Big\|_\textup{op} \|[BR^{-1}B^\TRANS \SM]\mathbf{y}(\tau, \Fv^*)\|_2\\
		& \leq  \|\FM-\SM\|_\textup{op} \lambda_{\max}(BR^{-1}B^\TRANS) c_4\Big( c_3  + c_1(\N) \|\SM\|_{\textup{op}} \Big)
	\end{aligned}
\end{equation}
where
\begin{equation}
		c_3 \triangleq\sup_{t,\tau \in [0,T]}\|\phi_1^\FM(t,\tau)\|_{\textup{op}}, ~~c_4 \triangleq	\sup_{\tau \in[0,T]}\|\mathbf{y}(\tau, \Fv^*)\|_2 .
\end{equation}
We note that since $R>0$, the eigenvalues of $BR^{-1}B^\TRANS$ are all non-negative real numbers.
Let 
\[
	c_5 \triangleq \sup_{\N}\|\SM\|_{\textup{op}} \quad \text{and}\quad  c_6  \triangleq  \sup_{\N}\|\Sz(0)\|_2.
\]
It can be verified that $c_5$ and $c_6$ (which are uniformly bounded in $\N$) exist under the convergence assumptions in \eqref{eq:convergence-init-op}.
Then by \eqref{eq:gamma-gammaN-relax}, \eqref{eq: MandMNphi1-relax} and \eqref{eq: MandBRBMNphi1-relax}, we obtain
\begin{equation}\label{eq:gamma-gammaN}
	\begin{aligned}
	&\|\Gamma(\Fv^{*})(t) - \Gamma_\FN(\Fv^{*})(t)\|_2\\
&\leq  c_3\left\|(\Fz(0)-\Sz(0))\right\|_2 + c_1(\N) \|(\SM-\FM)\|_{\textup{op}}\\
&\quad \cdot\Big\{c_6   + c_4  t   \lambda_{\max}(BR^{-1}B) \Big( c_3  + c_1(\N) c_5 \Big)\Big\}.
	\end{aligned}
\end{equation}
Secondly, following the definition of $\Gamma_\FN(\cdot)$,  we know that, under the assumptions in \eqref{eq:contraction-cond-all}, for all $t\in [0,T]$,
\begin{equation}\label{eq:gammaN-vn-vstar}
	\begin{aligned}
		\|\Gamma_\FN(\Fv^{*})(t) - \Gamma_\FN(\Fv^{*}_\FN)(t)\|_2 
&\leq \textup{L}_0(\SM) \|\Fv^*-\Fv_\FN^*\|_C \\
&\leq c_0 \|\Fv^*-\Fv_\FN^*\|_C.
	\end{aligned}
\end{equation}
Therefore based on \eqref{eq:triagnle-ineq-gamma}, \eqref{eq:gamma-gammaN} and \eqref{eq:gammaN-vn-vstar}, we yield
\[
\begin{aligned}
 	\|\Fv^*- \Fv^*_\FN\|_C& \leq c_3\left\|(\Fz(0)-\Sz(0))\right\|_2 \\
&+\Big\{c_6   + c_4  T   \lambda_{\max}(BR^{-1}B) \Big( c_3  + c_1(\N) c_5 \Big)\Big\} \\
	&+ c_0 \|\Fv^*- \Fv^*_\FN\|_C,
\end{aligned}
\]
which then leads to the following upper bound
\begin{equation}\label{eq:error-upper-bound}
	\begin{aligned}
&\|\Fv^*- \Fv^*_\FN\|_C\\
& \leq \frac{1}{1-c_0} \Bigg\{ c_3\left\|(\Fz(0)-\Sz(0))\right\|_2 +c_1(\N) \|(\SM-\FM)\|_{\textup{op}} \\
& \qquad \qquad \quad \cdot \Big[c_6   + c_4  T   \lambda_{\max}(BR^{-1}B) \Big( c_3  + c_1(\N) c_5 \Big)\Big] \Bigg\},
\end{aligned}
\end{equation}
with $0\leq c_0<1$. 
Since $c_0$,  $c_3$, $c_4$, $c_5$ and $c_6$ do not depend on $\N$, and $c_1(\N)$ is uniformly bounded for all $\N$ by Lemma \ref{lem:lem-phi1},  we yield that 
\begin{equation}\label{eq:vconverngence}
\lim_{\N\to \infty}	\|\Fv^*- \Fv^*_\FN\|_C =0.
\end{equation}
Since the fixed point for $\Gamma(\cdot)$ is unique, clearly $\Fv^*$ is exactly the $\Fz$ part of the  classical solution pair $(\Fz,\Fs)$ to  \eqref{eq:z-evo} and \eqref{eq:s-evo}, and $\Fv_\FN^*$ is exactly the solution $\Sz$  part of the solution pair $(\Sz,\Ss)$ to \eqref{eq:z-evo-stepfunction} and \eqref{eq:s-evo-stepfunction}. 
Hence the above equations \eqref{eq:vconverngence} and \eqref{eq:error-upper-bound}, together with relationship between $\Fs$ and $\Fz$ in \eqref{eq:s-integral}, imply \eqref{eq:NetworkMF2graphonMF} and \eqref{eq:asymptotic-convergence-rate}.
\end{proof}

{We note that in general the error bounds depend on the time horizon $T$ as well. Since we formulate the problems with a fixed time horizon $T$, the time dependence will not be shown explicitly in the statement of Theorem~\ref{thm:main}.}

\subsection{Network Empirical Average to Graphon Mean Field}
The graphon mean field which is used  to generate  LQG-GMFG strategies for agents provides an approximation of the actual network empirical average.  
In the following we provide an asymptotic bound for this approximation error.

\begin{theorem}[{Network Empirical Average to Graphon MF}]\label{thm:GMF2NEMF}
Assume the distribution of random initial conditions at node $q\in \mathcal{V}_c$ has mean $\mu_q$ and finite variance uniformly bounded with respect to $|\mathcal{V}_c|$ and $|\mathcal{C}_q|$. 
Let  $\Sz(0)$ in $(L^2_{pwc}[0,1])^n$ denote the piece-wise constant function associated with the initial condition of the network mean field $\bar z(0) = \frac1 N M [\mu_1,....,\mu_N]^\TRANS$.  
Under the assumptions in \eqref{eq:contraction-cond-all} and \eqref{eq:convergence-init-op}, the asymptotic error in terms of the expected difference between  the network empirical average $\Fz_E^\FN$  and  the graphon mean field $\Fz$ (after taking both the nodal population limit and the graphon limit) is given by 
\begin{equation}
	\begin{aligned}
		& \mathbb{E}\|\Fz_E^\FN - \Fz \|_C = O \Big\{\max\Big(\|\FM -\SM\|_\textup{op}, \\
		& \qquad \qquad\qquad \quad ~\|\Fz(0)-\Sz(0)\|_2,~\frac{1}{\sqrt{\min_{q \in \mathcal{V}_c}|\mathcal{C}_q|}}\Big)\Big\},
	\end{aligned}
\end{equation}	
where the expectation is taken with respect to the distributions of additive noises in the dynamics and random initial conditions.

\end{theorem}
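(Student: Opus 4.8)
The plan is to interpolate between $\Fz_E^\FN$ and $\Fz$ with two deterministic trajectories and to bound the three resulting gaps separately. Writing $\hat{\Fz}^\FN$ for the \emph{deterministic} network mean field obtained when each cluster is replaced by its infinite-population limit while all agents still use the control \eqref{equ:lqmfg-BR} (equivalently, the forward part of \eqref{eq:z-evo-stepfunction} driven by the step-function sampling of the graphon-limit offset $\Fs$ in place of $\Ss$), the triangle inequality gives
\[
 \mathbb{E}\|\Fz_E^\FN - \Fz\|_{_C} \;\le\; \mathbb{E}\|\Fz_E^\FN - \hat{\Fz}^\FN\|_{_C} \;+\; \|\hat{\Fz}^\FN - \Sz\|_{_C} \;+\; \|\Sz - \Fz\|_{_C}.
\]
The last term is $O\{\max(\|\FM-\SM\|_{\textup{op}},\,\|\Fz(0)-\Sz(0)\|_2)\}$ by Theorem \ref{thm:main}. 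The middle term compares two solutions of the same linear step-function forward-backward system differing only in the offset used — the finite-network offset $\Ss$ versus the sampled graphon-limit offset; since the offset-to-trajectory map is the evolution operator of a linear ODE with bounded coefficients that are uniform in $\N$ (the coupling $\tfrac1{\N}M\otimes D$ has operator norm at most $c\|D\|$), it is Lipschitz, so this term is $O(\|\Fs-\Ss\|_{_C})$ up to the error of projecting $\Fs$ onto step functions, hence again $O\{\max(\|\FM-\SM\|_{\textup{op}},\,\|\Fz(0)-\Sz(0)\|_2)\}$ plus a term vanishing with $\N$. Everything therefore reduces to the finite-population fluctuation $\mathbb{E}\|\Fz_E^\FN - \hat{\Fz}^\FN\|_{_C}$.

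Next I would write the closed-loop dynamics: under \eqref{equ:lqmfg-BR}, an agent $j$ in cluster $\mathcal{C}_q$ satisfies $dx_j(t)=(A_c(t)x_j(t)-BR^{-1}B^\TRANS\Fs_q(t)+D z_q(t))\,dt+\Sigma\,dw_j(t)$ with $A_c(t)=A-BR^{-1}B^\TRANS\Pi_t$ and $z_q(t)=\tfrac1{\N}\sum_\ell m_{q\ell}\bar x_\ell^\FN(t)$, $\bar x_\ell^\FN(t)=\tfrac1{|\mathcal{C}_\ell|}\sum_{j\in\mathcal{C}_\ell}x_j(t)$. Averaging over $\mathcal{C}_q$ and subtracting the analogous deterministic mean-field state $\hat x_q$ (whose network-weighted combination assembles $\hat{\Fz}^\FN$, with $\hat x_q(0)=\mu_q$), the stacked error $\Delta(t)=(\Delta_1(t)^\TRANS,\dots,\Delta_\FN(t)^\TRANS)^\TRANS$, $\Delta_q=\bar x_q^\FN-\hat x_q$, solves the linear SDE
\[
 d\Delta(t)=\big(I_{\N}\otimes A_c(t)+\tfrac1{\N}M\otimes D\big)\Delta(t)\,dt+(I_{\N}\otimes\Sigma)\,d\bar W(t),
\]
where $\bar W=(\bar w_1^\FN,\dots,\bar w_\FN^\FN)$ with $\bar w_q^\FN(t)=\tfrac1{|\mathcal{C}_q|}\sum_{j\in\mathcal{C}_q}w_j(t)$, and $\Delta_q(0)=\bar x_q^\FN(0)-\mu_q$. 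The two error sources are thus the averaged initial condition and the averaged noise; by independence of agents each has second moment $O(1/|\mathcal{C}_q|)$, namely $\mathbb{E}\|\Delta_q(0)\|^2\le C/|\mathcal{C}_q|$ and $\bar w_q^\FN$ has quadratic variation $(t/|\mathcal{C}_q|)I_n$.

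I would then pass to the mild form of $\Delta$ via the evolution operator of $I_{\N}\otimes A_c(\cdot)+\tfrac1{\N}M\otimes D$, whose operator norm is bounded uniformly in $\N$ because $\|\tfrac1{\N}M\|\le c$ and $A_c(\cdot),\Sigma$ are bounded, and combine the Gr\"onwall--Bellman inequality with Doob's $L^2$ maximal inequality and the It\^o isometry to obtain $\mathbb{E}\sup_{t\in[0,T]}\|\Delta(t)\|^2\le C\big(\mathbb{E}\|\Delta(0)\|^2+\sum_q \tfrac{T\|\Sigma\|^2}{|\mathcal{C}_q|}\big)\le C'\,\N/\min_{q\in\mathcal{V}_c}|\mathcal{C}_q|$. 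Finally, translating to the embedded network mean field: cell-by-cell on the partition $\{P_q\}$ one has $\Fz_E^\FN(t)-\hat{\Fz}^\FN(t)=\tfrac1{\N}\sum_\ell m_{q\ell}\Delta_\ell(t)$ on $P_q$, and since $|P_q|=1/\N$, Cauchy--Schwarz yields $\|\Fz_E^\FN(t)-\hat{\Fz}^\FN(t)\|_{(L^2[0,1])^n}\le \tfrac{c}{\sqrt{\N}}\|\Delta(t)\|$; hence $\mathbb{E}\|\Fz_E^\FN-\hat{\Fz}^\FN\|_{_C}\le \tfrac{c}{\sqrt{\N}}\big(\mathbb{E}\sup_t\|\Delta(t)\|^2\big)^{1/2}\le C''/\sqrt{\min_{q}|\mathcal{C}_q|}$. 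Adding the three bounds gives the stated rate.

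The step I expect to be the main obstacle is not the stochastic estimate itself — which is routine once one checks that all Gr\"onwall/Doob constants are independent of $\N$, guaranteed by $\|\tfrac1{\N}M\otimes D\|_{\textup{op}}\le c\|D\|$ — but the bookkeeping around the offset: making precise the intermediate object $\hat{\Fz}^\FN$ (how the graphon-limit offset $\Fs$ is restricted to the $\N$ nodes) and verifying that the corresponding gap is genuinely absorbed into the quantities appearing in Theorem \ref{thm:main} together with a term vanishing as $\N\to\infty$, rather than introducing an uncontrolled new error.
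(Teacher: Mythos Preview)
Your approach is correct, and the core stochastic estimate (averaged noise and averaged initial conditions, handled via It\^o isometry, Doob's maximal inequality, and Gr\"onwall with constants uniform in $\N$) matches what is needed. The paper, however, takes a slightly leaner route that avoids your three-term split and the intermediate object $\hat{\Fz}^\FN$ altogether.

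The paper compares $\Fz_E^\FN$ directly to $\Sz$ using only two terms, $\mathbb{E}\|\Fz_E^\FN-\Sz\|_{_C}+\|\Sz-\Fz\|_{_C}$. It does so by writing the closed-loop dynamics with the \emph{finite-network} offset $\bar s_q$ (i.e., $\Ss$) rather than the graphon offset $\Fs$; then $\Sz$ is exactly the deterministic nodal-population limit of the same dynamics, so the offset terms cancel identically and the error $e_z=z_{_E}-\bar z$ is driven solely by the averaged noise and the averaged initial-condition error. No ``middle term'' arises. Your version, by contrast, follows the theorem statement literally (agents implement \eqref{equ:lqmfg-BR} with $\Fs$), which forces you to track the offset mismatch separately. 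Your worry about this middle term is resolvable: if you project $\Fs$ onto piecewise constants by $L^2$-averaging over each $P_q$ (a contraction, and $\Ss$ is already in that subspace), then $\|\hat{\Fz}^\FN-\Sz\|_{_C}\le C\|\Fs-\Ss\|_{_C}$, which is already bounded by Theorem~\ref{thm:main}. So your route works; it simply carries one extra bookkeeping step that the paper sidesteps by its choice of intermediate. On the other hand, your use of Doob's inequality to pass from pointwise second moments to $\mathbb{E}\sup_t$ is more careful than the paper's treatment of the same step.
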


\begin{proof}
Theorem \ref{thm:main} provides the difference between the graphon mean field $\Fz$ (after taking both the graphon limit and the nodal population limit) and the network mean field $\Sz$ (after taking only the nodal population limit). 
In order to analyze the difference between the network empirical average $\Fz^\FN_{_E}$ and the graphon mean field $\Fz$, we need to find out the difference between the network mean field $\Sz$ and the network empirical average $\Fz^\FN_{_E}$.

Recall from \eqref{eqn1} the dynamics of  agent $i \in \{1,..., K\}:$  
\begin{align}\label{eq:ithdynamics2}
dx_i(t)= (Ax_i(t) +B u_i(t)+ D z_i(t))dt +\Sigma dw_i(t). 
\end{align}
If the graph structure is exactly known to each agent, then based on Proposition \ref{prop:best-response-finitnetwork}, the network mean field strategy on the $N$-node graph is given by  
\begin{equation}\label{eq:control-exact-graph}
	\begin{aligned}
		 u_i(t) &= - R^{-1}B^\TRANS(\Pi_t x_i(t)+ \bar s_q(t)), \qquad i \in \mathcal{C}_q. 
		\end{aligned}
\end{equation}
where $\bar{s}_q$ (if exists) is given by the solution pair $(\bar s,\bar z)$ to the joint equations \eqref{eq:compact-s-evo} and \eqref{eq:compact-z-evo}, which is equivalently given by the solution pair $(\Ss,\Sz)$ to the step function graphon dynamical systems \eqref{eq:z-evo-stepfunction} and \eqref{eq:s-evo-stepfunction}. Let  $A_c(t)\triangleq (A-BR^{-1}B^\TRANS \Pi_t)$.
Substituting the control in  \eqref{eq:ithdynamics2} by \eqref{eq:control-exact-graph} yields the close-loop dynamics under network mean field strategy
\begin{equation}  \label{eq:ithdynamics3}
	\begin{aligned}
dx_i(t)&= A_c(t)x_i(t) + D z_i(t) - BR^{-1}B^\TRANS \bar s_q)dt +\Sigma dw_i(t).  
\end{aligned}
\end{equation}
Let $w^Q_q(t) \triangleq \frac1{|\mathcal{C}_q|}\sum_{i\in \mathcal{C}_q} w_i(t)$ and $\bar x_{_Eq}(t)\triangleq \frac1{|\mathcal{C}_q|}\sum_{i\in \mathcal{C}_q} x_i(t)$. 
Then based on \eqref{eq:ithdynamics3} the evolution of the network empirical average at node $q\in \mathcal{V}_C$ is given by 
\begin{equation}
	\begin{aligned}
	d\bar x_{_Eq}(t)&= \big(A_c(t)\bar x_{_Eq}(t) + D z_q(t) \\
& \quad - BR^{-1}B^\TRANS \bar s_q\big)dt+\Sigma dw^Q_q(t), \quad q\in \mathcal{V}_c.
\end{aligned}
\end{equation}
$\text{Let }\bar x_{_E}(t)\triangleq [\bar x_{_E1}(t),..., \bar x_{_E N }(t)]^\TRANS.$
Then the network empirical average $z_{_E}(t)= \frac1N M \bar x_{_E}(t)$ satisfies 
\begin{equation*}
	\begin{aligned}
	&dz_{_E}(t)= I_N \otimes A_c(t)z_{_E}(t) + \frac1N M_N\otimes D  z_{_E}(t)\\
	& \qquad \qquad   - \frac1NM_N\otimes BR^{-1}B^\TRANS \bar s)dt +I_N\otimes \Sigma dw^Q(t), 
\end{aligned}
\end{equation*}
where  $z_{_E}(0)  = \frac1N M\bar x_{_E}(0)$ and $w^Q = {[w^Q_1,.... w^Q_N]}^\TRANS$. Recall from \eqref{eq:compact-z-evo} the equation for network mean field $\bar z$ under the nodal population limit.  
Let $e_z \triangleq (z_{_E}- \bar z)$. Then 
\[
d e_z(t) =  (I_N\otimes \Sigma) dw^Q(t), \quad e_z(0) = z_{_E}(0) - \bar z(0),
\]
that is,
$
	e_z(t) =(I_N\otimes \Sigma)  \int_0^t dw^Q(s) + e_z(0). 
$ Hence
\begin{equation}
\begin{aligned}
		 \mathbb{E} &[e_z(t) e_z(t)^\TRANS] = t (I_N\otimes \Sigma)   (D_{\mathcal{C}}\otimes I_n) (I_N\otimes \Sigma)^\TRANS  \\
		& + \frac1N M_N \mathbb{E}(\bar x_E(0)- \bar x(0)) (\bar x_E(0)- \bar x(0))^\TRANS \frac1N M_N^\TRANS\\
		& = t D_\mathcal{C}\otimes \Sigma\Sigma^\TRANS +  \frac1N M_N D_\mathcal{C}^{\frac12}\text{diag}(\sigma_1^2,...,\sigma_N^2) D_\mathcal{C}^{\frac12} \frac1N M_N^\TRANS,
\end{aligned}
\end{equation}
where
 $\sigma_q^2$ denotes the variance of the initial value distribution at cluster $\mathcal{C}_q$ with $q\in \mathcal{V}_c$, $D_{\mathcal{C}} := \textup{diag}(\frac1{|\mathcal{C}_1|}, \ldots, \frac1{|\mathcal{C}_N|})$, $D_{\mathcal{C}}^\frac{1}{2}:=\textup{diag}(\frac1{\sqrt{|\mathcal{C}_1|}}, \ldots, \frac1{\sqrt{|\mathcal{C}_N|}})$, 
 and 
%
$\bar x(0)= (\mu_1,..., \mu_N)$
with $\mu_q = \lim_{|\mathcal{C}_q|\to \infty}\frac1{|\mathcal{C}_q|}\sum_{i\in \mathcal{C}_q} x_i(0)$.
Let $\bar{\sigma}^2$ denote a uniform upper bound for the finite variances ($\sigma_1^2,\sigma_2^2,...$) of distributions of nodal initial conditions. 
Then 
\begin{equation}
\begin{aligned}
	 &\mathbb{E} (\text{Tr}[e_z(t) e_z(t)^\TRANS]) =   \text{Tr}(\mathbb{E} [e_z(t) e_z(t)^\TRANS]) \\
	& \qquad \leq t \text{Tr}(D_{\mathcal{C}})
  \text{Tr}(\Sigma \Sigma^\TRANS) +   \bar\sigma^2 \text{Tr}\Big(D_\mathcal{C} \frac1{N^2}  M_N^2\Big). 
  \end{aligned}
\end{equation}
Furthermore, we note the following equalities hold $\|\Fz_{E}^\FN(t) - \Sz(t)\|_2^2  = \frac1N \sum_{q=1}^N ({z_{_Eq}}(t)- \bar z_q (t))^2 	  = \frac 1N (\text{Tr}[e_z(t) e_z(t)^\TRANS]).
$
Therefore, we obtain that for $t\in [0,T]$,
\begin{equation*}
	\begin{aligned}
		 \mathbb{E}&\|\Fz_{E}^\FN(t) - \Sz(t)\|_2^2 \\
		&\leq \frac1 N t \text{Tr}(D_{\mathcal{C}})
  \text{Tr}(\Sigma \Sigma^\TRANS) +   \frac1 N \bar\sigma^2 \text{Tr}\Big(D_\mathcal{C} \frac1{N^2}  M_N^2\Big)\\
  &\text{(since $M_N^2$ and $D_{\mathcal{C}}$ are real positive semi-definite matrices)}\\
  & \leq \frac1 N t \text{Tr}(D_{\mathcal{C}})
  \text{Tr}(\Sigma \Sigma^\TRANS) +   \frac1 N \bar\sigma^2 \text{Tr}\Big(D_\mathcal{C}\Big) \text{Tr}\Big(\frac1{N^2}  M_N^2\Big)\\
  & \leq t \max_{q \in \mathcal{V}_c}\frac{1}{|\mathcal{C}_q|} \Big(\text{Tr}(\Sigma \Sigma^\TRANS)  + \bar\sigma^2\|\SM\|_2^2),
	\end{aligned}
\end{equation*}
\text{and hence}
\[  \mathbb{E} \|\Fz_{E}^\FN - \Sz\|_C^2  \leq T \frac{1}{\min_{q \in \mathcal{V}_c}|\mathcal{C}_q|} \Big(\text{Tr}(\Sigma \Sigma^\TRANS)  + \bar\sigma^2\|\SM\|_2^2) . 
\]
A further relaxation  yields the following
\[
\begin{aligned}
\mathbb{E} \|\Fz_{E}^\FN & - \Sz\|_C  \leq \sqrt{\mathbb{E} \|\Fz_{E}^\FN - \Sz\|_C^2} \\
& \leq \frac{1}{\sqrt{\min_{q \in \mathcal{V}_c}|\mathcal{C}_q|}}\sqrt{T\Big(\text{Tr}(\Sigma \Sigma^\TRANS)  + \bar\sigma^2\|\SM\|_2^2)}. 
\end{aligned}
\]
Under the assumptions \eqref{eq:contraction-cond-all} and \eqref{eq:convergence-init-op}, the above inequalities, together with Theorem \ref{thm:main}, yield the expected difference between the network empirical average $\Fz^\FN_{_E}$ and the graphon mean field $\Fz$ as follows:
\begin{equation}
	\begin{aligned}
		 \mathbb{E}\|\Fz_E^\FN -& \Fz \|_C \leq \|\Sz - \Fz \|_C + \mathbb{E}\|\Fz_E^\FN - \Sz \|_C  \\
		 =&~ O \bigg\{\max(\|\FM -\SM\|_\textup{op}, \\
		 &~\qquad \frac{1}{\sqrt{\min_{q \in \mathcal{V}_c}|\mathcal{C}_q|}},  ~\|\Fz(0)-\Sz(0)\|_2)\bigg\}.
	\end{aligned}
\end{equation}
\end{proof}

{
\subsection{Application to Systems on Sampled Random Graphs}
Following \cite{lovasz2012large}, the \emph{cut norm} $\|\cdot\|_\Box$ and the \emph{cut metric} $\delta_\Box(\cdot, \cdot)$ are respectively defined by $\|\FU\|_{\Box}\triangleq \sup_{S,T \subset [0,1]}\left| \int_{S\times T}\FU(x,y) dx dy\right|$ 
and 
\begin{equation}\label{eq:cut-metric}
	\delta_{\Box}(\FU,\FV)=\inf_{\phi \in \Phi}\|\FU-\FV^\phi\|_{\Box}, \quad \forall \FU, \FV \in \ESC
\end{equation}
where  $\FV^\phi(x,y)\triangleq \FV(\phi(x),\phi(y))$ and $\Phi$ denotes the set of all measure preserving bijections $\phi:$ $[0,1] \to [0,1]$.

In the characterization of the graphon convergence in \eqref{eq:convergence-init-op}, the cut norm $\|\cdot\|_{\Box}$ and the cut metric $\delta_{\Box}(\cdot,\cdot)$  may be employed, since 
for any $\FM\in \ESO$ the following inequalities hold
\begin{equation}\label{eq:cut-op-norm}
	\frac18\|\FM\|_{\textup{op}}^2 \leq  \|\FM\|_{\Box} \leq  \|\FM\|_{\textup{op}}
\end{equation}
The inequalities in \eqref{eq:cut-op-norm} are immediate consequences of  \cite[Lem. E.6 and Eqn. (4.4)]{janson2010graphons} (see also \cite{avella2018centrality}). %

Random graphs may be generated from graphons following two sampling procedures in \cite[p.157]{lovasz2012large}:
\begin{procedure}[Random Simple Graphs from Graphon]~\label{simple-sampling}
\begin{enumerate}[S1]
	\item Sample 
 $N$ random points $\{x_1,..., x_N\}$  from the uniform distribution on $[0,1]$;
 \item Conditioned on $\{x_1,..., x_N\}$, connect all the unordered nodes pairs $(i,j)$, $i\neq j$,  with probability $\FM(x_i, x_j)$ to generate a random simple graph.
\end{enumerate}
\end{procedure}
\begin{procedure}[Random Weighted Graphs from Graphon]\label{weight-sampling}
~\vspace{-0.4cm}
\begin{enumerate}[S1]
	\item Sample $N$ random points $\{x_1,..., x_N\}$ from the uniform distribution on $[0,1]$;
 \item Conditioned on $\{x_1,..., x_N\}$, connect all the unordered nodes pairs $(i,j)$, $i\neq j$ with weight $\FM(x_i, x_n)$ to generate a weighted graph.
\end{enumerate}
\end{procedure}

Let $\Phi$ denote the set of measure preserving bijections $\phi:[0,1]\to [0,1]$. For any $\Fv\in (L^2[0,1])^n$, $\Fv^\phi$ with $\phi \in \Phi$ is defined by
$
\Fv^\phi(\alpha) = \Fv(\phi(\alpha)) \in \BR^n,~ \forall \alpha \in [0,1].
$
For $\Fz \in C([0,T]; (L^2[0,1])^n)$, $\Fz^\phi\in C([0,T]; (L^2[0,1])^n)$ denotes the function  that satisfies $\Fz^\phi(t)=(\Fz(t))^\phi$ for all $t\in [0,T]$.
Let  $\Sz(0)$ denote the step function associated with the initial condition of the network mean field $\bar z(0) = \frac1 N M [\mu_1,....,\mu_N]^\TRANS$.  

\begin{proposition}\label{prop:cut-metric-lqg-gmfg}
Consider a sequence of random graphs of increasing size generated from an underlying graphon $\FM \in \ESZ \subset \ESO$ following  Procedure \ref{simple-sampling} (or Procedure \ref{weight-sampling}) and let $\{\SM\}$ denote the associated sequence of step functions.  
 Under the assumptions in \eqref{eq:contraction-cond-all} and \eqref{eq:convergence-init-op}, 
	the asymptotic error bound with respect to $N$  for $\inf_{\phi \in \Phi}\|\Fz-\Sz^{\phi}\|_{_C}$ and that for   $\inf_{\phi \in \Phi}\|\Fs-\Ss^{\phi}\|_{_C}$ are given by
\begin{equation}\label{eq:asymptotic-error-Wrandom}
	O\left\{\max\Big({(\log N)^{-\frac14}},~ \sup_{\phi \in \Phi}\|\Fz(0)-\Sz^{\phi}(0)\|_2\Big)\right\}
\end{equation}
with probability at least $1-\exp(\frac{-\N}{2\log \N})$ where this probability is due to the randomness in the sampling procedure.
\end{proposition}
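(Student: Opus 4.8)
The plan is to combine three ingredients: the explicit error estimate \eqref{eq:error-upper-bound} established in the proof of Theorem~\ref{thm:main}, a quantitative $W$-random graph convergence bound in the cut metric, and the operator-norm/cut-norm comparison \eqref{eq:cut-op-norm}. The point to be handled is that \eqref{eq:error-upper-bound} controls $\|\Fz-\Sz\|_{_C}$ through $\|\FM-\SM\|_{\textup{op}}$, whereas for a graph sampled from $\FM$ only $\delta_\Box(\FM,\SM)=\inf_{\phi\in\Phi}\|\FM-\SM^{\phi}\|_\Box$ is small; the measure preserving bijection is what absorbs this mismatch, and the square in \eqref{eq:cut-op-norm} is what turns a $(\log N)^{-1/2}$ cut-distance rate into the $(\log N)^{-1/4}$ operator-norm rate.

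First I would record the sampling estimate. For a random graph on $N$ nodes generated from $\FM\in\ESZ$ by Procedure~\ref{simple-sampling} (resp.\ Procedure~\ref{weight-sampling}) with associated step function $\SM$, the first-moment sampling lemma of \cite[Ch.~10]{lovasz2012large} gives $\BE\,\delta_\Box(\FM,\SM)\le C_0(\log N)^{-1/2}$ for an absolute constant $C_0$ (for Procedure~\ref{simple-sampling} the additional $0/1$ edge rounding only adds an $O(N^{-1/2})$ term, which is dominated). Resampling a single coordinate $x_i$ together with its incident edges changes $\delta_\Box(\FM,\SM)$ by at most $O(1/N)$, so the bounded-differences inequality yields
\[
\BP\!\Big(\delta_\Box(\FM,\SM)>C_0(\log N)^{-1/2}+(\log N)^{-1/2}\Big)\le\exp\!\Big(\tfrac{-N}{2\log N}\Big),
\]
i.e.\ $\delta_\Box(\FM,\SM)=O((\log N)^{-1/2})$ off an event of probability at most $\exp(-N/(2\log N))$.

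On this event I would pass from the cut norm to solutions. Fix $\varepsilon>0$ and a measure preserving bijection $\phi$ with $\|\FM-\SM^{\phi}\|_\Box\le\delta_\Box(\FM,\SM)+\varepsilon$; since $\FM,\SM\in\ESZ$ the symmetric function $\FM-\SM^{\phi}$ lies in $\ESO$, so \eqref{eq:cut-op-norm} gives $\|\FM-\SM^{\phi}\|_{\textup{op}}\le\sqrt{8\,\|\FM-\SM^{\phi}\|_\Box}$ and hence $\inf_{\phi\in\Phi}\|\FM-\SM^{\phi}\|_{\textup{op}}=O((\log N)^{-1/4})$. To invoke Theorem~\ref{thm:main} along $\phi$ I would use the equivariance of the Global LQG-GMFG Forward--Backward Equations: the unitary $U_\phi:\Fv\mapsto\Fv^{\phi}$ on $(L^2[0,1])^n$ commutes with every fiberwise operator $[D\BI]$ and conjugates $[D\SM]$ into $[D\SM^{\phi}]$, so if $(\Sz,\Ss)$ solves \eqref{eq:z-evo}--\eqref{eq:s-evo} with graphon $\SM$ then $(\Sz^{\phi},\Ss^{\phi})$ solves the same system with graphon $\SM^{\phi}$ and initial data $\bar\Fx^{\phi}(0)$, while $\textup{L}_0(\SM^{\phi})=\textup{L}_0(\SM)\le c_0$ because $\textup{L}_0$ is built from unitarily invariant operator norms. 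The proof of Theorem~\ref{thm:main} uses only $\textup{L}_0(\FM)\le c_0$, $\textup{L}_0(\SM)\le c_0$ and the quantities $\|\FM-\SM\|_{\textup{op}}$, $\|\Fz(0)-\Sz(0)\|_2$, and never the piecewise-constant structure of $\SM$; applying \eqref{eq:error-upper-bound} verbatim with $\SM$ replaced by $\SM^{\phi}$ gives, with $C_1$ independent of $N$,
\[
\|\Fz-\Sz^{\phi}\|_{_C}+\|\Fs-\Ss^{\phi}\|_{_C}\le C_1\max\!\Big(\|\FM-\SM^{\phi}\|_{\textup{op}},\ \|\Fz(0)-\Sz^{\phi}(0)\|_2\Big).
\]
Taking the infimum over $\phi$ on the left, bounding $\|\Fz(0)-\Sz^{\phi}(0)\|_2\le\sup_{\phi\in\Phi}\|\Fz(0)-\Sz^{\phi}(0)\|_2$ on the right, inserting the $(\log N)^{-1/4}$ estimate, and letting $\varepsilon\downarrow0$ yields \eqref{eq:asymptotic-error-Wrandom} with probability at least $1-\exp(-N/(2\log N))$.

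The main obstacle I expect is this last equivariance/robustness step: one has to verify carefully that the solution pair of \eqref{eq:z-evo}--\eqref{eq:s-evo} transforms covariantly under $U_\phi$, including the graphon-dependent initial condition $[I\,\FM]\bar\Fx(0)\mapsto[I\,\SM^{\phi}]\bar\Fx^{\phi}(0)$, and to re-read the proof of Theorem~\ref{thm:main} to confirm that \eqref{eq:error-upper-bound} indeed extends from step graphons to arbitrary elements of $\ESC$ with $\textup{L}_0\le c_0$ (the constants $c_1(N),c_3,\dots,c_6$ there must be checked to remain uniformly bounded when $\SM$ is replaced by $\SM^{\phi}$, which follows from unitary invariance). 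The probabilistic input in the first two steps is standard graphon machinery once the precise estimate in \cite{lovasz2012large} is pinned down.
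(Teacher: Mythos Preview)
Your proposal is correct and follows essentially the same route as the paper: invoke the $W$-random graph sampling bound from \cite[Lemma~10.16]{lovasz2012large} to get $\delta_\Box(\FM,\SM)=O((\log N)^{-1/2})$ with the stated probability, convert this to an operator-norm bound via \eqref{eq:cut-op-norm}, and then apply the error estimate of Theorem~\ref{thm:main}. The paper simply cites Lemma~10.16 directly for the high-probability bound rather than deriving it from a first-moment estimate plus bounded differences, and it leaves the equivariance step (that $(\Sz^{\phi},\Ss^{\phi})$ solves the system with graphon $\SM^{\phi}$, so that \eqref{eq:error-upper-bound} may be applied with $\SM$ replaced by $\SM^{\phi}$) entirely implicit; your explicit treatment of that point via the unitary $U_\phi$ and the unitary invariance of $\textup{L}_0$ is a genuine clarification of what the paper's proof silently uses.
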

\begin{proof}
Let random graphs (with the associated random step function graphons $\{\SM\}$) be generated from an underlying graphon $\FM \in \ESZ \subset \ESO$ following Procedure \ref{simple-sampling} (or Procedure \ref{weight-sampling}). Then following \cite[Lemma 10.16]{lovasz2012large}  the asymptotic error bound for $\delta_{\Box}(\FM, \SM)$ is $O(\frac{1}{\sqrt{\log \N}})$ with probability at least $1-\exp(\frac{-\N}{2\log \N})$, where  the probability is due to the randomness in the sampling procedures to generate random graphs. 
 Furthermore, from \eqref{eq:cut-op-norm} and \eqref{eq:cut-metric}, we obtain that
$
 \inf_{\phi\in \Phi}\|\FM - \SM^\phi\|_{\textup{op}} \leq   \sqrt{ 8\delta_{\Box}(\FM, \SM)}.
$
Following Theorem \ref{thm:main}, under the conditions in \eqref{eq:contraction-cond-all} and \eqref{eq:convergence-init-op}, the asymptotic error bound for  $\inf_{\phi \in \Phi}\|\Fz-\Sz^{\phi}\|_{_C}$ and that for   $\inf_{\phi \in \Phi}\|\Fs-\Ss^{\phi}\|_{_C}$  are then given by \eqref{eq:asymptotic-error-Wrandom},
with probability at least $1-\exp(\frac{-\N}{2\log \N})$). 
\end{proof}


\begin{proposition}
Consider a sequence of  random graphs of increasing size generated from an underlying graphon $\FM \in \ESZ \subset \ESO$ following  Procedure \ref{simple-sampling} (or Procedure \ref{weight-sampling}) and let $\{\SM\}$ denote the associated sequence of step functions. 
	Under the assumptions in \eqref{eq:contraction-cond-all} and \eqref{eq:convergence-init-op}, 
	the asymptotic error in terms of the expected difference between  the network empirical mean field $\Fz_E^\FN$ and the graphon mean field $\Fz$  satisfies 
\begin{equation}
	\begin{aligned}
		& \mathbb{E}\inf_{\phi\in \Phi}\|\Fz_E^\FN - \Fz^\phi \|_C = O \Big\{\max\Big({({\log \N})^{-\frac14}}, \\
		& \qquad \quad ~\sup_{\phi \in \Phi}\|\Sz(0)-\Fz(0)^\phi\|_2,~~\frac{1}{\sqrt{\min_{q \in \mathcal{V}_c}|\mathcal{C}_q|}}\Big)\Big\}.
	\end{aligned}
\end{equation}	
with probability at least $1-\exp(\frac{-\N}{2\log \N})$,
where the expectation is taken with respect to the distributions of additive noises in the dynamics and random initial conditions; if furthermore, $\mu =\mu_1 = \mu_2....$, then 
\begin{equation}\label{eq:same-mean-case}
	\begin{aligned}
		 \mathbb{E}\inf_{\phi\in \Phi}&\|\Fz_E^\FN - \Fz^\phi \|_C \\
		 &= O \Big\{\max\Big({({\log \N})^{-\frac14}}, 
		~\frac{1}{\sqrt{\min_{q \in \mathcal{V}_c}|\mathcal{C}_q|}}\Big)\Big\},
	\end{aligned}
\end{equation}	
with probability at least $1-\exp(\frac{-\N}{2\log \N})$. \NoEndMark 
\end{proposition}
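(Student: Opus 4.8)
The plan is to combine two ingredients that are already in hand: the same-labeling variance estimate appearing inside the proof of Theorem~\ref{thm:GMF2NEMF}, and the random-graph sampling bound of Proposition~\ref{prop:cut-metric-lqg-gmfg}. First I would fix an arbitrary measure-preserving bijection $\phi\in\Phi$ and split by the triangle inequality
\[
\|\Fz_E^\FN - \Fz^\phi\|_C \leq \|\Fz_E^\FN - \Sz\|_C + \|\Sz - \Fz^\phi\|_C ,
\]
then take the infimum over $\phi$ on both sides; since the first summand does not depend on $\phi$, this gives $\inf_{\phi}\|\Fz_E^\FN - \Fz^\phi\|_C \leq \|\Fz_E^\FN - \Sz\|_C + \inf_{\phi}\|\Sz - \Fz^\phi\|_C$. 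Because relabeling by a measure-preserving bijection is an $L^2$-isometry on $(L^2[0,1])^n$ and $\Phi$ is closed under inversion, $\inf_{\phi}\|\Sz - \Fz^\phi\|_C = \inf_{\phi}\|\Fz - \Sz^\phi\|_C$, which (together with $\sup_{\phi}\|\Sz(0)-\Fz(0)^\phi\|_2 = \sup_{\phi}\|\Fz(0)-\Sz^\phi(0)\|_2$) is exactly the quantity controlled by Proposition~\ref{prop:cut-metric-lqg-gmfg}: on the good sampling event, of probability at least $1-\exp(\tfrac{-\N}{2\log\N})$, it is $O\{\max((\log\N)^{-\frac14},\ \sup_{\phi}\|\Fz(0)-\Sz^\phi(0)\|_2)\}$.

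For the remaining same-labeling term I would invoke the variance estimate obtained in the proof of Theorem~\ref{thm:GMF2NEMF}, namely $\mathbb{E}\|\Fz_E^\FN - \Sz\|_C^2 \leq T\,\tfrac{1}{\min_{q}|\mathcal{C}_q|}\big(\text{Tr}(\Sigma\Sigma^\TRANS) + \bar\sigma^2\|\SM\|_2^2\big)$. Since $\FM\in\ESZ\subset\ESO$ forces the sampled step-function graphon $\SM$ to take values in $[0,1]$, $\|\SM\|_2 \leq 1 \leq c$ uniformly in $\N$, so by Jensen's inequality $\mathbb{E}\|\Fz_E^\FN - \Sz\|_C = O(1/\sqrt{\min_{q}|\mathcal{C}_q|})$. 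Adding this to the bound of the previous paragraph (the graph-sampling estimate is deterministic once the good event holds, and the expectation only involves the additive-noise and random initial-condition distributions) yields the first asserted bound.

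For the special case $\mu=\mu_1=\mu_2=\cdots$ I would sharpen the initial-condition term. A direct computation with the uniform partition shows that under equal means $\Fz(0) = \FM(\mu\mathbf{1})$ and $\Sz(0) = \SM(\mu\mathbf{1})$, whence, using $(\mu\mathbf{1})^\phi = \mu\mathbf{1}$, one gets $\Fz(0)-\Sz^\phi(0) = (\FM - \SM^\phi)(\mu\mathbf{1})$ and therefore $\|\Fz(0)-\Sz^\phi(0)\|_2 \leq \|\mu\|_{\BR^n}\,\|\FM - \SM^\phi\|_{\textup{op}}$. Choosing a single $\phi^*\in\Phi$ that is $\varepsilon$-optimal for $\delta_\Box(\FM,\SM)$ and using $\|\FM - \SM^{\phi^*}\|_{\textup{op}}^2 \leq 8\|\FM-\SM^{\phi^*}\|_\Box$ from \eqref{eq:cut-op-norm}, both $\|\FM - \SM^{\phi^*}\|_{\textup{op}}$ and $\|\Fz(0)-\Sz^{\phi^*}(0)\|_2$ are $O(\sqrt{\delta_\Box(\FM,\SM)})=O((\log\N)^{-\frac14})$ on the good event (by \cite[Lemma~10.16]{lovasz2012large}); feeding $\phi^*$ into Theorem~\ref{thm:main} gives $\|\Fz - \Sz^{\phi^*}\|_C = O((\log\N)^{-\frac14})$, and combining with the variance bound of the second paragraph produces \eqref{eq:same-mean-case}.

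The main obstacle I anticipate is organizational rather than analytic: one must coordinate a single measure-preserving bijection across the operator-norm term and the initial-condition term (so that the $\inf_{\phi}$ in the statement is legitimate, with the $\sup_{\phi}$ appearing because $\phi^*$ is chosen to be near-optimal only for the graphon discrepancy), and keep the graph-sampling randomness cleanly separated from the dynamical and initial-condition randomness over which $\mathbb{E}$ is taken. A secondary point requiring one line of justification is the equivariance of the Global LQG-GMFG Forward-Backward Equations \eqref{eq:z-evo}--\eqref{eq:s-evo} under relabeling, i.e.\ that $\Sz^\phi$ solves these equations with $\FM$ and $\Fz(0)$ replaced by $\SM^\phi$ and $\Sz^\phi(0)$, which is precisely what allows Theorem~\ref{thm:main} to be applied to the relabeled pair.
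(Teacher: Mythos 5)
Your proposal is correct and, at the top level, follows the same route as the paper: the paper's own proof is essentially a pointer --- ``apply Theorem \ref{thm:GMF2NEMF} and follow the argument of Proposition \ref{prop:cut-metric-lqg-gmfg}'' --- and your triangle-inequality split $\inf_{\phi}\|\Fz_E^\FN-\Fz^\phi\|_C \le \|\Fz_E^\FN-\Sz\|_C + \inf_{\phi}\|\Sz-\Fz^\phi\|_C$, with the first term controlled by the population-sampling variance estimate from the proof of Theorem \ref{thm:GMF2NEMF} and the second by the cut-norm sampling bound, is exactly how those two ingredients are meant to be combined. Your observations that relabeling by a measure-preserving bijection is an $L^2$-isometry (so the two infima agree) and that $\|\SM\|_2\le 1$ for step functions sampled from $\FM\in\ESZ$ (so the variance bound is uniform in $\N$) are the right glue; the paper leaves both implicit.

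The one place you genuinely diverge is the equal-means case. The paper disposes of it by asserting that $\sup_{\phi\in\Phi}\|\Sz(0)-\Fz(0)^\phi\|_2=0$, so that the initial-condition term simply drops out of the max. You instead compute $\Fz(0)-\Sz^\phi(0)=(\FM-\SM^\phi)(\mu\mathbf{1})$ and bound it by $\|\mu\|_{\BR^n}\,\|\FM-\SM^\phi\|_{\textup{op}}=O\big((\log\N)^{-\frac14}\big)$ along a near-optimal $\phi^*$, which is then absorbed into the $(\log\N)^{-\frac14}$ term already present. Your version is the more defensible one: the paper's identity would require $\Fz(0)^\phi$ to be independent of $\phi$ and equal to $\Sz(0)$, which holds only when the degree functions of $\FM$ and $\SM$ are constant, whereas your bound needs no such hypothesis and still yields \eqref{eq:same-mean-case}. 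The two points you flag as needing one line each --- closure of $\Phi$ under inversion and equivariance of the step-graphon forward-backward system under relabeling, so that Theorem \ref{thm:main} applies to the pair $(\SM^{\phi^*},\Sz^{\phi^*}(0))$ --- are indeed the only loose ends, and both are routine.
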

\begin{proof}
	The results above may be obtained by applying Theorem \ref{thm:GMF2NEMF} and following similar lines of proof arguments to that of Proposition \ref{prop:cut-metric-lqg-gmfg}. The last conclusion in \eqref{eq:same-mean-case} is immediate by recognizing 
 $\sup_{\phi \in \Phi}\|\Sz(0)-\Fz(0)^\phi\|_2 =0$. 
\end{proof}
}
 
 \textcolor{black}{
 We note that the result above is for random graphs sampled from a general graphon. If 
 the limit graphon has given smooth properties (such as Piecewise Lipschitz graphon in \cite{avella2018centrality}), $\|\FM -\SM\|_\textup{op}$ has better rate of convergence and hence better asymptotic error bounds can be obtained. 
 }

{
\section{Systematic Procedure for Fitting Graphon}

\begin{procedure}[Fitting Graphon Spectral Decompositions] \label{procedure1}
~
\begin{enumerate}[S1]
	\item Generate matrices from the limit graphon based on the uniform grid on $[0,1]^2$.
\item Compute the eigen decomposition of the matrices and identify the eigenvectors. 
\item Fit the eigenvectors by suitable functions. 
\item  Reconstruct the approximated graphon limit based on these approximated eigenfunctions functions. 
\end{enumerate}
\end{procedure}

\begin{figure}[htb]
\centering
	\includegraphics[width=4cm]{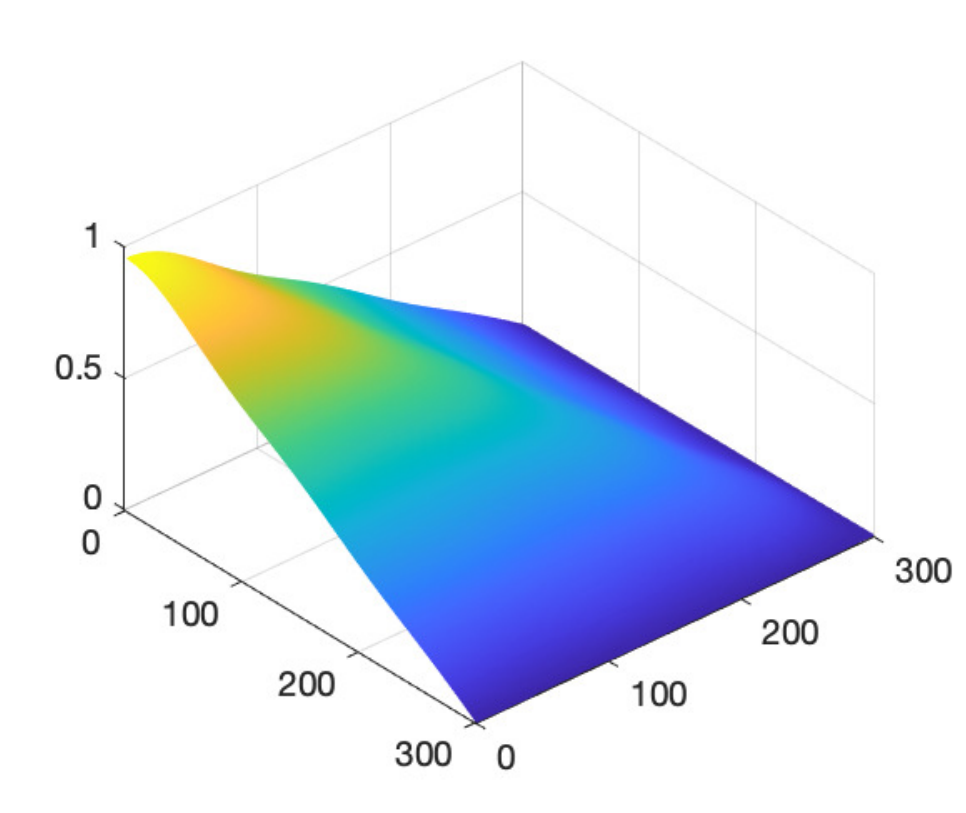} \quad 
	\includegraphics[width=4cm]{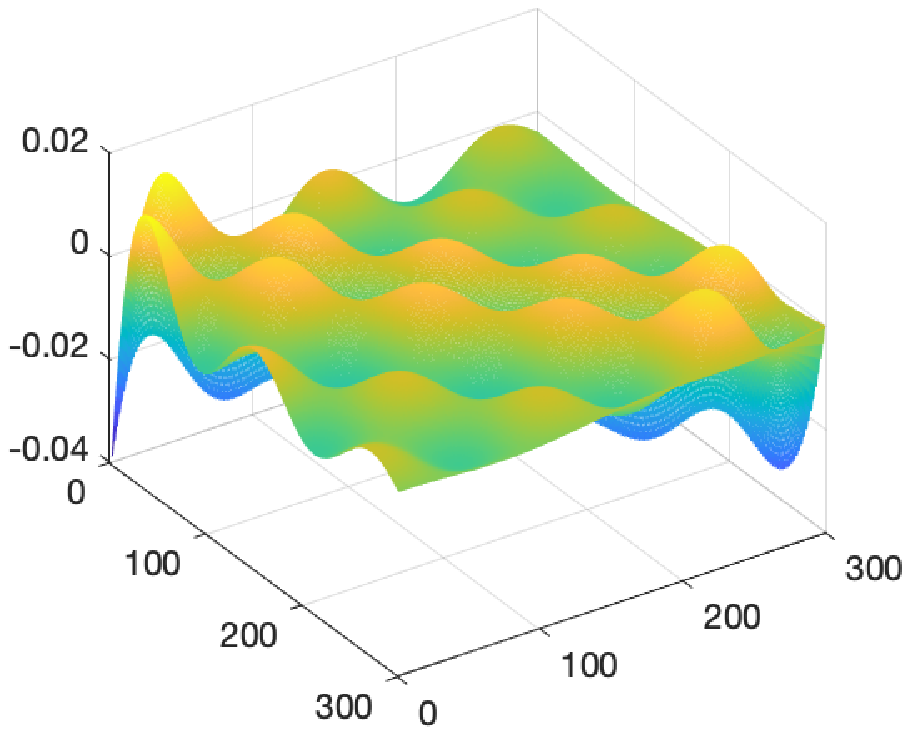}
	\caption{Fitting the uniform attachment graphon limit $\FM(x,y) = 1-\max(x,y)$ with $(x,y) \in [0,1]^2$ following  Procedure \ref{procedure1}. The left is its approximation based on the $5$ most significant eigen directions and the right is the approximation error. We first sample a matrix of size $300\times 300$ from the limit based on the uniform grid of $[0,1]^2$ for the uniform attachment graphon limit; then we compute the eigen decomposition of the matrix and proceed to identify the 5 most significant eigenvalues; afterwards, we approximate (or identify) the eigenvectors by sinusoidal fit (in fact, from the shape of the actual eigenvector, it is quite clear the associated graphon eigenfunctions are sinusoidal functions). }
\end{figure}

The procedure is suitable for any graphon. 
The size of the sampling matrices can be chosen according to the available computational resources and the error tolerance for  approximations. 
For the procedure to approximate eigenfunctions based on Fourier functions and the approximation error, readers are referred to \cite[Chapter~2]{shuangPhDthesis2018} and \cite{ShuangPeterCDC19W1}.

The properties of graphon eigenfunctions can be inferred from the properties of the graphon, which may be used to identify suitable function classes to approximate eigenfunctions.  %

\begin{proposition}[Differentiability]\label{prop:Diff-eigenvec}
If a graphon $\FM \in \ESC$ is differentiable with respect to the nodal index in $[0,1]$ almost everywhere up to the $k$ order ($k\geq 1$), then the eigenfunctions of $\FM$ associated with non-zero eigenvalues which are uniformly bounded away from zero are differential up to the $k$ order almost everywhere with respect to the nodal index in $[0,1]$. 
\end{proposition}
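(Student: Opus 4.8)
The plan is to exploit that a nonzero eigenvalue forces an eigenfunction to \emph{reproduce itself} through the integral operator of \eqref{eq:graphon-self-adjoint-operator}, and then to transport regularity from the kernel $\FM$ to the eigenfunction by differentiating under the integral sign. Concretely, I would fix an eigenfunction $\Ff \in L^2[0,1]$ with $\FM\Ff = \lambda\Ff$ and $\lambda\neq 0$, so that for almost every $\alpha$,
\[
\Ff(\alpha) = \frac{1}{\lambda}\int_{[0,1]}\FM(\alpha,\eta)\,\Ff(\eta)\,d\eta ;
\]
after modification on a null set, $\Ff$ then coincides with $\alpha\mapsto \lambda^{-1}(\FM\Ff)(\alpha)$ and therefore inherits whatever smoothness in $\alpha$ the right-hand side possesses. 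A first consequence, using only $|\FM|\le c$ and $\Ff\in L^2[0,1]\subset L^1[0,1]$, is that $\Ff$ is (a version of itself that is) bounded, with $\|\Ff\|_\infty \le c\,|\lambda|^{-1}\|\Ff\|_2$; this keeps all subsequent integrands integrable as soon as the relevant partial derivatives of $\FM$ are merely in $L^1$.

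The main body of the argument is an induction on $j=1,\dots,k$. Under the hypothesis that $\FM$ is differentiable a.e.\ up to order $k$ in its first slot, $\partial_\alpha^{\,j}\FM(\alpha,\eta)$ exists for a.e.\ $(\alpha,\eta)$; by Fubini there is a full-measure set of $\eta$ on which $\partial_\alpha^{\,j}\FM(\cdot,\eta)$ exists a.e., and I would apply the Leibniz rule for differentiation under the integral sign, one order at a time, to obtain for a.e.\ $\alpha$
\[
\Ff^{(j)}(\alpha) = \frac{1}{\lambda}\int_{[0,1]}\partial_\alpha^{\,j}\FM(\alpha,\eta)\,\Ff(\eta)\,d\eta,\qquad j=1,\dots,k,
\]
which immediately yields the claimed a.e.\ differentiability of $\Ff$ up to order $k$. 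The assumption that the eigenvalues in question are uniformly bounded away from zero is not needed for this pointwise statement about any one eigenfunction (a single nonzero $\lambda$ suffices there); it is what makes the bounds $\|\Ff^{(j)}\|_\infty \le |\lambda|^{-1}\|\partial_\alpha^{\,j}\FM(\alpha,\cdot)\|_{L^1}\,\|\Ff\|_\infty$ uniform over the necessarily finite family of such eigenfunctions, which is the form relevant to the spectral fitting procedure.

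The step I expect to be the main obstacle is the rigorous justification of interchanging $\tfrac{d}{d\alpha}$ with $\int_{[0,1]}$ under only an a.e.\ differentiability hypothesis: the null set on which $\partial_\alpha\FM(\cdot,\eta)$ may fail to exist depends on $\eta$, and the difference quotients $\varepsilon^{-1}\bigl(\FM(\alpha+\varepsilon,\eta)-\FM(\alpha,\eta)\bigr)$ need an $L^1(d\eta)$ majorant, uniform for small $\varepsilon$, before one can take limits inside the $\eta$-integral. I would handle this by first promoting the hypothesis to absolute continuity of $\FM(\cdot,\eta)$ in its first variable for a.e.\ $\eta$, so that $\FM(\alpha,\eta)=\FM(0,\eta)+\int_0^\alpha \partial_s\FM(s,\eta)\,ds$ and the difference quotients become averages of $\partial_s\FM(\cdot,\eta)$ over shrinking intervals: the Lebesgue differentiation theorem then gives pointwise convergence in $\alpha$, while the uniform $L^1$ control inherited from $|\FM|\le c$ together with dominated convergence licenses passing the limit through the $\eta$-integral. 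The higher-order cases $j\ge 2$ follow by iterating this scheme, differentiating the formula already obtained for $\Ff^{(j-1)}$ and using that $\Ff$ itself is held fixed throughout.
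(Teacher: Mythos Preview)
Your approach is essentially the same as the paper's: write $\Ff = \lambda^{-1}\FM\Ff$ and differentiate under the integral sign to pull $\alpha$-regularity of the kernel onto the eigenfunction. The paper's proof is briefer and simply assumes a uniform bound $L$ on the partial derivatives $\partial_\alpha^{\,i}\FM$ to justify the interchange, whereas you correctly flag this as the delicate step and sketch a more careful route via absolute continuity and Lebesgue differentiation; your observation that a single nonzero $\lambda$ already yields the a.e.\ differentiability (with the uniform-away-from-zero hypothesis needed only for uniform bounds across the family) is also accurate and slightly sharper than what the paper records.
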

\begin{proof}
Consider any normalized eigenvector $\Fu$ of $\FM$. If there exists  $L>0$ such that for all $\beta\in [0,1]$,
	 \[
	\left|\frac{d^i}{d\alpha^ i}\FM(\alpha, \beta)\right| \leq L, \quad  \forall i \in \{1,..., k\}, \text{for almost all}~\alpha\in[0,1],
	\]
	then \text{for almost all}~$\alpha\in[0,1]$,  
	\[
	\begin{aligned}
		\left|\frac{d^i}{d\alpha^i} \left([\FM \Fu](\alpha)\right) \right|
		& =  \left|\frac{d^i}{d\alpha^i}\left(\int_0^1\FM(\alpha, \beta) \Fu(\beta)d \beta\right)\right|\\
		& = \left|\left(\int_0^1\frac{d^i}{d\alpha^i} \FM(\alpha, \beta) \Fu(\beta)d \beta\right)\right|\\
		& \leq L \left|\int_0^1 \Fu(\beta)\right| d\beta \leq L \|\Fu\|_2 = L.  
	\end{aligned}
	\]
Based on the definition of eigenfunction, for $\lambda \neq 0$, we know $\Fu(\alpha) = \frac{1}{\lambda}\frac{d^i}{d\alpha^i} \left([\FM \Fu](\alpha)\right)
$ for all $\alpha \in [0,1]$. 
That is the eigenvector is differentiable up to order $k$ almost everywhere. 
\end{proof}

\begin{proposition}[Lipschitz Continuity]\label{prop:LipEigenvec}
Consider a graphon $\FM\in \ESC$. %
If there exists $L> 0$  such that
\begin{equation}\label{eq:LipsCond}
	|\FM(\alpha, \beta) - \FM(\gamma, \eta)| < L( |\alpha - \gamma| + |\beta-\eta|), 
\end{equation}
then any eigenfunction of $\FM$ associated with nonzero eigenvalues which are uniformly bounded away from zero  is Lipschitz continuous. 
\end{proposition}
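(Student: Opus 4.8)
The plan is to exploit the eigenvalue identity $\lambda\,\Fu(\alpha) = \int_0^1 \FM(\alpha,\beta)\Fu(\beta)\,d\beta$ directly, exactly as in the proof of Proposition~\ref{prop:Diff-eigenvec}. First I would note that, although an eigenfunction $\Fu$ is a priori only an equivalence class in $L^2[0,1]$, the right-hand side $\frac1\lambda\int_0^1 \FM(\cdot,\beta)\Fu(\beta)\,d\beta$ is a genuinely pointwise-defined function and serves as the canonical representative of $\Fu$ whenever $\lambda\neq 0$; all the estimates below are carried out for this representative, which is what ``Lipschitz continuous'' should be understood to mean.

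Next, for any $\alpha,\gamma\in[0,1]$, I would subtract the eigenvalue identities at $\alpha$ and $\gamma$ and apply the Lipschitz hypothesis \eqref{eq:LipsCond} with $\eta=\beta$, together with the Cauchy--Schwarz inequality on the probability space $[0,1]$ (so that $\int_0^1|\Fu(\beta)|\,d\beta\le\|\Fu\|_2$):
\begin{equation*}
|\lambda|\,|\Fu(\alpha)-\Fu(\gamma)| \le \int_0^1 |\FM(\alpha,\beta)-\FM(\gamma,\beta)|\,|\Fu(\beta)|\,d\beta \le L\,|\alpha-\gamma|\int_0^1|\Fu(\beta)|\,d\beta \le L\,|\alpha-\gamma|\,\|\Fu\|_2.
\end{equation*}
Normalizing $\|\Fu\|_2=1$ then gives $|\Fu(\alpha)-\Fu(\gamma)|\le \frac{L}{|\lambda|}\,|\alpha-\gamma|$, i.e.\ $\Fu$ is Lipschitz with constant $L/|\lambda|$.

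Finally, I would observe that since the nonzero eigenvalues under consideration are uniformly bounded away from zero, say $|\lambda|\ge\lambda_{\min}>0$, the Lipschitz constant satisfies $L/|\lambda|\le L/\lambda_{\min}$ uniformly over all such eigenfunctions, which is the assertion. I do not anticipate any real obstacle: the only point requiring a word of care is the identification of the Lipschitz representative of the $L^2$ eigenfunction, and this is immediate from the integral representation; the remainder is the one-line estimate above, parallel to the differentiability argument already given in Proposition~\ref{prop:Diff-eigenvec}.
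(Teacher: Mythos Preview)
Your proof is correct and follows essentially the same route as the paper: both use the eigenvalue identity $\lambda\Fu=\FM\Fu$, bound $|[\FM\Fu](\alpha)-[\FM\Fu](\gamma)|$ via the Lipschitz hypothesis and $\int_0^1|\Fu|\le\|\Fu\|_2$, and then divide by $|\lambda|$. Your added remark about selecting the pointwise representative of the $L^2$ eigenfunction is a welcome clarification that the paper leaves implicit.
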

\begin{proof}
	Let $\Fu \in L^2[0,1]$ be any normalized eigenvector of $\FM$, that is $\FM \Fu = \lambda \Fu $ with $\lambda \in \BR$. Then 
	\[
	\begin{aligned}
			&\big|[\FM \Fu](\alpha ) - [\FM \Fu](\beta)\big|  = \left|\int_{0}^1 ( \FM(\alpha, \gamma)-\FM(\beta, \gamma)) \Fu(\gamma) d \gamma\right|\\
			& \leq  \int_0^1 \left|( \FM(\alpha, \gamma)-\FM(\beta, \gamma)) \Fu(\gamma) \right| d \gamma\\
			& < L |\alpha - \beta | \int_0^1 |\Fu(\gamma)| d\gamma \leq L   |\alpha - \beta | \|\Fu\|_2 = L  |\alpha - \beta |.
	\end{aligned}
	\]
	That is $\FM \Fu$ is Lipschitz continuous. For $\lambda\neq 0$ that is uniformly bounded away from zero, we know that $\Fu= \frac1{\lambda} \FM \Fu$ is also Lipschitz continuous. 
\end{proof}

A function $f: \Omega\to \BR$ with $\Omega \subset \BR^n$ is called \emph{$\alpha$-H\"older continuous} ($\alpha>0$)  if there exists a constant $C>0$ such that 
\[
|f(x)- f(y)|\leq C\|x-y\|^\alpha.
\]
When $\alpha=1$, $\alpha$-H\"older continuity is then equivalent to the Lipschitz continuity. A function is $\alpha$-H\"older with $\alpha>1$ is constant and $\alpha$-H\"older continuity implies uniform continuity.
Applying this definition to graphons, we know that a graphon $\FM \in \ESC$ is $\alpha$-H\"older  continuous ($\alpha>0$) if there exist a constant $C>0$ for all $(x_1,y_1), (x_2,y_2) \in [0,1]^2$ such that
\[
|\FM(x_1,y_1) - \FM(x_2, y_2)|\leq C \big(\sqrt{|(x_1-x_2)^2 +(y_1-y_2)^2|}\big)^{\alpha}.
\]
Here we use the Euclidean norm for points $[0,1]^2 \subset \BR^2$, but clearly it can be replaced by any equivalent norm. 
\begin{proposition}[H\"older Continuity]\label{prop:holder-eigenvec}
If a graphon $\FM\in \ESC$ is $\alpha$-H\"older continuous ($\alpha>0$), then its eigenfunctions associated with non-zero eigenvalues  which are uniformly bounded away from zero are $\alpha$-H\"older continuous.  
\end{proposition}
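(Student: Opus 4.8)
The plan is to mimic the arguments used for Propositions~\ref{prop:LipEigenvec} and \ref{prop:Diff-eigenvec}: for any eigenfunction $\Fu$ of $\FM$ with eigenvalue $\lambda\neq 0$ we have the identity $\Fu=\tfrac1\lambda\FM\Fu$, so it suffices to transfer the $\alpha$-H\"older regularity of the kernel $\FM$ to the function $\FM\Fu$ by a direct estimate of the integral operator. First I would fix a normalized eigenfunction $\Fu\in L^2[0,1]$ with $\FM\Fu=\lambda\Fu$, and let $\lambda_0>0$ be the uniform lower bound on $|\lambda|$ supplied by the hypothesis (``eigenvalues uniformly bounded away from zero'').

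Next I would estimate, for arbitrary $x,y\in[0,1]$,
\[
\big|[\FM\Fu](x)-[\FM\Fu](y)\big|=\Big|\int_0^1\big(\FM(x,\gamma)-\FM(y,\gamma)\big)\Fu(\gamma)\,d\gamma\Big|\le\int_0^1\big|\FM(x,\gamma)-\FM(y,\gamma)\big|\,|\Fu(\gamma)|\,d\gamma.
\]
Applying the $\alpha$-H\"older bound for $\FM$ with the two second arguments equal, so that $\sqrt{(x-y)^2+(\gamma-\gamma)^2}=|x-y|$, the integrand is at most $C|x-y|^\alpha|\Fu(\gamma)|$, and the Cauchy--Schwarz inequality gives $\int_0^1|\Fu(\gamma)|\,d\gamma\le\|\Fu\|_2=1$. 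Hence $\big|[\FM\Fu](x)-[\FM\Fu](y)\big|\le C|x-y|^\alpha$, i.e.\ $\FM\Fu$ is $\alpha$-H\"older continuous with the same constant $C$ that appears in the H\"older estimate for $\FM$. Dividing by $\lambda$ then shows $\Fu=\tfrac1\lambda\FM\Fu$ is $\alpha$-H\"older continuous with constant $C/|\lambda|\le C/\lambda_0$, a bound independent of the particular eigenfunction, which yields the uniform statement.

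I do not expect a genuine obstacle: the computation is a routine estimate parallel to the Lipschitz case. The only point that needs a word of care is that the uniform lower bound on $|\lambda|$ is exactly what is needed to produce a \emph{common} H\"older constant across the whole family of eigenfunctions; for a single fixed eigenfunction the weaker condition $\lambda\neq 0$ already suffices. I would also note in passing that this scheme specializes verbatim to the Lipschitz case ($\alpha=1$) and, upon iterating on derivatives as in Proposition~\ref{prop:Diff-eigenvec}, to the $C^k$ statement, so the present proposition is the natural common generalization of the two preceding ones.
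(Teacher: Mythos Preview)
Your proposal is correct and follows essentially the same argument as the paper: both estimate $|[\FM\Fu](x)-[\FM\Fu](y)|$ by pulling the absolute value inside the integral, applying the $\alpha$-H\"older bound on $\FM$ in the first variable, and bounding $\int_0^1|\Fu(\gamma)|\,d\gamma\le\|\Fu\|_2=1$, then dividing by $\lambda$. Your additional remark that the uniform lower bound $\lambda_0$ is precisely what yields a common H\"older constant across the family is a useful clarification not made explicit in the paper's version.
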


\begin{proof}
Let $\Fu \in L^2[0,1]$ be any normalized eigenvector of $\FM$, that is $\FM \Fu = \lambda \Fu $ with $\lambda \in \BR$. Then
	\begin{equation}
		\begin{aligned}
			&\big|[\FM \Fu](x ) - [\FM \Fu](y)\big|  = \left|\int_{0}^1 ( \FM(x, \gamma)-\FM(y, \gamma)) \Fu(\gamma) d \gamma\right|\\	%
				& \leq  \int_0^1 \left|( \FM(x, \gamma)-\FM(y, \gamma))\right| | \Fu(\gamma)|  d \gamma\\
			& < C |x - y |^\alpha \int_0^1 |\Fu(\gamma)| d\gamma \leq C   |x - y |^\alpha \|\Fu\|_2 =  C   |x - y |^\alpha .
			\end{aligned}
	\end{equation}
	Since for $\lambda\neq 0$, 
	$\Fu = \frac1{\lambda} \FM \Fu$, we obtain
	\[
	|\Fu(x) - \Fu(y)| \leq C \frac{1}{|\lambda|}|x-y|^\alpha, \quad \forall x, y \in [0,1],
	\]
	that is, $\Fu$ is $\alpha$-H\"older continuous. 
\end{proof}}

{
\section{Idempotent Couplings}
A matrix $X$ is \emph{idempotent} if $X^2 = X$. The eigenvalues of any idempotent matrix are either $0$ or $1$.   Idempotent matrices are not necessarily symmetric. 
\begin{proposition}[Idempotent Couplings]
	If the coupling $\bar M \triangleq \frac1N M$ is idempotent (that is $\bar M^2=\bar M$), $\eta =0$, and for every $q\in \{1,...,N\}$ there exists a unique solution pair $(\bar{s}_q, \bar{z}_q)$ to the following coupled forward-backward equations 
 \begin{align}
    - &\dot{\bar{s}}_q(t) =  A_c(t)^\TRANS \bar{s}_q(t)  -  (Q H-\Pi_tD )\bar{z}_q(t) ,\label{eq:idem-compact-s-evo}\\
    &\dot{\bar{z}}_q(t) = (A_c(t)+D)\bar{z}_q(t)  -  BR^{-1}B^\TRANS \bar{s}_q(t), \label{eq:idem-compact-z-evo}
 \end{align}
with boundary conditions  $\bar{s}_q(T) =  Q_TH \bar{z}_q(T)$ and $\bar{z}_q(0)= \frac{1}{\N}\sum_{\ell=1}^N m_{q\ell} \bar{x}_\ell(0),$
where $t\in [0,T]$, $A_c(t)\triangleq (A  -  B R^{-1}B^\TRANS \Pi_t) $, and $\Pi_{(\cdot)}$ is the solution to the $n\times n$-dimensional matrix Riccati equation 
\begin{equation}\label{eq:idem-Finite-Riccati-Prop1}
		-\dot{\Pi}_t = A^\TRANS \Pi_t + \Pi_t A    -  \Pi_t B R^{-1}B^\TRANS \Pi_t +  Q, ~  \Pi_T =   Q_T,
\end{equation}
  then the game problem defined by \eqref{eq:dyn-net-mf} and \eqref{eq:nodal-limit-finitenet-cost} has a unique Nash equilibrium and the best response in the equilibrium is given as follows: for a generic agent $\alpha$ in cluster $\mathcal{C}_q$,
\begin{align}\label{equ:idem-locallimit-BR}
		 u_\alpha(t) &= - R^{-1}B^\TRANS(\Pi_t x_\alpha(t)+ \bar s_q(t)), ~ \alpha \in \mathcal{C}_q, ~ q \in \mathcal{V}_c. 
\end{align}
\end{proposition}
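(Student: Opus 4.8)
The plan is to reduce everything to Proposition~\ref{prop:best-response-finitnetwork}. The game defined by \eqref{eq:dyn-net-mf} and \eqref{eq:nodal-limit-finitenet-cost} is exactly the one treated there, and the matrix Riccati equation \eqref{eq:idem-Finite-Riccati-Prop1} coincides with the one used in that proposition; so it suffices to exhibit a unique classical solution pair $(\bar z,\bar s)$ to the coupled forward--backward equations \eqref{eq:compact-z-evo}--\eqref{eq:compact-s-evo}, equivalently to \eqref{eq:s-equation}--\eqref{equ:graphon-field}, and to check that its $\bar s$-component is the process $\bar s_q$ appearing in \eqref{equ:idem-locallimit-BR}. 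Then the claimed Nash equilibrium and best response follow directly from Proposition~\ref{prop:best-response-finitnetwork}.

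The key device is the matrix $P := \bar M \otimes I_n \in \mathbb{R}^{nN \times nN}$, which is idempotent since $\bar M^2 = \bar M$. I would first record the elementary identities: $P$ commutes with $I_N \otimes X$ for every $X \in \mathbb{R}^{n\times n}$; $\bar M \otimes X = P\,(I_N \otimes X)$, hence $P^{\perp}(\bar M \otimes X) = 0$ for $P^{\perp} := I - P$; and the initial datum $\bar z(0)$ of \eqref{equ:graphon-field} equals $P\,\bar x(0)$ and so lies in the range of $P$. With $\eta = 0$, \eqref{eq:s-equation} becomes $-\dot{\bar s}_q = A_c(t)^{\TRANS}\bar s_q - (QH - \Pi_t D)\bar z_q$ with $\bar s_q(T) = Q_T H\,\bar z_q(T)$; thus the only difference between the per-node system \eqref{eq:idem-compact-s-evo}--\eqref{eq:idem-compact-z-evo} and \eqref{eq:s-equation}--\eqref{equ:graphon-field} is that the couplings $\tfrac1N\sum_\ell m_{q\ell}(\cdot)_\ell$ in the latter are replaced by $(\cdot)_q$, i.e., in vectorized form, $\bar M\otimes D$ and $\bar M\otimes BR^{-1}B^{\TRANS}$ are replaced by $I_N\otimes D$ and $I_N\otimes BR^{-1}B^{\TRANS}$.

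For existence I would stack the node-wise solutions $(\bar z_q,\bar s_q)$ of \eqref{eq:idem-compact-s-evo}--\eqref{eq:idem-compact-z-evo}, which exist by hypothesis, into a pair $(\bar z,\bar s)$, and show $P\bar z = \bar z$ and $P\bar s = \bar s$. Applying $P$ to the vectorized form of \eqref{eq:idem-compact-s-evo}--\eqref{eq:idem-compact-z-evo} and using the commutation identities shows that $(P\bar z, P\bar s)$ solves the same vectorized system with the same node-wise boundary data (here $(P\bar z)(0) = P^2\bar x(0) = P\bar x(0) = \bar z(0)$), so the per-node uniqueness hypothesis forces $(P\bar z,P\bar s) = (\bar z,\bar s)$. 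Consequently $\tfrac1N\sum_\ell m_{q\ell}\bar z_\ell = \bar z_q$ and $\tfrac1N\sum_\ell m_{q\ell}\bar s_\ell = \bar s_q$, which is precisely what turns \eqref{eq:idem-compact-z-evo} into \eqref{equ:graphon-field}; hence $(\bar z,\bar s)$ solves \eqref{eq:s-equation}--\eqref{equ:graphon-field}. For uniqueness I would take an arbitrary classical solution $(\bar z,\bar s)$ of \eqref{eq:s-equation}--\eqref{equ:graphon-field} with $\eta = 0$ and apply $P^{\perp}$. Since $P^{\perp}(\bar M\otimes D) = P^{\perp}(\bar M\otimes BR^{-1}B^{\TRANS}) = 0$, the process $P^{\perp}\bar z$ satisfies the homogeneous linear ODE $\tfrac{d}{dt}(P^{\perp}\bar z) = (I_N\otimes A_c(t))(P^{\perp}\bar z)$ with $P^{\perp}\bar z(0) = P^{\perp}P\,\bar x(0) = 0$, hence $P^{\perp}\bar z \equiv 0$; substituting this into the $\bar s$-equation and its terminal condition gives $P^{\perp}\bar s \equiv 0$ as well. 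So every solution of \eqref{eq:s-equation}--\eqref{equ:graphon-field} lies in the range of $P$, where $\bar M\otimes D$ and $I_N\otimes D$ act identically, hence solves the vectorized \eqref{eq:idem-compact-s-evo}--\eqref{eq:idem-compact-z-evo}; the per-node uniqueness hypothesis then pins it down and identifies its $\bar s$-component with $\bar s_q$. Continuity of $\Pi_{(\cdot)}$ makes $A_c(\cdot)$ continuous, so the solution is $C^1$, i.e., classical, and Proposition~\ref{prop:best-response-finitnetwork} applies.

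The main obstacle is making the two reductions airtight: showing that the stacked per-node solution genuinely lies in the range of $P$ (so that it solves the truly coupled network equations rather than $N$ isolated copies), and, conversely, that every solution of the network equations is forced into that range. Both rely on the commutation identities for $P$, the per-node uniqueness hypothesis, and crucially on $\eta = 0$ --- without which the forcing terms $QH\eta$ and $Q_T H\eta$ would in general have a nonzero $P^{\perp}$-component (unless $\mathbf{1}_N$ happens to be fixed by $\bar M$) and the clean identification of $\bar s_q$ with the solution of \eqref{eq:idem-compact-s-evo} would fail.
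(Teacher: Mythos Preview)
Your proof is correct and follows essentially the same route as the paper: both exploit idempotence of $\bar M$ to show that any solution $(\bar z,\bar s)$ of the coupled network equations satisfies $\bar M\bar z=\bar z$ and $\bar M\bar s=\bar s$, whereupon the coupling terms $\tfrac1N\sum_\ell m_{q\ell}(\cdot)_\ell$ collapse to $(\cdot)_q$ and the system decouples node by node. Your projection-operator packaging ($P=\bar M\otimes I_n$, $P^\perp$) with an explicit separation of existence and uniqueness is somewhat tidier than the paper's direct argument (which introduces $\bar{\bar s}:=\bar M\bar s$ and observes it satisfies the same linear terminal-value ODE as $\bar s$, hence equals it), but the mathematical content is the same.
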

\begin{proof}
Within an infinite nodal population, the individual effect on the nodal mean field is negligible.  Hence each individual agent in cluster $\mathcal{C}_q$ is solving a LQG tracking problem to track a reference trajectory $\nu_q$. 
The  best response for a generic agent $\alpha$ in cluster $\mathcal{C}_q$ is simply given by the optimal LQG tracking solution as 
\begin{equation}\label{equ:idem-locallimit-BR-proof}
	\begin{aligned}
		 u_\alpha(t) &= - R^{-1}B^\TRANS(\Pi_t x_\alpha(t)+ \bar s_q(t)), \qquad \alpha \in \mathcal{C}_q
		\end{aligned}
\end{equation}
where $\Pi$ is given by \eqref{eq:Finite-Riccati-Prop1} and $\bar{s}_q$ is given by
\begin{equation}\label{eq:idem-s-equation}
	\begin{aligned}		
	-\dot{\bar s}_q(t) &= \big( A  -  B R^{-1}B^\TRANS \Pi_t \big)^\TRANS {\bar s}_q(t) - Q\nu_q(t) {+ \Pi_t D \bar z_q}(t),\\[3pt]
	  \bar s_q{(T)} &= Q_T \nu_q(T),
	\end{aligned}
\end{equation}
and $\nu_q(t) \triangleq H \bar z_q(t)$.
If all agents follow the best response in \eqref{equ:idem-locallimit-BR}, then the evolution of the $\bar{z}$ process satisfy
\begin{equation} \label{equ:idem-graphon-field}
\begin{aligned}
	\dot{\bar{z}}_q(t)  =&~ (A-BR^{-1}B^\TRANS \Pi_t)\bar z_q(t) +D\frac{1}{\N}\sum_{\ell=1}^{\N}m_{q\ell} \bar z_\ell(t)  \\
	& \quad - BR^{-1}B^\TRANS \frac{1}{\N}\sum_{\ell=1}^{\N}m_{q\ell} \bar s_\ell(t)\\ 
\end{aligned}	
\end{equation}
with initial condition $\bar{z}_q(0)  = \frac{1}{\N} \sum_{\ell=1}^{\N}m_{q\ell}\bar{x}_\ell(0),\quad 1\leq q \leq \N.$
Since $\frac1N M$ is idempotent and $\bar z(t) = \frac1M \bar x(t)$,  we obtain  $\frac1NM \bar z(t)= \bar z(t)$ and equivalently $D\frac{1}{\N}\sum_{\ell=1}^{\N}m_{q\ell} \bar z_\ell(t) = D \bar z_q(t).$ 
Furthermore, let us define $\bar {\bar s}(t) \triangleq \frac1N M \bar s(t)$. 
Then \eqref{equ:idem-graphon-field} becomes 
\begin{equation} \label{equ:idem-graphon-field-2}
\begin{aligned}
	\dot{\bar{z}}_q(t)  =&~ (A-BR^{-1}B^\TRANS \Pi_t)\bar z_q(t) +D\bar z_q(t) - BR^{-1}B^\TRANS \bar{\bar s}_q(t)\\
	\bar{z}_q(0)  =& ~ \frac{1}{\N} \sum_{\ell=1}^{\N}m_{q\ell}\bar{x}_\ell(0),\quad 1\leq q \leq \N.
\end{aligned}	
\end{equation}
Similarly, from $\nu_q \triangleq H \bar z_q$, we obtain that $\frac1N M \nu = \nu$.  

Therefore, from \eqref{eq:idem-s-equation}, $\bar{\bar s}$ satisfies
\begin{equation}\label{eq:idem-barbar-s-equation}
	\begin{aligned}		
	-\dot{\bar {\bar s}}_q(t) &= \big( A  -  B R^{-1}B^\TRANS \Pi_t \big)^\TRANS {\bar {\bar s}}_q(t) - Q\nu_q(t) {+ \Pi_t D \bar z_q}(t),\\[3pt]
	  \bar {\bar s}_q{(T)} &= Q_T H \bar z_q(T)
	\end{aligned}
\end{equation}
where $\nu_q(t) \triangleq H \bar z_q(t)$.
This is exactly the same equation as \eqref{eq:idem-s-equation} and hence this implies that $\bar{\bar s}=\bar s $. 
If there exists a unique solution pair 
 $   (\bar z_q(t), \bar s_q(t))_{q \in \mathcal{V}_c, t\in [0,T]}$
to equations \eqref{eq:idem-s-equation} and \eqref{equ:idem-graphon-field}, then the best response strategy for each  agent is uniquely determined by \eqref{equ:idem-locallimit-BR-proof}, \eqref{eq:idem-Finite-Riccati-Prop1}, \eqref{eq:idem-s-equation} and \eqref{equ:idem-graphon-field}.
The joint equations \eqref{eq:idem-s-equation} and \eqref{equ:idem-graphon-field} can be equivalently represented by the two $n$-dimensional equations as \eqref{eq:idem-compact-s-evo} and \eqref{eq:idem-compact-z-evo}. 
\end{proof}
\begin{remark}
Important features for LQG-GMFG problems with idempotent couplings are that the solution equations \eqref{eq:idem-compact-s-evo} and \eqref{eq:idem-compact-z-evo} are $n$-dimensional and independent of the network structure, and the only network couplings among different nodes in the solutions appear in the initial condition $\bar z(0)$. Mean field coupling can be considered a special case of idempotent couplings (with rank one).  
\end{remark}
 
\end{document}